\newcommand{\paperTitle}{A Symbolic Approach to Proving Query Equivalence\\
Under Bag Semantics}
\newcommand{\paperKeywords}{Query Rewriting, Query Optimization, Satisfiability Modulo Theories}
\newcommand{\paperAuthors}{}
\definecolor{linkcolor}{HTML}{647382}
\definecolor{citecolor}{HTML}{647382} %
\definecolor{urlcolor}{rgb}{0.4,0.2,0.2}
\definecolor{sqlcolor}{HTML}{965d67}
\definecolor{smtcolor}{HTML}{5d968c}
\definecolor{webblue}{rgb}{0,0,.7}
\definecolor{webgreen}{rgb}{0,.5,0}
\definecolor{webbrown}{rgb}{.6,0,0}
\definecolor{Red}{rgb}{1,0,0}
\newcommand\BeraMonottfamily{%
  \def\fvm@Scale{0.85}
  \fontfamily{fvm}\selectfont
}
\lstdefinestyle{SQLStyle}{
language=SQL,
basicstyle=\BeraMonottfamily\scriptsize, 
keywordstyle=\color{sqlcolor}\bfseries,
aboveskip = 0.05in,
belowskip = 0.05in,
literate = {-}{-}1, 
}
\lstdefinestyle{ScriStyle}{
language=SQL,
basicstyle=\BeraMonottfamily\scriptsize, 
keywordstyle=\color{smtcolor}\bfseries,
morekeywords={and, or, not},
aboveskip = 0.05in,
belowskip = 0.05in,
literate = {-}{-}1, 
}
\newcommand{\mail}[1]{\href{mailto:#1}{#1}}
\crefname{lstlisting}{listing}{listings}
\Crefname{lstlisting}{Listing}{Listings}
\newtheorem{lemma}{Lemma}
\crefname{lemma}{Lemma}{Lemmas}
\Crefname{lemma}{Lemma}{Lemmas}
\def\Snospace~{\S{}}
\newif\ifdraft\drafttrue
\newif\ifnotes\notestrue
\newcolumntype{R}[1]{>{\raggedleft\let\newline\\\arraybackslash\hspace{0pt}}p{#1}}
\newcommand{\PP}[1]{
\vspace{4px}
\noindent{\bf{#1}}\xspace
}
\newcommand{\squishitemize}{
 \begin{list}{$\bullet$}
  { \setlength{\itemsep}{0pt}
     \setlength{\parsep}{3pt}
     \setlength{\topsep}{3pt}
     \setlength{\partopsep}{0pt}
     \setlength{\leftmargin}{1.95em}
     \setlength{\labelwidth}{1.5em}
     \setlength{\labelsep}{0.5em} } }
\newcounter{Lcount}
\newcommand{\squishlist}{
    \begin{list}{\arabic{Lcount}. }
   { \usecounter{Lcount}
        \setlength{\itemsep}{0pt}
        \setlength{\parsep}{3pt}
        \setlength{\topsep}{3pt}
        \setlength{\partopsep}{0pt}
        \setlength{\leftmargin}{2em}
        \setlength{\labelwidth}{1.5em}
        \setlength{\labelsep}{0.5em} } }
\newcommand{\squishend}{\end{list}}
\newcommand{\eg}{\textit{e.g.}}
\newcommand{\ie}{\textit{i.e.}}
\newcommand{\sys}{\textsc{SPES}\xspace}
\renewcommand{\cos}{\textsc{COSETTE}\xspace}
\newcommand{\udp}{\textsc{UDP}\xspace}
\newcommand{\eqs}{\textsc{EQUITAS}\xspace}
\newcommand{\calcite}{\textsc{Calcite}\xspace}
\newcommand{\emp}{\texttt{\textsc{EMP}}\xspace}
\newcommand{\dept}{\texttt{\textsc{DEPT}}\xspace}
\newcommand{\empid}{\texttt{\textsc{EMP\_ID}}\xspace}
\newcommand{\sal}{\texttt{\textsc{SALARY}}\xspace}
\newcommand{\loc}{\texttt{\textsc{LOCATION}}\xspace}
\newcommand{\deptid}{\texttt{\textsc{DEPT\_ID}}\xspace}
\newcommand{\deptname}{\texttt{\textsc{DEPT\_NAME}}\xspace}
\newcommand{\qa}{\texttt{Q1}\xspace}
\newcommand{\qb}{\texttt{Q2}\xspace}
\newcommand{\va}{\texttt{v1}\xspace}
\newcommand{\vb}{\texttt{v2}\xspace}
\newcommand{\vc}{\texttt{v3}\xspace}
\newcommand{\vd}{\texttt{v4}\xspace}
\newcommand{\na}{\texttt{n1}\xspace}
\newcommand{\nb}{\texttt{n2}\xspace}
\newcommand{\nc}{\texttt{n3}\xspace}
\newcommand{\nd}{\texttt{n4}\xspace}
\newcommand{\ta}{\texttt{t1}\xspace}
\newcommand{\tb}{\texttt{t2}\xspace}
\newcommand{\tc}{\texttt{t3}\xspace}
\newcommand{\td}{\texttt{t4}\xspace}
\newcommand{\nul}{\texttt{\textsc{NULL}}\xspace}
\newcommand{\tru}{\texttt{\textsc{TRUE}}\xspace}
\newcommand{\fal}{\texttt{\textsc{FALSE}}\xspace}
\newcommand{\ojoin}{\texttt{\textsc{OUTER JOIN}}\xspace}
\newcommand{\ijoin}{\texttt{\textsc{INNER JOIN}}\xspace}
\newcommand{\opand}{\texttt{\textsc{AND}}\xspace}
\newcommand{\opor}{\texttt{\textsc{OR}}\xspace}
\newcommand{\qscan}{\texttt{\textsc{SELECT}}\xspace}
\newcommand{\qproj}{\texttt{\textsc{PROJECT}}\xspace}
\newcommand{\qjoin}{\texttt{\textsc{JOIN}}\xspace}
\newcommand{\tabschema}{\texttt{\textsc{T-Schema}}\xspace}
\newcommand{\tabcol}{\texttt{\textsc{COLS}}\xspace}
\newcommand{\tabval}{\texttt{\textsc{Val}}\xspace}
\newcommand{\tabisnull}{\texttt{\textsc{Is-Null}}\xspace}
\newcommand{\cond}{\texttt{\textsc{COND}}\xspace}
\newcommand{\assign}{\texttt{\textsc{ASSIGN}}\xspace}
\newcommand{\cast}{\texttt{\textsc{CAST}}\xspace}
\newcommand{\PPS}[1]{
\noindent{\bf\textsc{#1}}
}
\newcommand{\sql}{\texttt{SQL}\xspace}
\newcommand{\SUnion}[1]{\textsc{Union}(#1)\xspace}
\newcommand{\TABLE}[1]{\textsc{Table}(#1)\xspace}
\newcommand{\SPJ}[3]{\textsc{SPJ}(#1,#2,#3)\xspace}
\newcommand{\SAggregate}[3]{\textsc{Agg}(#1,~#2,~#3)\xspace}
\newcommand{\evaluation}[3]{\langle #2 [#1] \rangle \Downarrow #3\xspace}
\newcommand{\inputs}{\texttt{Ts}\xspace}
\newcommand{\constructPred}{\textsf{ConstPred}\xspace}
\newcommand{\constructAggCall}{\textsf{CtrAgg}\xspace}
\newcommand{\fresh}{\textsf{InitTuple}\xspace}
\newcommand{\freshAgg}{\textsf{InitAgg}\xspace}
\newcommand{\expr}{\textsf{e}\xspace}
\newcommand{\exprE}{\mathsf{E}\xspace}
\newcommand{\UNF}{\textsc{unf}\xspace}
\newcommand{\SPJE}{\textsc{spje}\xspace}
\newcommand{\column}[1]{\mathsf{Column}~#1}
\newcommand{\constant}[1]{\mathsf{Const}~#1}
\newcommand{\bin}[3]{\mathsf{Bin}~#1~#2~#3}
\newcommand{\binE}[3]{\mathsf{BinE}~#1~#2~#3}
\newcommand{\binL}[3]{\mathsf{BinL}~#1~#2~#3}
\newcommand{\Function}[2]{\mathsf{Fun}~#1~(#2)}
\newcommand{\Pair}{\mathsf{Pair}~}
\newcommand{\CASE}{\mathsf{CASE}~}
\newcommand{\WHEN}[2]{(\mathsf{WHEN}~#1~#2)}
\newcommand{\conexpr}{\textsf{ConstExpr}}
\newcommand{\veriCard}{\textsf{VeriCard}}
\newcommand{\veriCardExpr}{\textsf{VeriVec}}
\newcommand{\veriCardTable}{\textsf{VeriTable}}
\newcommand{\veriCardSPJ}{\textsf{VeriSPJ}}
\newcommand{\veriCardUnion}{\textsf{VeriUnion}}
\newcommand{\veriCardAgg}{\textsf{VeriAgg}}
\newcommand{\PN}{\textsf{ConstAssign}}
\newcommand{\Compose}{\textsf{Compose}}
\newcommand{\predC}{\mathsf{P}\xspace}
\newcommand{\op}{\mathsf{OP}\xspace}
\newcommand{\cp}{\mathsf{CP}\xspace}
\newcommand{\logic}{\mathsf{LOGIC}\xspace}
\newcommand{\Not}[1]{\mathbf{Not}~#1}
\newcommand{\isNull}[1]{\mathbf{IsNull}~#1}
\newcommand{\tblcount}[2]{|#2|_{#1}}
\newtheorem{mydef}{Def}
\newtheorem{theorem}{Theorem}
\newcommand{\dcircle}[1]{\ding{\numexpr181 + #1}}
\newcounter{example}[section]
\newenvironment{example}[2][]{
\vspace{0.1in}
\refstepcounter{example}\par
\PP{Example~\theexample #1. #2:} \rmfamily}{\medskip}
\crefname{example}{Example}{Examples}
\newcommand{\compose}{\circ}
\newcommand{\true}{\mathsf{True}}
\title{\paperTitle\vspace*{-0.15in}}
\author{\paperAuthors}
\begin{document}

\title{\paperTitle}

\author{
\def\arraystretch{1}
\begin{tabular}{ ccccc } 
 Qi Zhou  & Joy Arulraj & Shamkant Navathe & William Harris & Jinpeng Wu \\
 \multicolumn{3}{c}{\mail{equitas@cc.gatech.edu}} &
 \multicolumn{1}{c}{\mail{wrharris@galois.com}} &
 \multicolumn{1}{c}{\mail{jinpeng.wjp@alibaba-inc.com}} \\
 \multicolumn{3}{c}{\affaddr{Georgia Institute of Technology}}&
 \multicolumn{1}{c}{\affaddr{Galois.Inc}}&
 \multicolumn{1}{c}{\affaddr{Alibaba Group}}\\
\end{tabular}
}

\maketitle

\begin{abstract}
In database-as-a-service platforms, automated verification of query equivalence
helps eliminate redundant computation in the form of overlapping sub-queries.
Researchers have proposed two pragmatic techniques to tackle this problem.
The first approach consists of reducing the queries to algebraic expressions 
and proving their equivalence using an algebraic theory.
The limitations of this technique are threefold.
It cannot prove the equivalence of queries with significant differences in the
attributes of their relational operators (\eg, predicates in the filter
operator).
It does not support certain widely-used \sql features (\eg, \nul values).
Its verification procedure is computationally intensive.
The second approach transforms this problem to a constraint satisfaction problem
and leverages a general-purpose solver to determine query equivalence.
This technique consists of deriving the symbolic representation of the queries
and proving their equivalence by determining the query containment relationship
between the symbolic expressions.
While the latter approach addresses all the limitations of the former technique, 
it only proves the equivalence of queries under set semantics 
(i.e., output tables must not contain duplicate tuples).
However, in practice, database applications use bag semantics  
(i.e., output tables may contain duplicate tuples)

In this paper, we introduce a novel symbolic approach for proving query
equivalence under bag semantics.
We transform the problem of proving query equivalence under bag semantics to
that of proving the existence of a bijective, identity map between
tuples returned by the queries on all valid inputs.
We classify \sql queries into four categories, and propose a set of
novel category-specific verification algorithms.
We implement this symbolic approach in \sys and demonstrate that \sys proves
the equivalence of a larger set of query pairs (95/232) under bag semantics 
compared to the state-of-the-art tools based on algebraic (30/232) and symbolic
approaches (67/232) under set and bag semantics, respectively.
Furthermore, \sys is 3$\times$ faster than the symbolic tool that proves
equivalence under set semantics.
\end{abstract}

\vspace{1in}

\section{Introduction}
\label{sec:introduction}

Database-as-a-service (DBaaS) platforms enable users to quickly deploy complex
data processing pipelines consisting of SQL
queries~\cite{maxcompute,azure,bigquery}. 
These pipelines often exhibit a significant amount of computational overlap
(\ie, semantically equivalent sub-queries)~\cite{alekh18,zhou19}.
This results in higher resource usage and longer query execution times.
Researchers have developed techniques for minimizing redundant computation by
materializing the overlapping sub-queries as views and rewriting the original
queries to operate on these materialized views~\cite{rachel00,goldstein01}.
All of these techniques rely on an effective and efficient algorithm for
automatically deciding the equivalence of a pair of SQL queries.
Two queries are \textit{equivalent} if they always return the same output table
for any given input tables.


\PP{Prior Work:}
Although proving query equivalence (QE) is an undecidable
problem~\cite{abiteboul95,ba50}, researchers have recently formulated two
pragmatic approaches for automatically proving QE.
The first approach is based on an algebraic
representation of queries~\cite{chu2018axiomatic, chu2017sig}.
\udp, the state-of-the-art prover based on this algebraic
approach, determines QE using three steps.
First, it transforms the queries from an abstract syntax tree (AST)
representation to an algebraic representation.
Next, it applies a set of normalization rules to homogenize the algebraic
representations (ARs) of these queries.
Lastly, it attempts to find an isomorphism between the tuple vocabularies
of two normalized ARs to determine their syntactical equivalence using
substitution.	
This pragmatic algebraic approach works well on real-world \sql queries.
However, it suffers from three limitations.
First, \udp cannot prove the equivalence of queries when the attributes in their
relational operators are syntactically different (\eg, predicates in the
filter operator).
This is because it relies on a set of pre-defined syntax-driven rewrite rules to normalize 
the algebraic representation.
Second, it does not support certain widely used \sql features (\eg, \nul
values).
Third, its verification procedure is computationally intensive due to the 
large number of possible normalized ARs.

\eqs addresses these limitations by reducing the problem of proving QE to a
constraint satisfaction problem~\cite{zhou19}. 
\eqs determines QE using two steps.
First, it transforms the queries from an AST representation to a symbolic
representation (SR) (\ie, a set of first-order logic (FOL) formulae).
A query's SR symbolically represents an arbitrary tuple that it returns.
Next, it leverages a general-purpose solver based on satisfiability modulo
theory (SMT) to determine the containment relationship between two
SRs\footnote{If every tuple returned by a query $Y$ is returned by query $X$ on
all possible input databases, then $X$ contains $Y$. If $X$ contains $Y$ and 
vice versa, then $X$ and $Y$ are equivalent.}.
If the containment relationship holds in both directions, then the queries are
equivalent.
\eqs is empirically more effective and efficient than \udp~\cite{zhou19}.

While this symbolic approach addresses the drawbacks of the algebraic approach,
it has one significant limitation.
It only proves the equivalence of queries under \textit{set} 
semantics (\ie, output tables must not contain duplicate
tuples~\cite{Negri1991}).
In practice, database applications use \textit{bag} semantics 
(\ie, output tables may contain duplicate tuples~\cite{Albert1991})
Proving QE under bag semantics is a strictly harder problem than doing so
under set semantics.
This is because if two queries are equivalent under bag semantics, then
they are also equivalent under set semantics.
However, the converse does not hold.
It is \textit{infeasible} to extend \eqs to prove query equivalence under bag
semantics.
To prove QE under set semantics, it verifies the containment relationship
between queries.
This technique only works if all the tuples in the output tables of the queries
are distinct.
So, it does not work if duplicate tuples are present in the tables.

\PP{Our Approach:}
In this paper, we present a novel technique for proving QE under bag semantics.
We reduce the problem of proving that two queries are equivalent under bag
semantics to the problem of proving the existence of a bijective, identity map between tuples returned
by the queries across all valid inputs.
We introduce a novel query pair symbolic representation (QPSR) to symbolically
represent the bijective map between tuples returned by two \textit{cardinally
equivalent} queries and leverage an SMT solver to efficiently verify that the
bijective map is an identity map
\footnote{Two queries are cardinally equivalent if and only if they return the
same number of tuples for all valid inputs.}.
We classify \sql queries into four categories, and propose a set of novel 
category-specific verification algorithms to verify cardinal equivalence and 
construct the QPSR.

We implemented this symbolic approach in \sys, a tool for automatically
verifying the equivalence of SQL queries under bag semantics.
We evaluate \sys using a collection of 232 pairs of equivalent SQL queries
available in the Apache \calcite framework~\cite{calcite}.
Each query pair is constructed by applying various optimization rules on complex
\sql queries with diverse features (\eg, arithmetic operations, three-valued
logic for supporting \nul, sub-queries, grouping, and aggregate functions).
Our evaluation shows that \sys proves the semantic equivalence of a larger
set of query pairs (95/232) compared to \udp (34/232) and \eqs (67/232).
Furthermore, \sys is 3$\times$ faster than \eqs on this benchmark.
In addition to the \calcite benchmark, we evaluate the efficacy of \sys on a
cloud-scale workload comprising $9,486$ real-world SQL queries from Ant
Financial Services Group~\cite{antfin}.
\sys automatically found that 27\% of the queries in this workload contain
overlapping computation.
Furthermore, 48\% overlapping queries contain either aggregate or join operations, 
which are expensive to evaluate.

\PP{Contributions:}
We make the following contributions:
\squishitemize 
%
%
\item 
We present a novel query pair symbolic representation and a set of verification
algorithms for determining QE under bag semantics
in~\autoref{sec:equivalence}.
\item 
We classify \sql queries into four categories, and present a set of 
normalization rules to enhance the ability of \sys to prove query equivalence
under bag semantics in~\autoref{sec:approach}. 
\item We implement this approach in \sys and evaluate its efficacy.
We demonstrate that \sys proves the equivalence of a larger set of query pairs
in \calcite benchmark compared to the state-of-the-art tools
in~\autoref{sec:evaluation}.
More importantly, unlike \eqs, \sys proves QE under bag semantics.

\squishend

\vspace{5px}

\section{Background \& Motivation}
\label{sec:background}
In this section, we present an overview of the symbolic approach for proving QE 
under set semantics used in \eqs~\cite{zhou19}.
We then explain why this approach cannot be used to prove QE under
bag semantics.

A pair of queries \qa and \qb are equivalent under set semantics if and only if
for all valid input tables, \qa and \qb return the same set of tuples.
%
%
\eqs proves QE by verifying whether the \textit{containment} relationship holds
in both directions (\ie, $\qa$ contains $\qb$ and $\qb$ contains $\qa$). 
$\qa$ is contained by $\qb$ under set semantics if and only if the following
condition is satisfied:
for all valid input tables, if any tuple is returned by $\qa$, then there is
always a corresponding, identical tuple returned by $\qb$.
\eqs first constructs the SRs of \qa and \qb, and then uses an SMT solver to
verify the relational properties between two SRs to prove the containment
relationship.
We present an example to demonstrate how \eqs proves QE.
The queries are based on these two tables:
\squishitemize 
\item \emp table: $\langle$\empid, \sal, \deptid,\loc$\rangle$ 
\item \dept table: $\langle$\deptid, \deptname$\rangle$
\squishend

\begin{example}{Set Semantics}
\label{ex:set}
\begin{lstlisting}[style=SQLStyle] 
Q1: SELECT EMP.DEPT_ID, EMP.LOCATION FROM EMP WHERE DEPT_ID > 10;
Q2: SELECT EMP.DEPT_ID, EMP.LOCATION FROM EMP
      GROUP BY EMP.DEPT_ID, EMP.LOCATION ON DEPT_ID + 5 > 15;
\end{lstlisting}
\end{example}

\qa selects the department id and location columns of all employees whose
department id is greater than 10.
\qb first selects the same columns grouped by the department id and location
columns. 
\qb then filters out tuples whose department id plus five is not greater than
15.
\qa and \qb are equivalent under set semantics because they return the same set
of tuples. 

\eqs first constructs the SRs for \qa and \qb:
\begin{lstlisting}[style=ScriStyle] 
Q1: Cond1: (v3 > 10 and !n3); Cols1: {(v3,n3),(v4,n4)}
Q2: Cond2: (v3 + 5 > 15 and !n3); Cols2: {(v3,n3),(v4,n4)}
\end{lstlisting}
Each SR contains two parts: $\cond$ and $\vec{\tabcol}$.
$\cond$ is an SMT formula that each tuple needs to satisfy (\ie, the condition)
in order to be returned by the given query. 
For example, in $\qa$'s SR, $\cond$ is $v3$ > $10$ and $!n3$, which means
that each returned tuple's department id must be greater than ten and must 
not be $\nul$.
$\vec{\tabcol}$ is a vector of pairs of SMT formula that symbolically 
represents an arbitrary tuple returned by the query.
Each pair of SMT formula represents a column, where the first formula 
represents the value of the column and the second boolean formula indicates 
if the value is $\nul$.
For example, in $\qa$'s SR, $\vec{\tabcol}$ has two pairs of SMT formulae,
where the first pair of SMT formulae symbolically represents the department id 
column, and the second pair of SMT formulae represents the location column.

\eqs leverages the SMT solver to verify \textit{two} relational properties
between the SRs to prove that \qa is contained by \qb.
It first verifies that $\cond_1 \implies \cond_2$.
\eqs checks this property to show that if there is an unique tuple returned by
\qa, then there is always a corresponding unique tuple returned by \qb.
It then verifies if $(\cond_1 \land \cond_2 ) \implies (\vec{\tabcol_1} =
 \vec{\tabcol_2})$.
\eqs checks this property to show that the two unique tuples returned by \qa and
\qb are always equivalent.
If these two properties hold, then \qa is contained by \qb.
\eqs uses the same technique to prove that \qb is contained by \qa.
Taken together, these containment relationships mean than \qa and \qb are
equivalent.

\PP{Bag Semantics:}
\begin{figure}[t]
  \centering
  \includegraphics[width=0.85\linewidth]{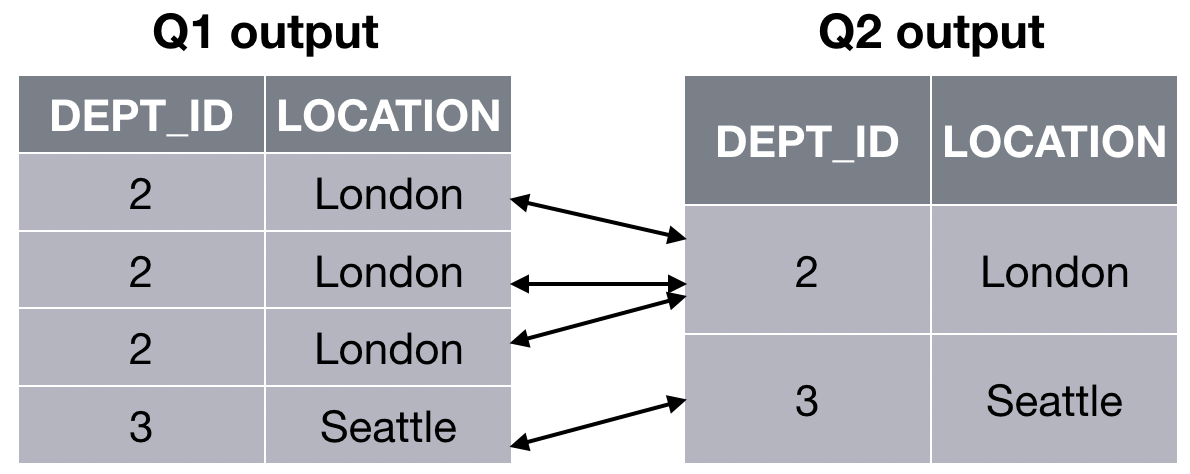}
  \caption{
  \textbf{Limitations of \eqs}  --
  Output tables are equivalent under set semantics (but not under bag
  semantics).}
  \label{fig:setfail}
\end{figure}
\qa and \qb are \textit{not} equivalent under bag semantics. 
Under bag semantics, a table may contain duplicate tuples.
\qa and \qb would return different output tables if there were two employees 
working in the same department and location.
As shown in \cref{fig:setfail}, three employees working in the same department
and location show up in the return table of \qa.
But, they only show up once in the return table of \qb.
Since database applications tend to assume bag semantics, it is critical to
prove QE under bag semantics.

It is not feasible to generalize the containment approach used in \eqs to support
bag semantics.
\eqs derives the SR of \textit{one} arbitrary tuple in the returned table.
It proves the containment relationship by proving there is an identical tuple in
the output table of the other query. 
This is sufficient for proving QE under set semantics because each tuple occurs
only once in the output table.
However, it is not sufficient for proving QE under bag semantics,
since each distinct tuple may appear multiple times in the output table.
It cannot track and compare the frequencies of occurrence of these tuples.
As shown in \cref{fig:setfail}, the first three tuples in \qa's output table
are proved to be equal to one tuple in \qb's output.  
So, \eqs incorrectly concludes that \qa is contained by \qb under bag semantics.

It is not feasible to construct an SR with an unbounded number of symbolic
tuples, where each bag of tuples represented by the SR is different.
As shown in \cref{fig:setfail}, the maximum number of times a tuple appears in
\qa's output table may be arbitrarily large.
It is infeasible to construct an SR with a fixed number of symbolic tuples such
that each bag of tuples represented by the SR is guaranteed to be different.
Therefore, proving that two queries are equivalent under bag semantics requires
a novel approach.

\section{Overview}
\label{sec:overview}

In this section, we first present an overview of how \sys proves QE 
under bag semantics in~\autoref{sec:overview::high-level}.
We then illustrate this approach using an example
in~\autoref{sec:overview::example}.

\subsection{Problem Formulation}
\label{sec:overview::high-level}

To prove QE under bag semantics, we present a novel problem formulation.
Based on the definition of bag semantics, two queries are equivalent under 
bag semantics if and only if for all valid input tables, 
both queries return the same multi-valued set of tuples.
So, for all valid input tables, if there always exists a \textit{bijective,
identity map} between the output tables, then the two queries are equivalent.
As shown in~\cref{fig:fulleq}, a bijective, identity map is a one-to-one map 
that maps a tuple in one output table to an unique, identical tuple in the
other output table.
Even if the output table has duplicate tuples, each of them would show up once
and exactly once in the bijective, identity map.
Thus, we transform the problem of proving QE under bag semantics to proving the
existence of a bijective, identity map for all valid input tables.

\sys tackles the latter problem into two steps.
\begin{figure}[t]
   \centering
   \subfloat[Cardinal Equivalence\label{fig:cardeq}]{%
          \includegraphics[width=0.45\linewidth]{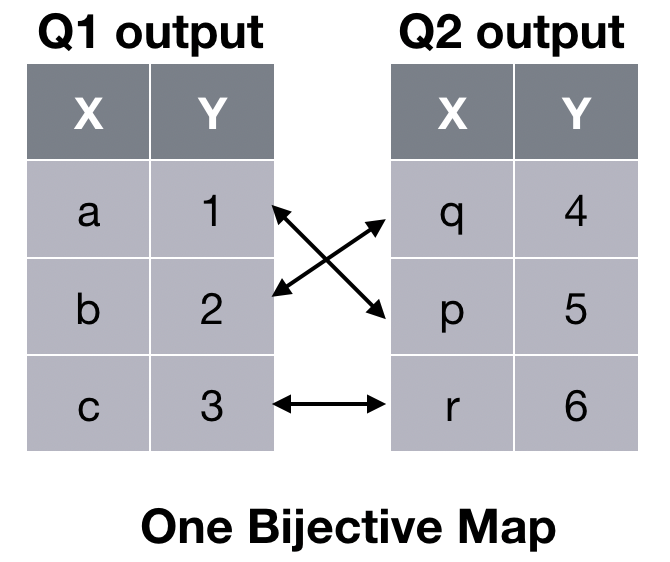}
   }
   \subfloat[Full Equivalence\label{fig:fulleq}]{%
          \includegraphics[width=0.46\linewidth]{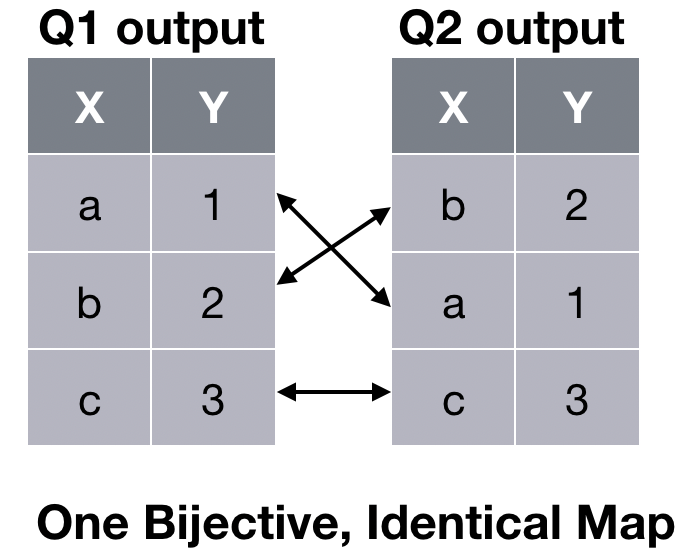}
   }
  \caption{
  \textbf{Types of Query Equivalence --} 
  Bijective maps implicitly constructed by \sys to
  determine: (a) cardinal equivalence but not full equivalence and (b) full equivalence of 
  queries under bag semantics.}  
  \label{fig:eval:sched:sys}
\end{figure}

\PP{\dcircle{1} Cardinal Equivalence:}
In the first step, \sys proves that for all valid inputs, there exists a
bijective map between tuples in the output tables of two given queries.
\sys first verifies if the given pair of queries are 
\textit{cardinally equivalent} under bag semantics to verify the existence of a
bijective map. 
Two queries are cardinally equivalent if and only if for all valid inputs, 
their output tables contain the same number of tuples.
We defer a formal definition of cardinal equivalence
to~\autoref{sec:equivalence::definitions}.
If two queries are cardinally equivalent, then there exists a \textit{bijective
map} between the tuples returned by these two queries for all valid inputs, 
as shown in~\cref{fig:cardeq}.
In this map, each tuple in the first table is mapped to a unique tuple in the
second table, and all tuples in second table are covered by the map.
We note that the contents of the output tables of two cardinally equivalent
queries may not be identical (\ie, they are not fully equivalent). 

\sys then constructs a \emph{Query Pair Symbolic Representation (QPSR)} for two
cardinally equivalent queries to symbolically represent the bijective map between
the returned tuples for all valid inputs.
The QPSR contains a pair of symbolic tuples to represent the bijective map.
The first tuple represents an arbitrary tuple returned by the first query.
The second tuple represents the corresponding tuple returned by the second query
as defined by the bijective map.
Given a set of concrete input tables, each pair of tuples in the bijective map
that are returned by two queries can be obtained by substituting the two symbolic
tuples with a \textit{model} (\ie, a set of concrete values for the symbolic
variables).
We defer a discussion of QPSR to~\autoref{sec:equivalence::check}.

\sys proves the cardinal equivalence of two queries by recursively constructing 
the QPSR of their sub-queries and using the SMT solver to verify specific
properties of construed sub-QPSRs based on the semantics of different types of
queries.
We defer a discussion of how \sys proves cardinal equivalence and constructs QPSR
to~\cref{sec:equivalence::construct,sec:equivalence::construct:::table,sec:equivalence::construct:::spj,sec:equivalence::construct:::agg,sec:equivalence::construct:::union}.

\PP{\dcircle{2} Full Equivalence:}
In the second step, \sys proves that the symbolic bijective map is always an
\textit{identity} map to prove that the given pair of queries are \textit{fully
equivalent} under bag semantics (\ie, their contents are identical as well). 
Two queries are fully equivalent if and only if for all valid input tables, 
their output tables contain the same tuples (ignoring the order of the tuples).
We defer a formal definition of full equivalence
to~\autoref{sec:equivalence::definitions}.
If there exists a \textit{bijective,
identity map} between the tuples returned by these two queries for all valid
inputs, then the two queries are fully equivalent.
\cref{fig:fulleq} shows an example of one bijective, identity map for one
possible set of output tables. 
In this map, each tuple in the first table is mapped to a unique, 
identical tuple in the second table.
All tuples in the second table are covered by this map.

\sys uses the constructed QPSR to verify that the given pair of queries are
fully equivalent.
It proves full equivalence by showing that the bijective map is an identity map.
Because the QPSR uses a pair of symbolic tuples to represent the bijective
tuple, \sys leverages the SMT solver to prove that the two symbolic tuples are
always equivalent for all valid \textit{models}.  
We defer a discussion of how to prove the full equivalence using QPSR to
\autoref{sec:equivalence::check}.

\PP{\sys vs \eqs:}
The key differences between \sys and \eqs lies in how they prove QE.
\squishitemize
\item
\sys converts the QE problem to a problem of proving the existence of a
bijective, identity map between tuples in the output tables of the two queries
for all valid input tables.
\eqs reduces the QE problem to a query containment problem by showing that an
arbitrary tuple returned by one query is also returned by the other query.

\item
\sys constructs a QPSR for a pair of queries after verifying that
the queries are cardinally equivalent.
The QPSR symbolically represents the bijective map between the tuples in their output tables. 
In contrast, \eqs separately constructs an SR for each individual query
that represents the tuples in its output table.

\item
\sys decomposes the problem of proving equivalence of queries
into smaller proofs of equivalence of their sub-queries (\ie sub-QPSRs).
It constructs the bijective map between tuples in the final output tables by
\textit{recursively} constructing the bijective maps between tuples in all of
the intermediate output tables.  
\eqs directly proves the containment relationship for the entire queries.
So, \sys scales well to larger queries.
\squishend

\PP{SMT Solver:}
\sys leverages an SMT solver to prove cardinal and full equivalence of queries~\cite{moura08}.
An SMT solver determines if a given FOL formula is satisfiable.
For example, the solver decides that the following formula can be satisfied: 
$ x+5 > 10 \land x > 3$ when $x$ is six.
Similarly, it determines that the following formula cannot be satisfied: 
$ x+5 > 10 \land x < 4$ since there is no integral value of $x$ for which this
formula holds.
A detailed description of solvers is available in~\cite{DeMoura2011}.

\subsection{Illustrative Example}
\label{sec:overview::example}
We now use an example to show how \sys proves QE under bag semantics.
These queries are based on the \emp and \dept tables.
\begin{example}{Bag Semantics}
\label{ex:aggregate}
\begin{lstlisting}[style=SQLStyle] 
Q1: SELECT SUM(T.SALARY), T.LOCATION FROM
     (SELECT SALARY, LOCATION FROM DEPT, EMP 
      WHERE EMP.DEPT_ID = DEPT.DEPT_ID AND DEPT_ID + 5 = 15) AS T 
    GROUP BY T.LOCATION;
Q2: SELECT SUM(T.SALARY), T.LOCATION FROM
     (SELECT SALARY, LOCATION, DEPT_ID FROM EMP, DEPT 
       WHERE EMP.DEPT_ID = DEPT.DEPT_ID AND DEPT_ID = 10) AS T 
    GROUP BY T.LOCATION, T.DEPT_ID;
\end{lstlisting}
\end{example}

\qa is an aggregation query that calculates the sum of salaries of employees 
in the same location, whose department id plus five is $15$.
\qb is an aggregation query that calculates the sum of salaries of employees 
in the same location and department, whose department id is $10$.
\qa and \qb are semantically equivalent under bag semantics, 
since the inner query retrieves tuples whose department id is 10 (and 10 + 5 =
15), and the two \texttt{GROUP BY} sets cluster tuples in the same manner
(department id is constant). 

\begin{figure}[t!]
\centering
\includegraphics[width=0.8\linewidth]{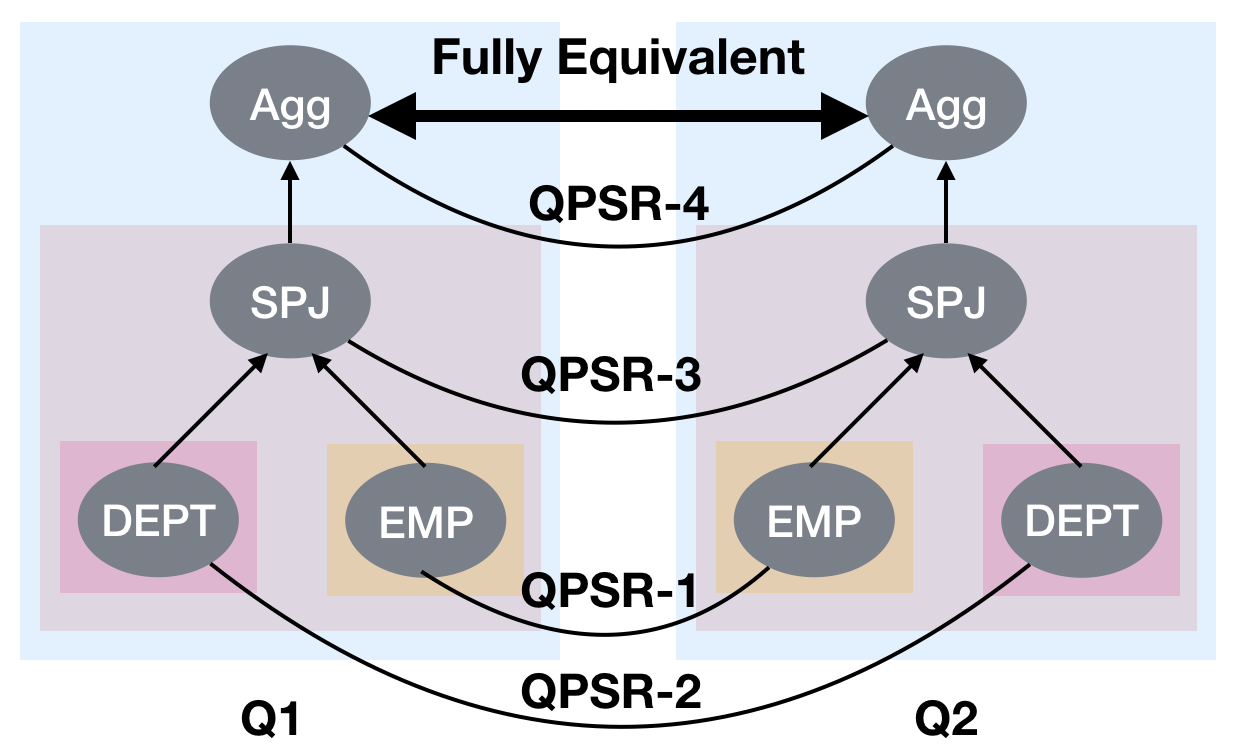}
\caption{
 \textbf{Illustrative Example} -- The two-stage approach that \sys uses to
 prove QE under bag semantics.}
\label{fig:ar}
\end{figure}

\PP{Tree Representation:}
\sys first normalizes each queries into a tree representation.
This representation is similar to a logical execution plan, which captures 
the semantics of the original queries.
The only difference is that each node in this tree represents a 
sub-query that belongs to a specific category (\eg, SPJ query).
We classify \sql queries into four categories based on their constructors.
We defer a discussion of these four categories to~\autoref{sec:approach}.

\autoref{fig:ar} shows the tree representation of queries \qa and \qb.
The tree representation of \qa is an aggregate query that takes a
\qscan-\qproj-\qjoin (SPJ) sub-query as input, 
the department id as the group set, and the sum of salaries as the aggregate
operation. 
The SPJ node takes two table sub-queries (\texttt{EMP} and
\texttt{DEPT}) as inputs, and has a filter predicate on \texttt{DEPT}.
\sys constructs the tree representation of \qb in a similar manner. 

\PP{Proving QE Under Bag Semantics:}
To prove \qa and \qb are equivalent under bag semantics, 
\sys first verifies the cardinal equivalence of two queries.
In order to verify the cardinal equivalence of two aggregate queries, 
\sys recursively constructs the QPSR of two SPJ sub-queries that 
the aggregate queries take as inputs.
To verify the cardinal equivalence of two SPJ sub-queries, 
\sys constructs a bijective map between their input sub-queries and checks if
each pair of input sub-queries are cardinally equivalent.
If that is the case, then it constructs a QPSR for each pair of input
sub-queries. 
\sys pairs the \texttt{EMP} and the \texttt{DEPT} tables in \qa with those in
\qb, respectively.

\PP{QPSR-1:}
The QPSR for the pair of \texttt{EMP} tables is given by:
\begin{lstlisting}[style=ScriStyle] 
COND: True
COLS1: {(v1,n1),(v2,n2),(v3,n3),(v4,n4)} 
COLS2: {(v1,n1),(v2,n2),(v3,n3),(v4,n4)}
\end{lstlisting}
\label{ex:table::qpsr}

Here, $\tabcol_1$ and $\tabcol_2$ symbolically represent two corresponding
tuples present in the two cardinally equivalent tables, respectively.
Each symbolic tuple is a vector of pairs of FOL terms. 
We present the formal definitions of $\tabcol_1$ and $\tabcol_2$
in~\autoref{sec:equivalence::check}.
This pair of symbolic tuples
$\tabcol_1$ and $\tabcol_2$ defines a bijective map between the
tuples returned by the table.
Since both tables refer to \texttt{EMP}, the bijective map is an identity map.

$\{(\va,\na), (\vb,\nb),(\vc,\nc),(\vd,\nd)\}$ symbolically represents a tuple
returned by the \texttt{EMP} table.
Each pair of symbolic variables represents a column.
For instance, $(\va,\na)$ denotes \empid in this symbolic tuple.
$\va$ represents the value of EMP_ID, $\na$ indicates if the value is NULL.
The encoding scheme is similar to the one used by \eqs~\cite{zhou19}.
$\cond$ is an FOL formula that represents the filter conditions.
It must be satisfied for the tuples to be present in the output table of this query.
$\cond$ is \tru because the \textit{table query} (\autoref{sec:approach::ar})
simply returns all tuples present in the table.

\PP{QPSR-2:}
The QPSR for the pair of \texttt{DEPT} tables is given by:
\begin{lstlisting}[style=ScriStyle] 
COND: True 
COLS1: {(v5,n5),(v6,n6)}
COLS2: {(v5,n5),(v6,n6)}
\end{lstlisting}
Here, $\{(v5,n5), (v6,n6)\}$ symbolically represents a tuple returned by the
\texttt{DEPT} table.

\PP{QPSR-3:}
\sys uses QPSR-1 and QPSR-2 to construct two symbolic predicates in two SPJ
queries. 
It then leverages the SMT solver to verify that two predicates always
return the same boolean results for pairs of corresponding tuples in the 
bijective map between two join tables.
In this way, it proves that the two SPJ queries are cardinally equivalent.
It then constructs a QPSR for the SPJ queries:
\begin{lstlisting}[style=ScriStyle] 
COND: (v3 + 5 = 15 and !n3) and (v3 = 10 and !n3)
COLS1: {(v1,n1),(v2,n2),(v3,n3),(v4,n4),(v5,n5),(v6,n6)} 
COLS2: {(v1,n1),(v2,n2),(v3,n3),(v4,n4),(v5,n5),(v6,n6)}
\end{lstlisting}
$\tabcol_1$ and $\tabcol_2$ symbolically represent a bijective map between tuples in the output tables of two SPJ queries. 
This bijective map \textit{preserves} the two bijective maps in the two
sub-QPSRs between their input table queries.
In other words, if a tuple $\ta$ is mapped to another tuple $\tb$ in QPSR-1, 
and a tuple $\tc$ is mapped to another tuple $\td$ in QPSR-2,
then the join tuple of $\ta$ and $\tb$ maps to that of $\tc$ and $\td$ in QPSR-3.
In this manner, the mapping in the lower-level QPSRs is preserved in the
higher-level QPSR.
$\cond$ is obtained by combining the filter predicates.

\PP{QPSR-4:}
\sys uses QPSR-3 and the SMT solver to verify that two aggregate queries always return the same number of groups of tuples based on the group by set
to prove the two aggregate queries are cardinally equivalent.
If and only if they are cardinally equivalent, 
it constructs a QPSR for the entire aggregate queries (\ie, \qa and \qb):
\begin{lstlisting}[style=ScriStyle] 
COND: (v3 + 5 = 15 and !n3) and (v3 = 10 and !n3)
COLS1: {(v1,n1),(v7,n7)} 
COLS2: {(v1,n1),(v7,n7)}
\end{lstlisting}
Here, $\tabcol_1$ and $\tabcol_2$ symbolically represent a bijective map
between tuples returned by \qa and \qb, respectively.
$(v7,n7)$ represents the sum of salaries column.

\PP{Full Equivalence:}
After determining cardinal equivalence of \qa and \qb, 
\sys uses QPSR-4 to verify the full equivalence of these queries 
by showing that the bijective map is an identity map.
It uses an SMT solver to verify the following property of QPSR-4: 
$\cond \implies \tabcol_1 = \tabcol_2$.
It feeds the negation of the property to the solver.
The solver determines that it cannot be satisfied, thereby showing that the
paired symbolic tuples are always equivalent when $\cond$ holds.
Thus, the bijective map between the tuples returned by the queries is an identity
map.
So, \qa and \qb are fully equivalent under bag semantics.

\PP{Summary:}
\sys first constructs QPSR-1 for \emp table and
QPSR-2 for \dept table. 
It then uses these QPSRs to determine the cardinal equivalence of SPJ queries.
Next, it constructs QPSR-3 for the SPJ queries.
\sys then uses QPSR-3 to determine the cardinal equivalence of aggregate queries.
Lastly, it uses QPSR-4 to check the full equivalence of \qa and \qb.
Thus, \sys always checks for cardinal equivalence before constructing the QPSRs.
It only checks the full equivalence property for the top-level QPSR (\ie,
QPSR-4).
This is because two queries may be fully equivalent even if their sub-queries
are not fully equivalent (\eg, due to aggregation)

\section{Query Representation}
\label{sec:approach}
In this section, we first describe the syntax and semantics of four different categories of queries in
~\autoref{sec:approach::ar}. 
\sys uses a a tree representation for processing queries.
We then introduce the \textit{Union Normal Form} (UNF) of a normalized query 
representation and a minimal set of normalization rules
in~\autoref{sec:approach::unf}.
These normalization rules enhance the ability of \sys in proving QE for 
a larger set of queries  under bag semantics.

\subsection{Syntax and Semantics}
\label{sec:approach::ar}
We first present the syntax of the four categories.
The reasons for defining these categories are twofold.
First, it allows \sys to cover most of the frequently-observed SQL queries.
Second, since different types of queries have different semantics, \sys can 
leverage category-specific QE verification algorithms.
%
%
A query $\expr$ is defined as:
\begin{align*}
 \vspace{-0.2in}
 \expr ::=~&\TABLE{n}\;|\; \SPJ{\vec{\expr}}{\textsc{p}}{\vec{o}}\;|\;\SAggregate{\expr}{\vec{g}}{\vec{agg}}\;|\;\SUnion{\vec{\expr}}
 \vspace{-0.3in}
\end{align*}

In \sys, a query can be: %
(1) a table query, %
(2) an SPJ query, %
(3) an aggregate query, or %
(4) a union query.
Except for the table query, all other types of queries take sub-queries as
its inputs.
Thus, \sys represents the whole query as a tree, where each node is a sub-query
that belongs to one of the four categories.
We consider a table to be a bag (\ie, multi-valued set) of tuples as it best
represents real-world databases.
\sys supports the \texttt{DISTINCT} keyword for discarding duplicate tuples in 
a bag.
So, it also supports set semantics.

We next describe the semantics of the four categories  
based on the relationships between the input and output tables. 

\PP{\dcircle{1} Table Query:}
$\TABLE{n}$ represents a table.
It contains only one field: the name of the table ($n$).
This type of query returns all the tuples in table $n$.

\PP{\dcircle{2} SPJ Query:}
This type of query contains three fields: 
(1) a vector of input sub-queries ($\vec{\expr}$), 
(2) a predicate that determines whether a tuple is selected ($\textsc{p}$),
and 
(3) a vector of projection expressions that transform each selected tuple
($\vec{o}$).
A predicate may contain arithmetic operators, logical operators, and functions
that check if a term is \nul.
\sys supports higher-order predicates (\eg, \texttt{EXISTS}) which are encoded
as uninterpreted functions.
%
%
%
A projection expression may contain columns, constant values, \nul, arithmetic
operations, user-defined functions, and the \texttt{CASE} keyword.
The SPJ query first selects the tuples in the cartesian product of the vector of
input sub-queries that satisfy the predicate $p$, and it then emits the transformed tuples
obtained by applying the vector of projection expressions on each tuple.


%

\PP{\dcircle{3} Aggregate Query:}
The aggregate query contains three fields:
(1) an input sub-query ($\expr$), 
(2) a group by set ($\vec{g}$), and 
(3) a vector of aggregate functions ($\vec{agg}$).
The aggregate query first groups the tuples returned by the input sub-query
based on the set of columns in the group by set, such that 
tuples in each group take the same values for the columns in $\vec{g}$.
It then applies the vector of aggregate functions to each group of tuples, and
returns one aggregate tuple per group.
Each aggregate function generates a column in the tuple emitted by this query.


\PP{\dcircle{4} Union Query:}
The union query contains one field: a vector of input sub-queries ($\vec{\expr}$).
It returns all the tuples present in the input tables (without discarding
duplicate tuples).
The union query captures the semantics of the \texttt{UNION ALL}
operator~\cite{Guagliardo2017}. 

%
%

\PP{Complex SQL Constructs:}
\sys also supports certain complex SQL constructs that do not directly map to
these four categories by reducing them to a combination of these categories.
Here are two illustrative examples:

\PP{\dcircle{1}} 
\sys expresses the \texttt{LEFT OUTER JOIN} operator as a
\texttt{UNION} query that takes a vector of two SPJ sub-queries as input.
The first SPJ sub-query represents the \texttt{INNER JOIN} component of the
\texttt{LEFT OUTER JOIN} operator.
The second SPJ sub-query represents the \texttt{OUTER JOIN} component of the
\texttt{LEFT OUTER JOIN} operator and uses \texttt{EXISTS} in the predicate.

\PP{\dcircle{2}}
\sys expresses the \texttt{DISTINCT} operator as an aggregate query
 where the \texttt{GROUP BY} set contains all columns.

\subsection{Normalization Rules}
\label{sec:approach::unf}
The syntax of an UNF query is defined as follows:
\begin{align*}
 \vspace{-0.1in}
 \UNF ::=~&\SUnion{\vec{\SPJE}} \; \SPJE ::=\SPJ{\vec{\exprE}}{\textsc{p}}{\vec{o}}\\
 \exprE ::= ~&\TABLE{n}\;|\;\SAggregate{\UNF}{\vec{g}}{\vec{agg}} 
\end{align*}

The UNF query is a union query that takes a vector of
normalized SPJ sub-queries as input ($\vec{\SPJE}$).
Each normalized SPJ query takes a vector of sub-queries as input ($\vec{\exprE}$).
These sub-queries are either a table query or a normalized
aggregate queries.
Each normalized aggregate queries can only take a UNF query as input.

\PP{Conversion to UNF:}
An query can be normalized to UNF query by repeatedly applying a set of normalization
rules.
The number of rule applications is finite and the rules are not applied in a
specific order.
Due to space constraints, we only present the most significant rule.
If a query is $\SPJ{\expr_0::\vec{\expr_1}}{p_1}{\vec{o_1}}$ and $\expr_0 =
\SPJ{\vec{\expr_2}}{p_2}{\vec{o_2}}$,  
then \sys transforms the query to $\SPJ{{\vec{\expr_2}::\vec{\expr_1}}}{p_1 \land
p_2}{\vec{o_1} \circ \vec{o_2}}$.
Here, $::$ denotes concatenation of two vectors and $\circ$ represents
element-wise composition of two vectors of projection expressions.
There are two additional rules to merge Union and SPJ queries, respectively.

\PP{Normalization Rules:}
\sys uses a minimal set of pre-defined rules to further simplify 
the UNF queries.
These rules allow \sys to prove the equivalence of a larger set of \sql queries.

\PP{\dcircle{1} Empty Table:}
For an SPJ query, if the predicate is unsatisfiable, then this SPJ query becomes an empty table query.
Empty table query is a special query that always returns empty table.
This rule reduces a query to empty table query if all of its input sub-queries
are empty table queries.
\PP{\dcircle{2} Predicate Push-down:}
The filter predicate is always pushed to its input sub-queries, if it can be
pushed.

\PP{\dcircle{3} Aggregate Merge:}
If the input sub-queries of an aggregate query is also an aggregate query, then two aggregate query can be merged if 
the group set of outer aggregate query is a subset of group set of the inner aggregate query and 
the aggregate function is one of the following functions: 
\texttt{MAX}, \texttt{MIN}, \texttt{SUM} and \texttt{COUNT}.

\PP{\dcircle{4} Integrity Constraints:}
\sys supports integrity constraints by encoding them as normalization rules.  
If a table is joined with itself on its primary key, then \sys normalizes the
join operation to a projection. 
If the input query of an aggregate query is a single table, and 
the group by set only contains the primary key, and there is no aggregation
functions, then the aggregate query may be removed.

\section{Equivalence Verification}
\label{sec:equivalence}

In this section, we discuss how \sys verifies the equivalence of two
queries.
We first present the formal definitions of two types of equivalence under bag semantics
in~\autoref{sec:equivalence::definitions}.
We then describe how \sys proves the full equivalence of a pair of
cardinally equivalent queries using their QPSR in~\autoref{sec:equivalence::check}.
Lastly, we discuss how \sys verifies if a pair of queries are cardinally equivalent,
and constructs their QPSR when they are cardinally equivalent
in~\autoref{sec:equivalence::construct},
\autoref{sec:equivalence::construct:::table},
\autoref{sec:equivalence::construct:::spj},
\autoref{sec:equivalence::construct:::agg}, and
\autoref{sec:equivalence::construct:::union}.

\subsection {Equivalence Definitions}
\label{sec:equivalence::definitions}

To define the full equivalence relationship under bag semantics between two queries, 
we first define the cardinal equivalence relationship.
\begin{mydef}
  \label{def:cequivalent}
  \textsc{Cardinal Equivalence:}
  Given a pair of queries  \qa and \qb,
  \qa and \qb are \textit{cardinally equivalent} under bag semantics if and only if (iff), 
  for all valid input tables, the output tables $T_1$ and $T_2$ of \qa and \qb
  contain the same number of tuples.
\end{mydef}

If \qa and \qb are cardinally equivalent under bag semantics, for all valid inputs, each tuple in $T_1$
can be mapped to a unique tuple in $T_2$, and all tuples in $T_2$ are in the map.
Thus, it is a bijective (one-to-one) map between tuples in $T_1$ and $T_2$.
However, the two mapped tuples may differ in their values, as shown
in~\cref{fig:cardeq}.

\begin{mydef}
  \label{def:equivalent}
  \textsc{Full Equivalence:}
  Given a pair of queries \qa and \qb,
  \qa and \qb are \textit{fully equivalent} under bag semantics iff, 
  for all valid input tables $\inputs$, 
  the output tables $T_1$ and $T_2$ of \qa and \qb are identical.
\end{mydef}

Based on the definition, \qa and \qb are fully equivalent under bag semantics, 
iff there exists a bijective map between tuples in $T_1$ and $T_2$, and this bijective map is an identity map.
In other words, each tuple in $T_1$ can always be mapped to a unique, identical
tuple in $T_2$, and all tuples in $T_2$ are in the map, as shown
in~\cref{fig:fulleq}.
Thus, by proving the existence of a bijective, identity map between tuples in $T_1$ and $T_2$ for all valid inputs, we prove 
\qa and \qb are fully equivalent under bag semantics.

\PP{Motivation:}
\sys first \textit{quickly} checks for cardinal equivalence before checking for
full equivalence.
This is because if \qa and \qb are fully equivalent, then they must be
cardinally equivalent.
If \qa and \qb are cardinally equivalent, then there always exists a bijective map between tuples in output tables for all valid inputs.
To prove \qa and \qb are fully equivalent, we only need to prove that the bijective map between tuples in the
output tables is always an identity map for all valid inputs.
In the rest of the paper, \textit{equivalent} queries without any qualifier
refer to fully-equivalent queries.

\sys can prove two queries are fully equivalent even if their sub-queries are
\textit{only} cardinally equivalent.
For example, while \qa and \qb in \cref{ex:aggregate} are fully equivalent,
their sub-queries are not fully equivalent.
This is because the inner query in \qa and \qb returns two and three columns,
respectively.

\subsection {Query Pair Symbolic Representation}
\label{sec:equivalence::check}
We now define the QPSR of two cardinally equivalent queries that \sys uses for
proving QE.
QPSR is used to symbolically represent a bijective map between the tuples that are returned by two cardinally
equivalent queries for all valid inputs.
QPSR of a pair of cardinally equivalent queries \qa and \qb is a tuple
of the form:
\[ \langle \vec{\tabcol_1}, \vec{\tabcol_2}, \cond, \assign \rangle \]
\PPS{$\vec{\tabcol_1}$} is a vector of pairs of FOL terms that represent an
arbitrary tuple returned by \qa.
Each element of this vector represents a column and is of the form: $(\tabval,
\tabisnull)$, where \tabval represents the value of the column and  
\tabisnull denotes the nullability of the column.
\PPS{$\vec{\tabcol_2}$} is another vector of pairs of FOL terms that
represents the corresponding tuple returned by \qb.
Since \qa and \qb must be cardinally equivalent before \sys constructs
their QPSR, a bijective map exists between the returned tuples for all
valid inputs, which is symbolically represented by the two tuples
$\vec{\tabcol_1}$ and $\vec{\tabcol_2}$.
\PPS{\cond} is an FOL formula that represents the constraints 
that must be satisfied for the symbolic tuples $\vec{\tabcol_1}$ 
and $\vec{\tabcol_2}$ to be returned by \qa and \qb, respectively.
They encode the semantics of the predicates in the queries.
%
\PPS{\assign} is another FOL formula that
specifies the relational constraints between symbolic variables
used in $\vec{\tabcol_1}$, $\vec{\tabcol_2}$ and \cond.
This formula is used for supporting complex SQL operators, such as
\texttt{CASE} expression.

\begin{mydef}
  \label{def:symbolic-bijective}
  \textsc{Symbolic Bijective:} Given a pair of cardinally equivalent
  queries \qa and \qb, the pair of symbolic tuples ($\vec{\tabcol_1}$,
  $\vec{\tabcol_2}$) in the QPSR symbolically represents a bijective
  map between tuples returned by \qa and \qb iff, for a valid input
  tables $\inputs$, the pair of tuple ($t_1$, $t_2$) in a
  bijective map, where $t_1$ is returned by evaluating \qa on $\inputs$, and $t_2$
  is returned by evaluating \qb on $\inputs$ , there exists a model $m$ such that
  $m[\vec{\tabcol_1}] = t_1$ and $m[\vec{\tabcol_2}] = t_2$.
\end{mydef}

Based on \cref{def:symbolic-bijective}, for a given set of input
tables, the two tuples $t_1$ and $t_2$, which are in a bijection
between two output tables of two cardinally equivalent queries, are
equivalent to the interpretation of two symbolic tuples
$\vec{\tabcol_1}$ and $\vec{\tabcol_2}$ under the same model.
A model is a set of concrete values for the symbolic variables in the two
symbolic tuples.
$m[\vec{\tabcol_1}]$ denotes the interpretation of tuple
$\vec{\tabcol_1}$ on model $m$, which substitutes the variables in
$\vec{\tabcol_1}$ by the corresponding values in $m$.  

\PP{Verifying Full Equivalence:}
To prove that two cardinally equivalent queries \qa and \qb are fully equivalent,
\sys needs to prove that the bijective map between returned tuples is always an identity map.
In other words, \sys needs to prove that, 
for an arbitrary tuple $t$ returned by \qa, the bijective map associates $t$ to an
identical tuple returned by \qb with the same values.
\sys verifies this property using the QPSR of \qa and \qb.
When both symbolic tuples satisfy the predicate (\ie, $\cond$),
it must verify that $\vec{\tabcol_1}$ is equivalent to $\vec{\tabcol_2}$.
This property is formalized as: 
\[\cond \land \assign \implies \vec{\tabcol_1} = \vec{\tabcol_2}  \]
\sys verifies this property using an SMT solver~\cite{moura08}.
If the property does not hold, then the negation of this property is satisfiable.
\sys feeds the negation of this property into the SMT solver.
If the solver determines that this formula is unsatisfiable, then it
determines that $\vec{\tabcol_1}$ and $\vec{\tabcol_2}$ are always
identical.
In this manner, \sys leverages the QPSR to prove full equivalence.

\begin{lemma}
\label{lemma:fulleqsound}
Given a QPSR of queries \qa and \qb, if
\[ \cond \land \assign \land \neg (\vec{\tabcol_1} =
  \vec{\tabcol_2}) \]
is unsatisfiable, then \qa and \qb are fully equivalent.
\end{lemma}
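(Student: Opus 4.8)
The plan is to chain together the definitions established earlier in the excerpt: \cref{def:symbolic-bijective}, which ties the symbolic tuples $\vec{\tabcol_1},\vec{\tabcol_2}$ to actual tuple pairs in a bijective map, and \cref{def:equivalent}, which says full equivalence holds exactly when a \emph{bijective, identity} map exists between $T_1$ and $T_2$ on every valid input. Since the QPSR is only constructed after \sys has already verified that \qa and \qb are cardinally equivalent, I may assume such a bijective map exists for every valid input $\inputs$; what remains is to upgrade "bijective" to "bijective and identity."

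First I would assume the SMT-checked formula $\cond \land \assign \land \neg(\vec{\tabcol_1} = \vec{\tabcol_2})$ is unsatisfiable, which is equivalent to saying that $\cond \land \assign \implies \vec{\tabcol_1} = \vec{\tabcol_2}$ is valid (holds under every model). Next, fix an arbitrary valid input $\inputs$ and an arbitrary pair $(t_1,t_2)$ in the bijective map guaranteed by cardinal equivalence, where $t_1$ is returned by evaluating \qa on $\inputs$ and $t_2$ by evaluating \qb on $\inputs$. By \cref{def:symbolic-bijective}, there is a model $m$ with $m[\vec{\tabcol_1}] = t_1$ and $m[\vec{\tabcol_2}] = t_2$. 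The remaining step is to argue that this $m$ satisfies $\cond \land \assign$ — this follows because $\cond$ encodes exactly the predicate constraints that must hold for the symbolic tuples to be returned, and $\assign$ encodes the auxiliary relational constraints among the symbolic variables, so any model witnessing that $t_1,t_2$ are actually returned must satisfy both. Applying the validity of the implication to $m$ then gives $m[\vec{\tabcol_1}] = m[\vec{\tabcol_2}]$, i.e. $t_1 = t_2$. Since $(t_1,t_2)$ was an arbitrary pair in the map, the bijective map is the identity map, and since $\inputs$ was arbitrary this holds on all valid inputs; by \cref{def:equivalent}, \qa and \qb are fully equivalent.

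The main obstacle is the step where I claim that the model $m$ witnessing $m[\vec{\tabcol_1}]=t_1$ and $m[\vec{\tabcol_2}]=t_2$ automatically satisfies $\cond \land \assign$. \cref{def:symbolic-bijective} as stated only gives us a model matching the tuple \emph{values}; it does not explicitly assert that this model validates $\cond$ and $\assign$. Making this airtight requires appealing to the intended meaning of the QPSR components — that $\cond$ and $\assign$ precisely characterize when a symbolic tuple corresponds to a genuinely returned tuple — which in a fully rigorous treatment would need the construction correctness lemmas for QPSR (the \veriCard/QPSR-construction results deferred to later sections). So in the proof I would either strengthen the statement of \cref{def:symbolic-bijective} to include "and $m \models \cond \land \assign$," or cite the forthcoming soundness of the QPSR construction; everything else is a routine instantiation argument.
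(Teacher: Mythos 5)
Your proposal is correct and follows essentially the same route as the paper's own proof: use unsatisfiability to conclude that any model of $\cond \land \assign$ forces $\vec{\tabcol_1} = \vec{\tabcol_2}$, instantiate \cref{def:symbolic-bijective} to get a witnessing model for each pair $(t_1,t_2)$ in the bijective map, conclude $t_1 = t_2$, and invoke \cref{def:equivalent}. The obstacle you flag --- that the witnessing model must additionally satisfy $\cond \land \assign$ --- is present but left implicit in the paper's proof as well, which tacitly treats any model whose interpretation is an actually-returned tuple as satisfying $\cond \land \assign$ (a fact that ultimately rests on the soundness of the QPSR construction in \cref{app:sound-card}), so your explicit identification of it is a fair reading rather than a defect of your argument.
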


\begin{proof}
Since $\cond \land \assign \land \neg (\vec{\tabcol_1} = \vec{\tabcol_2})$ is
unsatisfiable, 
there is no model $m_1$ such that $m_1[\vec{\tabcol_1}] \neq
m_2[\vec{\tabcol_2}]$ and $m_1[\vec{\tabcol_1}] = t_1$ where $t_1$ is a tuple in
the output table of \qa for all valid inputs.  
Thus, for a model $m$, if $m[\vec{\tabcol_1}] = t_1$, then $m[\vec{\tabcol_1}]
=  m[\vec{\tabcol_2}]$. 
Based on the definition of QPSR, $\vec{\tabcol_1}$ and $\vec{\tabcol_2}$
symbolically represents a bijective map between tuples returned by \qa and \qb. 
Based on \cref{def:symbolic-bijective}, $t_1 = t_2$ for all valid inputs where 
tuple $t_1$ and $t_2$ in a bijective map between two output tables.
Based on the definition of full equivalence in \cref{def:equivalent}, \qa and
\qb are fully equivalent. 
\end{proof}
\subsection{Construction of QPSR}
\label{sec:equivalence::construct}

\begin{algorithm}[t]
  \footnotesize
  \SetKwInOut{Input}{Input}
  \SetKwInOut{Output}{Output}
  \SetKwProg{myproc}{Procedure}{}{}
  \Input{A pair of queries (\ie, \qa and \qb)}
  \Output{QPSR of \qa and \qb or \nul}
  \myproc{$\veriCard(\qa,\qb)$}{ 
       \Switch{\texttt{TypeOf}($\qa,\qb$)}{%
         \lCase{$Table$}{
           \Return{$\veriCardTable(\qa,\qb)$}
         }
         \lCase{$SPJ$}{
            \Return{$\veriCardSPJ(\qa,\qb)$}
         }
         \lCase{$Union$}{
            \Return{$\veriCardUnion(\qa,\qb)$}
         }
         \lCase{$Agg$}{
            \Return{$\veriCardAgg(\qa,\qb)$}
         }
         \lCase{\texttt{Type Mismatch}}{
           \Return{\nul} 
         }
       }
  }
  \caption{Procedure for verifying the cardinal equivalence of queries.
  It returns the QPSR if and only if the queries are cardinally equivalent.}
  \label{alg:check1}
\end{algorithm}

\autoref{alg:check1} presents a recursive procedure $\veriCard$ for verifying
the cardinal equivalence of two queries.
The $\veriCard$ procedure takes a pair of queries as inputs (\ie, $\qa$ and
$\qb$).
\sys first checks the types of the given queries.
If they are of the same type, then it invokes the appropriate sub-procedure for
that particular type.
We describe these four sub-procedures in \cref{sec:equivalence::construct:::table,sec:equivalence::construct:::spj,sec:equivalence::construct:::agg,sec:equivalence::construct:::union}.
If $\qa$ and $\qb$ are cardinally equivalent, then $\veriCard$ returns their
QPSR.
If these queries are of different types, it returns \nul to indicate that it cannot
determine their cardinal equivalence.
This is because each type of queries has different semantics
(\autoref{sec:approach::ar}).

Some sub-procedures recursively invoke $\veriCard$ to verify the
cardinal equivalence between their sub-queries.
\sys applies the normalization rules defined in~\autoref{sec:approach} to
transform the given two queries so that they are of the same type (and the 
sub-queries are also of the same types recursively).
This normalization process is \textit{incomplete} (\ie, \sys may conclude that
two queries are not cardinally equivalent since they cannot be normalized
to the same type, even if they are actually cardinally equivalent).
We discuss this limitation in~\autoref{sec:evaluation-limitations}.

Each sub-procedure takes a pair of queries of the same type 
as inputs.
It first attempts to determine if they are cardinally equivalent.
If they are cardinally equivalent, then it constructs the QPSR of
\qa and \qb.
Otherwise, it returns \nul to indicate that it cannot determine their cardinal
equivalence.
In each of the following sub-sections, we first describe the conditions that 
are sufficient for proving cardinal equivalence based on the semantics of the query.
We then describe how each sub-procedure verifies these conditions to prove cardinal equivalence.
We then discuss how \sys constructs the QPSR if they are cardinally
equivalent.
Lastly, we describe their soundness and completeness properties
\footnote{
A sub-procedure $P$ is sound if whenever it returns a QPSR, the given two queries are 
cardinally equivalent and the two symbolic tuples symbolically represents a bijective map for all valid inputs.
A sub-procedure $P$ is complete if whenever it returns \nul, the given two queries 
are not cardinally equivalent.
}.

\subsection{Table Query}
\label{sec:equivalence::construct:::table}

\autoref{alg:tableCheck} illustrates the $\veriCardTable$ procedure for table query.

\begin{lemma}
\label{lemma:tableAR}
A pair of table queries $\TABLE{n_1}$ and $\TABLE{n_2}$ are cardinally
equivalent iff their input tables are the same. (\ie, $n_1$ = $n_2$).
\end{lemma}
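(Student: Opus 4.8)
The plan is to prove both directions of the biconditional directly from \cref{def:cequivalent}, the definition of cardinal equivalence, together with the semantics of the table query given in~\autoref{sec:approach::ar}.

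For the ``if'' direction, suppose $n_1 = n_2$. Then $\TABLE{n_1}$ and $\TABLE{n_2}$ are the very same query, and by the semantics of a table query each simply returns the bag of tuples stored in the relation it names. Hence on every valid input $\inputs$ the two output tables are literally identical, and in particular contain the same number of tuples, so the two queries are cardinally equivalent. This direction is immediate and will take only a sentence or two.

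For the ``only if'' direction I would argue by contraposition: assuming $n_1 \neq n_2$, I exhibit a single valid input on which the two output cardinalities differ. Since $n_1$ and $n_2$ name distinct relations of the schema, their contents can be populated independently; I would take an input in which the relation named $n_1$ holds one arbitrary tuple of its schema while the relation named $n_2$ is empty. Then $\TABLE{n_1}$ returns a one-tuple table and $\TABLE{n_2}$ returns the empty table, so the cardinalities $|T_1| = 1 \neq 0 = |T_2|$, and by \cref{def:cequivalent} the queries are not cardinally equivalent.

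The only point that needs care — and it is a modeling assumption rather than a genuine obstacle — is that the two distinct names must refer to relations that can be populated independently, so that emptying one while making the other non-empty still counts as a valid input. Under the notion of valid input tables used throughout the paper (an assignment of a bag of tuples to each named relation, subject only to the declared integrity constraints and with no spurious cross-relation equality), this holds; and if $n_1$ and $n_2$ were forced to always coincide they would effectively be the same table, which is already excluded by $n_1 \neq n_2$. I would state this assumption explicitly at the start of the ``only if'' direction, after which the counterexample goes through and the proof is complete.
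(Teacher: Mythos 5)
Your proof is correct and follows essentially the same route as the paper: the ``if'' direction is exactly the paper's observation that a table query returns all tuples of the named table, and your contrapositive counterexample for the ``only if'' direction (populate $n_1$ with one tuple, leave $n_2$ empty) is the same independent-population argument the paper itself uses when proving completeness for SPJ queries over distinct base tables. The paper states this lemma with only a brief justification and defers the formal soundness argument to \cref{app:table-bijs}, so your explicit treatment of both directions, including the modeling assumption that distinct names can be populated independently, simply fills in detail the paper leaves implicit.
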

\PP{Cardinal Equivalence:}
Since the table query returns all tuples from the input table,
thus if two table queries have the same input table, then they will always 
have the same number of tuples.
So $\veriCardTable$ compares the names of the two input tables.
\PP{QPSR:}
We define the QPSR of the two cardinally equivalent table queries using an
\textit{identity map} between the returned tuples (\eg, QPSR-1 in
\cref{ex:table::qpsr}).
$\veriCardTable$ first constructs the symbolic tuple $\vec{\tabcol_1}$ using  
a vector of pairs of variables based on $n_1$'s table schema $\tabschema(n_1)$,
and then sets the symbolic tuple $\vec{\tabcol_2}$ to be the same as
$\vec{\tabcol_1}$.
These two equivalent tuples $\vec{\tabcol_1}$ and $\vec{\tabcol_2}$
define a bijective map between the returned tuples.
$\veriCardTable$ sets the $\cond$ and $\assign$ fields as \tru since there are
no additional constraints that the tuples in the table must satisfy.

\begin{algorithm}[t]
  \footnotesize
  \SetKwInOut{Input}{Input}
  \SetKwInOut{Output}{Output}
  \SetKwProg{myproc}{Procedure}{}{}
  \Input{A pair of table queries}
  \Output{QPSR of two queries or \nul}
  \myproc{$\veriCardTable(\TABLE{n_1},\TABLE{n_2})$}{
       \uIf{ $n_1 = n_2$ }{
           $\vec{\tabcol_1} \gets \fresh(\tabschema(n_1))$ \\
           $\vec{\tabcol_2} \gets \vec{\tabcol_1}$ \\
           \Return{($\vec{\tabcol_1}$,$\vec{\tabcol_2}$,$\tru$,$\tru$)}
       }
       \lElse {\Return{\nul}}
     }
     \caption{Verification Algorithm for Table queries}
  \label{alg:tableCheck} 
\end{algorithm}

\PP{Properties:}
$\veriCardTable$ is sound and complete.
These two properties directly follow from~\autoref{lemma:tableAR}.
We present a formal proof in the appendix~\autoref{app:table-bijs}. 

\subsection{SPJ Queries}
\label{sec:equivalence::construct:::spj}
\begin{algorithm}[t]
  \scriptsize
  \SetKwInOut{Input}{Input} 
  \SetKwInOut{Output}{Output}
  \SetKwProg{myproc}{Procedure}{}{}
  \Input{A pair of SPJ Queries}
  \Output{QPSR of given SPJ queries or \nul}
  \myproc{$\veriCardSPJ(\SPJ{\vec{e_1}}{p_1}{\vec{o_1}},\SPJ{\vec{e_2}}{p_2}{\vec{o_2}})$}{
       $\{\vec{QPSR}\} \gets \veriCardExpr(\vec{e_1},\vec{e_2})$ \label{ln:spj-joins} \\
       \ForEach{$\vec{QPSR} \in \{\vec{QPSR}\}$ \label{ln:spj-iter} }{
           $(\vec{\tabcol_1},\vec{\tabcol_2},\cond,\assign) \gets \Compose(\vec{QPSR})$ \label{ln:construct} \\
           $(\cond_1,\assign_1) \gets \constructPred(p_1,\vec{\tabcol_1})$ \label{ln:predicate1} \\
           $(\cond_2,\assign_2) \gets \constructPred(p_2,\vec{\tabcol_2})$ \label{ln:predicate2} \\
           \If{$\cond_1  \leftrightarrow \cond_2$}{ \label{ln:prove}
            $(\vec{\tabcol'_1},\assign_3) \gets \conexpr(\vec{\tabcol_1},\vec{o_1})$ \label{ln:spj-def-assn} \\
            $(\vec{\tabcol'_2},\assign_4) \gets \conexpr(\vec{\tabcol_2},\vec{o_2})$ \\
            $\cond \gets \cond_1 \land \cond_2 \land \cond$ \\
            $\assign \gets \assign \land \assign_1 \land \assign_2 \land \assign_3 \land \assign_4$ \\
            \Return{ ($\vec{\tabcol'_1},\vec{\tabcol'_2},\cond,\assign$) } \label{ln:spj-ret-qpsr}
           }
         }
       \Return{\nul} \label{ln:spj-ret-null}
  } \caption{Verification Algorithm for SPJ queries}
  \label{alg:SPJCheck}
\end{algorithm}

\autoref{alg:SPJCheck} illustrates the $\veriCardSPJ$ procedure for SPJ queries.
If this procedure determines that the two input SPJ queries
$\SPJ{\vec{e_1}}{p_1}{\vec{o_1}}$ and $\SPJ{\vec{e_2}}{p_2}{\vec{o_2}}$ are
cardinally equivalent, then it returns their QPSR.
Otherwise, it returns \nul.
$\veriCardSPJ$ leverages two procedures from~\cite{zhou19}: $\conexpr$ and
$\constructPred$ .

$\conexpr$ takes a vector of projection expressions and a symbolic tuple as  
inputs, and returns a new symbolic tuple with additional constraints $\assign$
that models the relation between variables.
This new symbolic tuple represents the modified tuple based on the vector of
projection expressions.
$\constructPred$ takes a predicate and a symbolic tuple as the input and returns
a boolean formula $\cond$ with additional constraints $\assign$.
$\cond$ symbolically represents the result of evaluating the predicate on the
symbolic tuples.
$\constructPred$ supports higher-order predicates, such as \texttt{EXISTS}, by
encoding them as an uninterpreted function.

\PP{Cardinal Equivalence:}
As covered in~\autoref{sec:approach::ar}, an SPJ query first computes the cartesian
product of all input sub-queries as the intermediate table (\qjoin).
It then selects all tuples in the intermediate table that satisfy the predicate
(\qscan), and applies the projection on each selected tuple (\qproj).

\begin{lemma}
\label{lemma:spjAR}
A pair of SPJ queries $\SPJ{\vec{e_1}}{p_1}{\vec{o_1}}$ and
$\SPJ{\vec{e_2}}{p_2}{\vec{o_2}}$ are cardinally equivalent 
if there is a bijective map $m$ between tuples in intermediate join tables, 
such that the predicates $p_1$ and $p_2$ always return the same
result for the corresponding tuples in $m$.

\end{lemma}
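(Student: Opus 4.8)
The plan is to exploit the three-stage structure of an SPJ query --- cartesian product (\qjoin), selection (\qscan), and projection (\qproj) --- together with the observation that, among these three stages, only selection can alter the number of tuples. First I would fix an arbitrary valid input $\inputs$ and let $J_1$ and $J_2$ denote the intermediate join tables obtained by taking the cartesian products of the output tables of $\vec{e_1}$ and of $\vec{e_2}$ on $\inputs$. By hypothesis there is a bijective map $m : J_1 \to J_2$ --- a bijection between tuple \emph{occurrences}, since $J_1$ and $J_2$ are bags --- such that $p_1(t)$ and $p_2(m(t))$ evaluate to the same boolean value for every occurrence $t \in J_1$.

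Next I would show that $m$ restricts to a bijection between the selected sub-bags $S_1 = \{\, t \in J_1 \mid p_1(t) = \true \,\}$ and $S_2 = \{\, t' \in J_2 \mid p_2(t') = \true \,\}$. Injectivity is inherited directly from $m$. For well-definedness of the restriction: if $t \in S_1$ then $p_1(t) = \true$, hence $p_2(m(t)) = \true$, so $m(t) \in S_2$. For surjectivity: any $t' \in S_2$ has a unique $m$-preimage $t \in J_1$, and $p_1(t) = p_2(t') = \true$, so $t \in S_1$. Therefore $|S_1| = |S_2|$ when multiplicities are counted.

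Finally I would note that the projection stage applies the vector $\vec{o_1}$ (respectively $\vec{o_2}$) element-wise, emitting exactly one output tuple per input tuple, so the output tables satisfy $|T_1| = |S_1|$ and $|T_2| = |S_2|$. Chaining the equalities gives $|T_1| = |T_2|$ on the chosen input $\inputs$, and since $\inputs$ was arbitrary the two SPJ queries return the same number of tuples on all valid inputs, which is precisely cardinal equivalence per~\cref{def:cequivalent}.

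The step I expect to be the main obstacle is not any single calculation but the bag-semantics bookkeeping: the bijection $m$ must be treated as acting on occurrences rather than on distinct values, and I must be careful that each occurrence has a well-defined predicate truth value (it does, because $p_i$ is a deterministic function of a tuple's contents) so that the restriction of $m$ to $S_1$ and $S_2$ makes sense. The projection-preserves-cardinality observation is routine, but stating it explicitly is what lets the whole argument reduce a claim about SPJ outputs to a claim about the intermediate join tables alone.
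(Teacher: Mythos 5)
Your proof is correct, and its core argument is the same one the paper relies on: the bijection between the intermediate join tables restricts to a bijection between the selected sub-bags because the predicates agree on corresponding tuples, and the projection stage emits exactly one output tuple per selected tuple, so it cannot change cardinalities. The difference is one of framing rather than substance. You give the direct semantic argument at the level of concrete bags for a fixed input, which is the most elementary way to establish the lemma as stated; the paper itself leaves this reasoning largely implicit in the main text and instead proves a symbolic counterpart in \cref{app:spj-bijs} (\cref{lemma:spj}), where the bijection is carried as a symbolic cardinality-preserving relation and the predicate conditions $\cond_1$, $\cond_2$ are conjoined onto it via \cref{lemma:sym-bij-conjoin}; that symbolic form is what the subsequent soundness proof of $\veriCardSPJ$ actually needs. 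Your explicit care about bag semantics (treating $m$ as a bijection on tuple occurrences) is exactly the right bookkeeping; the only small caveat worth noting is that SQL predicates are three-valued, so the selection keeps tuples on which the predicate evaluates to $\true$ rather than ``not false'' --- but since the hypothesis says $p_1$ and $p_2$ return the \emph{same} result on corresponding tuples, your restriction argument goes through unchanged with ``$=\true$'' read in the three-valued sense.
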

To prove that there is a bijective map between the tuples in the two
intermediate join tables, $\veriCardSPJ$ first uses the $\veriCardExpr$
procedure to find a bijective map between input sub-queries such that each pair of sub-queries
are cardinally equivalent.
$\veriCardExpr$ exhaustively examines all possible maps and recursively uses
$\veriCard$ to verify the cardinal equivalence between two sub-queries.
$\veriCardExpr$ returns all possible candidate maps wherein each pair of sub-queries
are cardinally equivalent ($\{\vec{QPSR}\}$).
Each candidate map is represented by a vector of QPSR ($\vec{QPSR}$),
wherein each QPSR defines a bijective map between tuples returned by a pair of
cardinally equivalent sub-queries.

$\veriCardSPJ$ then uses the $\Compose$ procedure to construct two symbolic
tuples $\vec{\tabcol_1}$ and $\vec{\tabcol_2}$ (\autoref{ln:construct}) that
represent a bijective map between the tuples in the two intermediate join
tables.
These two symbolic tuples are constructed by concatenating symbolic tuples from
the QPSRs of sub-queries based on the order of sub-queries in the input vectors.
$\Compose$ also constructs $\cond$ and $\assign$ by taking the conjunction of 
$\cond$ and $\assign$ from the QPSRs of sub-queries, respectively.

$\veriCardSPJ$ then tries to prove that the two predicates always return the same result
for the two symbolic tuples.
$\veriCardSPJ$ first leverages the $\constructPred$ procedure to encode 
predicates $p_1$ and $p_2$ on $\vec{\tabcol_1}$ and $\vec{\tabcol_2}$,
respectively (\autoref{ln:predicate2}).
$\veriCardSPJ$ uses an SMT solver to prove this property under sub-conditions $\cond$ and all 
relational constraints: $\assign$, $\assign_1$, $\assign_2$
(\autoref{ln:prove}).
If the property holds, then negation of this property is unsatisfiable:
\[\cond \land \assign \land \assign_1 \land \assign_2 \land \neg (\cond_1 = \cond_2) \]
$\veriCardSPJ$ feeds this formula to an SMT solver.
If the solver determines that this formula is unsatisfiable, then we
prove $\cond_1$ and $\cond_2$ are always equivalent when the
relational constraints $\assign_0$, $\assign_1$, and $\assign_2$ and sub-conditions $\cond$ hold.

\begin{figure}[t!]
\centering
\includegraphics[width=0.95\linewidth]{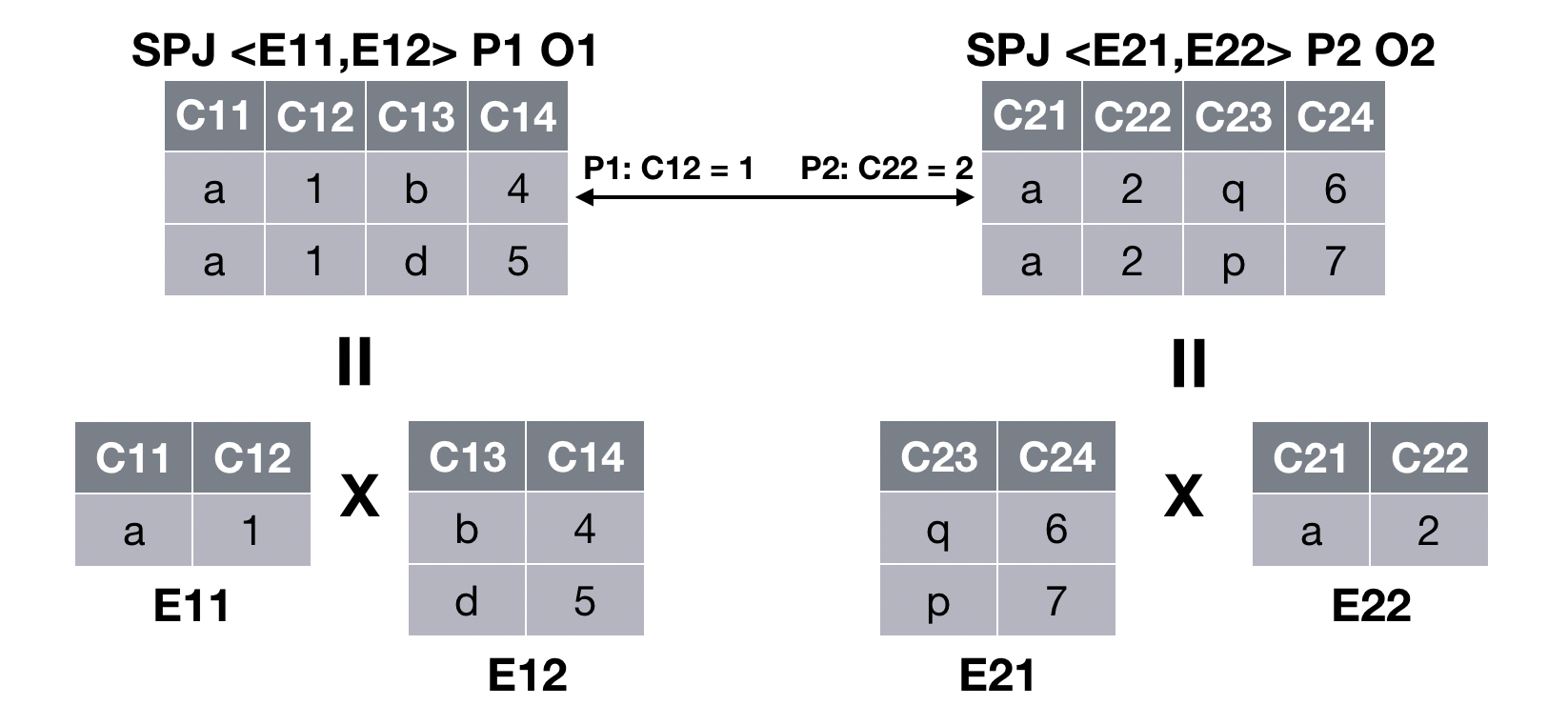}
\caption{
 \textbf{SPJ Queries} -- Cardinally equivalent SPJ Queries.
}
\label{fig:spjeq}
\end{figure}

Consider the cardinally equivalent SPJ queries shown in~\cref{fig:spjeq}.
In this case, $\veriCardSPJ$ first verifies that sub-query $E11$ is cardinally
equivalent to sub-query $E22$, and sub-query $E12$ is cardinally equivalent to sub-query
$E21$.
Thus, the two intermediate join tables (\ie, cartesian product of sub-tables)
are cardinally equivalent.
$\veriCardSPJ$ constructs two symbolic tuples to represent the bijective map
between these intermediate join tables by leveraging the two bijective maps
between the underlying tables. 
$\veriCardSPJ$ then verifies that two corresponding tuples in the map either
both satisfy the predicate or do not satisfy the predicate. 
Thus, the bijective map between the tuples in the intermediate join tables is
the bijective map between the tuples in the output tables before projection.

\PP{QPSR:}
Since $\veriCardSPJ$ verifies that the given pair of SPJ queries are cardinally equivalent, 
the two symbolic tuples $\vec{\tabcol_1}$ and $\vec{\tabcol_2}$ define a
bijective map between tuples in the output tables before projection.
Projection does not change the bijective map between tuples as it is applied
separately on each tuple.
Thus, $\veriCardSPJ$ leverages $\conexpr$  
to construct new symbolic tuples $\vec{\tabcol'_1}$ and $\vec{\tabcol'_2}$ 
based on the vector of projection expressions and the given symbolic tuples.
The QPSR consists of the derived symbolic tuples $\vec{\tabcol'_1}$,
$\vec{\tabcol'_2}$, the conjunction of $\cond_1$, $\cond_2$ and $\cond$, and 
the conjunction of all the relational constraints.

\PP{Properties:}
$\veriCardSPJ$ is sound.
Based on~\autoref{lemma:spjAR}, if $\veriCardSPJ$ returns the QPSR, then the 
given SPJ queries are cardinally equivalent.
We present a formal proof in~\autoref{app:spj-bijs}.

In general, $\veriCardSPJ$ is \textit{not complete}.
The reasons are threefold.
First, the SMT solver is only complete for linear operators.
If the predicates have non-linear operators (\eg, multiplication between
columns), then the solver may return \texttt{UNKNOWN} when it should return
\texttt{UNSAT}~\cite{zhou19}.
Second, \sys encodes all user-defined functions, string operations, and
higher-order predicates as uninterpreted functions.
These encodings do not preserve the semantics of these operations.
Third, $\veriCard$ is not complete (\autoref{sec:equivalence::construct}).

$\veriCardSPJ$ procedure is complete if all input queries for the given two SPJ queries
are table queries, and the SMT solver can determine the satisfiability of the
predicates.
This is because the problem of deciding equivalence of two conjunctive (\ie,
SPJ) queries is decidable~\cite{cohen98}.
\begin{lemma}
\label{lemma:spj-complete}
For a given pair of cardinally equivalent SPJ queries $\SPJ{\vec{e_1}}{p_1}{\vec{o_1}}$ and
$\SPJ{\vec{e_2}}{p_2}{\vec{o_2}}$, if $\vec{e_1}$ and $\vec{e_1}$ only have table sub-queries,
and the SMT solver can determine the satisfiability of the predicates and projection expressions, then $\veriCardSPJ$ procedure returns the QPSR.
\end{lemma}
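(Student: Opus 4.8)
The plan is to walk through the body of $\veriCardSPJ$ on the queries $\SPJ{\vec{e_1}}{p_1}{\vec{o_1}}$ and $\SPJ{\vec{e_2}}{p_2}{\vec{o_2}}$ (write $Q_i$ for the $i$-th of them) and show that, under the hypotheses, it never falls through to its final \nul return. Since $\conexpr$ is a total function -- it returns a symbolic tuple together with a relational constraint for any vector of projection expressions and any input symbolic tuple, by~\cite{zhou19} -- the only two ways the procedure can fail are: (i) $\veriCardExpr(\vec{e_1},\vec{e_2})$ returns an empty set of candidate maps, or (ii) the test $\cond_1 \leftrightarrow \cond_2$ fails for every candidate. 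I would rule out (i), then (ii), and then read off the conclusion.

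For (i): I claim that cardinal equivalence forces the multiset of table names occurring in $\vec{e_1}$ to equal that of $\vec{e_2}$. Both queries select from the Cartesian product of their base tables, so the size of the output table of $Q_i$ is $|\sigma_{p_i}(\cdots)|$; letting the bag cardinalities of the base tables range independently over all valid inputs, equality of these sizes for \emph{every} input is rigid enough to determine the collection of base tables up to permutation. This is a canonical-database / counting argument of the kind that underlies the decidability of conjunctive-query equivalence~\cite{cohen98}; the one potential loophole -- a self-join collapsing to a single table under a key constraint -- has already been removed by the normalization rules of~\autoref{sec:approach::unf} (and over genuine bags a self-join without a key strictly changes cardinality). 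Granting this, $\veriCardExpr$ -- which enumerates every bijection between $\vec{e_1}$ and $\vec{e_2}$ and keeps those whose paired sub-queries are cardinally equivalent -- finds at least one candidate: by~\autoref{lemma:tableAR} each pair of identically-named table queries is cardinally equivalent, and since $\veriCardTable$ is sound and complete, the recursive $\veriCard$ call on such a pair returns its QPSR. Hence the candidate set is non-empty.

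The crux is (ii): among the permutations in that candidate set, at least one makes $p_1$ and $p_2$ agree on every pair of corresponding tuples of the two intermediate join tables. For a fixed permutation, $\Compose$ produces $\vec{\tabcol_1},\vec{\tabcol_2},\cond,\assign$ representing the induced bijection between the two Cartesian products -- it concatenates the per-table symbolic tuples, which here are identical since the corresponding tables are the same -- and then, by the faithfulness of $\constructPred$~\cite{zhou19}, $\cond_1$ and $\cond_2$ are exact symbolic encodings of $p_1$ on $\vec{\tabcol_1}$ and $p_2$ on $\vec{\tabcol_2}$ modulo $\assign\land\assign_1\land\assign_2$. If no permutation in the candidate set made these encodings logically equivalent, a canonical-database argument would again produce a valid input on which the selected sub-multisets -- hence the output tables -- have different cardinalities, contradicting cardinal equivalence. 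So choose a permutation for which $\cond\land\assign\land\assign_1\land\assign_2\land\neg(\cond_1=\cond_2)$ is unsatisfiable; the loop over the candidate set eventually reaches it, and since the SMT solver is assumed to decide satisfiability for the predicates involved, it returns \texttt{UNSAT}, so the test $\cond_1 \leftrightarrow \cond_2$ succeeds.

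Once the test succeeds the procedure runs $\conexpr$ on $\vec{o_1}$ and $\vec{o_2}$ (which cannot fail; the hypothesis that the solver handles the projection expressions also guards against spurious failure elsewhere in the construction), assembles $\cond$ and $\assign$, and returns a QPSR -- exactly what the lemma asserts. The main obstacle is the semantic step shared by (i) and (ii): passing from equal cardinality of the two whole SPJ queries on every bag input to the existence of a permutation of the input vectors that identifies the two base-table collections and under which the filter predicates become logically equivalent. That is where bag semantics (duplicates rule out the degenerate self-join collapses), the prior normalization (it discharges the key-constraint exceptions), and the classical theory of conjunctive-query equivalence~\cite{cohen98} have to be combined carefully; the remaining work -- that $\Compose$, $\constructPred$, $\conexpr$, and $\veriCardTable$ faithfully carry these facts to the symbolic level -- is routine given the cited results and~\autoref{lemma:spjAR}.
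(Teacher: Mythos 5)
Your proposal is correct and follows essentially the same route as the paper: the paper argues by contraposition, splitting a \nul return into exactly your two failure modes---no cardinality-preserving pairing of the table sub-query vectors, or the predicate-equivalence check failing---and refutes each under the hypothesis of cardinal equivalence by exhibiting a concrete input database (mismatched table counts/names, or one tuple per table read off the SMT counterexample model) on which the two SPJ queries return different numbers of tuples. The only difference is presentational: where you appeal generically to canonical-database/counting arguments, the paper spells out these small witness constructions explicitly, and, like you, it relies on the normalization step having eliminated unsatisfiable predicates so the constructed tuples actually pass the filters.
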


\begin{proof}
We prove this theorem using the method of contraposition.
If $\veriCardSPJ$ returns \nul, then there are two cases:

\textbf{Case 1:} 
There is no bijective map between $\overrightarrow{e_1}$ and $\overrightarrow{e_2}$, 
such that each pair of table sub-queries are cardinally equivalent.
There are two possible sub-cases.

In the first sub-case, $\overrightarrow{e_1}$ has more input table queries than
$\overrightarrow{e_2}$. 
Here, we can always construct the input such that the
intermediate join table of $\vec{e_1}$ has more tuples than the
intermediate join table of $\vec{e_2}$.
\sys has eliminated the case where the predicates are $False$ 
(\autoref{sec:approach::unf}).
Thus, the constructed tuples in intermediate join table can all satisfy
the predicate.
So, the number of tuples returned by the two SPJ queries is different, 
and are hence not cardinally equivalent.
In the second sub-case, $\overrightarrow{e_1}$ have different table queries
than $\overrightarrow{e_2}$.
Here, each input table in $\overrightarrow{e_2}$ contains only one tuple. 
And a different table in $\overrightarrow{e_1}$ has two tuples that satisfy
the predicate.
So, the number of tuples returned by the two SPJ queries is different, 
and are hence not cardinally equivalent.

\textbf{Case 2:} 
$\veriCardSPJ$ cannot verify that the two predicates always return the same
result for two corresponding tuples in a bijective map between tuples in the intermediate table.
In this case, since both predicates are decidable, the solver will generate a
model $m$ such that one symbolic tuple satisfies the predicate and the other
one does not.
We then construct inputs such that each intermediate table only contains one
tuple that matches the values in $m$.
Then the first SPJ query returns a table that contains one tuple,
while the other one returns an empty table.
Thus, the two SPJ queries are not cardinally equivalent.
\end{proof}

\subsection{Aggregate Queries}
\label{sec:equivalence::construct:::agg}
\begin{algorithm}[t]
  \footnotesize
  \SetKwInOut{Input}{Input}
  \SetKwInOut{Output}{Output}
  \SetKwProg{myproc}{Procedure}{}{}
  \Input{A pair of aggregate queries}  
  \Output{QPSR of given aggregate queries or \nul}
  \myproc{$\veriCardAgg(\SAggregate{e_1}{\vec{g_1}}{\vec{agg_1}},
  \SAggregate{e_2}{\vec{g_2}}{\vec{agg_2}})$}{ $QPSR \gets \veriCard(e_1,e_2)$ \label{ln:agginputEq}\\
       \If{QPSR != \nul \label{ln:agg-chk} }{
           $(\vec{\tabcol_1},\vec{\tabcol_2},\cond,\assign) \gets QPSR $ \\
           \If{$\vec{g_1} \leftrightarrow \vec{g_2}$ \label{ln:chk-gps} }{
             \label{ln:proveGroup}
             $\vec{\tabcol_1} \gets \freshAgg(\vec{agg_1})$ :: $\vec{g_1}$ \label{ln:aggNewSymbolicTuples1}\\
             $\vec{\tabcol_2} \gets
             \constructAggCall(\vec{agg_1},\vec{\tabcol_1},\vec{agg_2})$ :: $\vec{g_2}$ \label{ln:aggNewSymbolicTuples2}\\
             \Return{($\vec{\tabcol_1},\vec{\tabcol_2},\cond,\assign$)} \label{ln:agg-ret-qpsr}
           }
         }
       \lElse {\Return{\nul}}
  } \caption{Verification Algorithm for Aggregate Queries}
  \label{alg:AggCheck}
\end{algorithm}
\autoref{alg:AggCheck} illustrates the $\veriCardAgg$ procedure for aggregate
queries.
If this procedure determines that the two input aggregate queries
$\SAggregate{e_1}{\vec{g_1}}{\vec{agg_1}}$ and $\SAggregate{e_2}{\vec{g_2}}{\vec{agg_2}}$ are
cardinally equivalent, then it returns their QPSR.
Otherwise, it returns \nul.

\PP{Cardinal Equivalence:}
An aggregate query groups the tuples in the input table based on the
\texttt{GROUP BY} column set, and then returns a tuple by applying the aggregate
function on each group. 
\begin{lemma}
\label{lemma:aggAR}
Two aggregate queries
$\SAggregate{e_1}{\vec{g_1}}{\vec{agg_1}}$ and 
$\SAggregate{e_2}{\vec{g_2}}{\vec{agg_2}}$ 
are cardinally equivalent if two conditions are satisfied:
\textbf{(1)} the two input sub-queries $e_1$ and $e_2$ are cardinally
equivalent;
\textbf{(2)} for any two pairs of corresponding tuples in a bijective map of the QPSR of $e_1$ and $e_2$,
two tuples in $e_1$ belong to the same group as defined by $g_1$ iff their associated tuples in $e_2$ belong to the same group
as defined by $g_2$.
\end{lemma}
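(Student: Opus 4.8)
The plan is to reduce cardinal equivalence of the two aggregate queries to the existence of a bijection between their \emph{groups}, and then build that bijection from the one guaranteed by condition~(1). Fix a valid input $\inputs$, let $T_1$ be the output of $e_1$ and $T_2$ the output of $e_2$, and let $\sim_1$ (resp.\ $\sim_2$) be the grouping equivalence relation on $T_1$ induced by $\vec{g_1}$ (resp.\ on $T_2$ by $\vec{g_2}$), where two tuples are related iff they agree on all group-by columns (with the usual convention that \nul{} values are grouped together). Since an aggregate query emits exactly one tuple per group, the output of $\SAggregate{e_1}{\vec{g_1}}{\vec{agg_1}}$ has $|T_1/{\sim_1}|$ tuples and the output of $\SAggregate{e_2}{\vec{g_2}}{\vec{agg_2}}$ has $|T_2/{\sim_2}|$ tuples, so it suffices to prove $|T_1/{\sim_1}| = |T_2/{\sim_2}|$ for every $\inputs$.

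By condition~(1) and \cref{def:cequivalent}, $e_1$ and $e_2$ are cardinally equivalent, so there is a bijection $\phi : T_1 \to T_2$; by \cref{def:symbolic-bijective}, the symbolic tuples in the QPSR of $e_1$ and $e_2$ represent this $\phi$, i.e.\ each pair $(t,\phi(t))$ is realized by some model of $(\vec{\tabcol_1},\vec{\tabcol_2})$. Next I would use condition~(2) to show that $\phi$ respects the grouping relations: for all $t,t' \in T_1$, $t \sim_1 t'$ iff $\phi(t) \sim_2 \phi(t')$. This is exactly condition~(2), but stated there over the QPSR, so the step is to instantiate the symbolic statement with two independent models — one realizing $(t,\phi(t))$ and one, over a fresh renaming of the QPSR variables, realizing $(t',\phi(t'))$ — to recover the concrete biconditional. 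Given this, $\phi$ descends to $\bar\phi : T_1/{\sim_1} \to T_2/{\sim_2}$, $\bar\phi([t]) = [\phi(t)]$, which is well defined by the forward direction, injective by the backward direction, and surjective because $\phi$ is onto. Hence $\bar\phi$ is a bijection and $|T_1/{\sim_1}| = |T_2/{\sim_2}|$, as required.

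I expect the main obstacle to be the passage between the symbolic formulation of condition~(2) (a property of the QPSR of $e_1$ and $e_2$) and the concrete statement ``$t \sim_1 t' \iff \phi(t) \sim_2 \phi(t')$'' about pairs of output tuples. Care is needed (i) to duplicate the symbolic variables so that two distinct pairs $(t,\phi(t))$ and $(t',\phi(t'))$ can be referred to simultaneously; (ii) to argue, via \cref{def:symbolic-bijective}, that every concrete pair in the bijection is realized by a model and, conversely, that the symbolic check over all models covers all concrete pairs; and (iii) to confirm that the symbolic encoding of ``same group'' — including the treatment of \nul{} in the group-by columns — faithfully matches the grouping relation used by the query semantics. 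A secondary, routine point is that no appeal to \emph{full} equivalence of $e_1$ and $e_2$ is needed: the mapped tuples may carry different values (\cref{fig:cardeq}), since condition~(2) only constrains group membership as seen through the QPSR.
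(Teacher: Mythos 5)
Your proof is correct and takes essentially the same route as the paper: your induced map $\bar\phi$ on the grouping equivalence classes is exactly the relation $b_E$ that the paper constructs in its appendix argument (\cref{app:agg-bijs}), shown to be a bijection precisely by condition~(2), with cardinal equivalence then following because an aggregate query emits one tuple per group. The paper's appendix goes on to compose $b_E$ with the per-group aggregate maps to obtain a bijection between the output tuples themselves (needed for QPSR soundness), but for the lemma as stated your group-counting argument suffices.
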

$\veriCardAgg$ first recursively invokes the $\veriCard$ procedure to determine
the cardinal equivalence of the two input sub-queries $e_1$ and $e_2$
(\autoref{ln:agginputEq}).
If $\veriCard$ returns the QPSR of $e_1$ and $e_2$, then
$\veriCardAgg$ has proved the first condition in~\autoref{lemma:aggAR}.

To prove the second condition, $\veriCardAgg$ collects the symbolic tuples
$\vec{\tabcol_1}$ and $\vec{\tabcol_2}$ from the QPSR.
Since these two symbolic tuples represent a
bijective map between tuples returned by $e_1$ and $e_2$,
$\veriCardAgg$ replaces all variables in $\vec{\tabcol_1}$ and $\vec{\tabcol_2}$
by a set of fresh variables to generate a second pair of symbolic tuples 
$\vec{\tabcol'_1}$ and $\vec{\tabcol'_2}$ that represents the same bijective map with different tuples.

We decompose the proof for the second condition into two stages
(\autoref{ln:proveGroup}).
In the first stage, we want to prove that if $\vec{\tabcol_1}$ and
$\vec{\tabcol'_1}$ belong to the same group, then $\vec{\tabcol_2}$ and 
$\vec{\tabcol'_2}$ also belong to the same group.
To prove this, $\veriCardAgg$ extracts the \texttt{GROUP BY} column sets 
$\vec{g_1}$, $\vec{g'_1}$, $\vec{g_2}$ and $\vec{g'_2}$ from 
$\vec{\tabcol_1}$, $\vec{\tabcol'_1}$, $\vec{\tabcol_2}$ and $\vec{\tabcol'_2}$,
respectively.
It then attempts to prove the property: 
\[(\cond \land \assign \land \vec{g_1} = \vec{g'_1}) \implies  \vec{g_2} = \vec{g'_2}  \]
$\veriCardAgg$ sends the negation of this property to the solver.
If the solver decides that this formula is unsatisfiable, 
then it is impossible to find two tuples returned by $e_1$ that are assigned to
the same group by $\vec{g_1}$, such that their corresponding tuples returned by
$e_2$ are assigned to different groups by $\vec{g_2}$.
In the second stage, we use the same technique in the reverse direction of the
implication.

\begin{figure}[t!]
\centering
\includegraphics[width=0.85\linewidth]{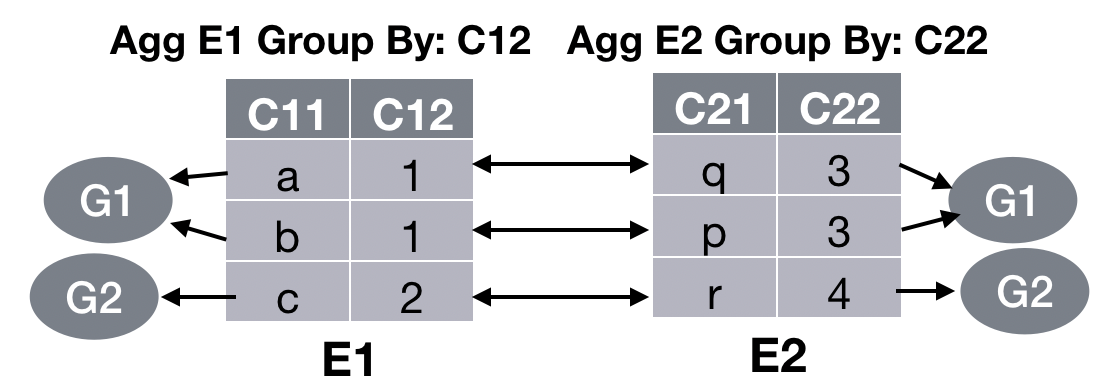}
\caption{
 \textbf{Aggregate Queries} -- Cardinally equivalent aggregate queries.
}
\label{fig:aggeq}
\end{figure}

Consider the cardinally equivalent aggregate queries shown in~\cref{fig:aggeq}.
$\veriCardAgg$ first verifies that the two input queries $E1$ and $E2$ are
cardinally equivalent, and then constructs the QPSR to represent the bijective
map between their returned tuples.
$\veriCardAgg$ then verifies that if two arbitrary tuples in $E1$ belong to same
group (\eg, first two tuples), then the two corresponding tuples in $E2$ also
belong to the same group.
It also verifies that if two arbitrary tuples in $E1$ belong to different
groups (\eg, first and third tuples), then the two corresponding 
tuples in $E2$ also belong to different groups.
$\veriCardAgg$ verifies two aggregate queries are cardinally equivalent by
verifying that they emit the same number of groups.

\PP{QPSR:}
$\veriCardAgg$ constructs the QPSR of two given aggregate queries after
proving they are cardinally equivalent.
$\vec{\tabcol_1}$ and $\vec{\tabcol_2}$ define a bijective map between 
tuples returned by input queries, and can also be used to define a bijective map
between groups in two aggregate queries. 
If two aggregate functions in $\vec{agg_1}$ and $\vec{agg_2}$ are the same and
operate on same values (\ie, input columns of the symbolic tuples are the same), 
then the aggregate values in the output tuples are the same, since each group
contains the same number of tuples.  

$\veriCardAgg$ invokes the $\freshAgg$ procedure on $\vec{agg_1}$ to construct a
vector of pairs of new symbolic variables as the symbolic tuples for aggregate functions.
In each pair of symbolic variables, the first variable represents the aggregate
value.
The second variable indicates if the aggregate value is \nul.
$\veriCardAgg$ concatenates the \texttt{GROUP BY} column set $\vec{g_1}$ with
the symbolic tuple $\vec{\tabcol_1}$.
$\veriCardAgg$ then invokes the $\constructAggCall$ procedure to construct
the symbolic columns for $\vec{agg_2}$,
and then concatenates with the \texttt{GROUP BY} column set $\vec{g_2}$.
$\constructAggCall$ uses the same pairs of symbolic variables for all 
aggregation operations in $\vec{agg_2}$, where the aggregation function type and
operand columns are the same in $\vec{agg_1}$.
$\veriCardAgg$ propagates $\cond$ and $\assign$ into the sub-QPSRs. 

\PP{Properties:}
$\veriCardAgg$ is sound.
Based on~\autoref{lemma:aggAR}, if $\veriCardAgg$ returns the QPSR, 
then the two given aggregate queries are cardinally equivalent.
This is because the two symbolic tuples $\vec{\tabcol_1}$ and
$\vec{\tabcol_2}$ are constructed from corresponding groups.
Thus, $\vec{\tabcol_1}$ and $\vec{\tabcol_2}$ define a bijective map between
tuples returned by the two aggregate queries.
We present a formal proof in~\cref{app:agg-bijs}.

$\veriCardAgg$ is not complete.
The sources of incompleteness are threefold:
(1) incompleteness of $\veriCard$,
(2) limitations of the SMT solver, and
(3) when $\veriCard$ returns the QPSR of two input sub-queries, 
the symbolic tuples in the QPSR define only one possible bijective map 
between tuples in the input tables. If $\veriCardAgg$ fails to prove the second condition
in~\autoref{lemma:aggAR}, it is still possible that there exists 
another bijective map that satisfies the second condition.

\subsection{Union Queries}
\label{sec:equivalence::construct:::union}
\begin{algorithm}[t]
  \footnotesize
  \SetKwInOut{Input}{Input}
  \SetKwInOut{Output}{Output}
  \SetKwProg{myproc}{Procedure}{}{}
  \Input{A pair of union queries} 
  \Output{QPSR of given two union queries or \nul}
  \myproc{$\veriCardUnion(\SUnion{\vec{e_1}},\SUnion{\vec{e_2}})$}{
       $\{\vec{QPSR}\} \gets \veriCardExpr(\vec{e_1},\vec{e_2})$ \label{ln:findPairUnion}\\
       \If{$\{\vec{QPSR}\}$ != $\emptyset$ }{ \label{ln:findUnionProof}
           $\vec{\tabcol_1} \gets \fresh()$; $\vec{\tabcol_2} \gets \fresh()$ \\
           $\vec{QPSR} \gets \{\vec{QPSR}\} $ \\
           $(\cond, \assign) \gets \PN(\vec{QPSR},\vec{\tabcol_1},\vec{\tabcol_2}) $ \\
           \Return{($\vec{\tabcol_1},\vec{\tabcol_2},\cond,\assign$)}
         }
       \lElse {\Return{\nul}}
  } \caption{Verification Algorithm for Union queries}
  \label{alg:UnionCheck}
\end{algorithm}
\autoref{alg:UnionCheck} illustrates the $\veriCardUnion$ procedure for
union queries.
If it determines that the union expressions are cardinally equivalent,
then it returns their QPSR. 
Otherwise, it returns \nul .

\begin{lemma}
\label{lemma:unionAR}
Two union queries $\SUnion{\vec{e_1}}$ and $\SUnion{\vec{e_2}}$
are cardinally equivalent if there exists a bijective map between the two input
sub-queries $\vec{e_1}$ and $\vec{e_2}$, such that each pair of queries are cardinally
equivalent.
\end{lemma}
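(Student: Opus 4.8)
The plan is to reduce the statement to a simple counting identity over the sub-query index sets, using the bag semantics of the union query recalled in \autoref{sec:approach::ar}: the output of $\SUnion{\vec{e_1}}$ is the multiset sum of the outputs of its input sub-queries $e_{1,1},\dots,e_{1,k}$, so for every valid input database $\inputs$ the output table $T_1$ of $\SUnion{\vec{e_1}}$ satisfies $|T_1| = \sum_i |T(e_{1,i})|$, where $T(e_{1,i})$ is the output of $e_{1,i}$ on $\inputs$; symmetrically $|T_2| = \sum_j |T(e_{2,j})|$ for $\SUnion{\vec{e_2}}$.

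First I would fix an arbitrary valid input $\inputs$ and let $\pi$ be the bijection on index positions witnessing the hypothesis, i.e. $e_{1,i}$ and $e_{2,\pi(i)}$ are cardinally equivalent for each $i$. By \cref{def:cequivalent} applied to each matched pair and instantiated at this particular $\inputs$, we get $|T(e_{1,i})| = |T(e_{2,\pi(i)})|$ for all $i$. Summing over $i$ and re-indexing the right-hand side through $\pi$ yields $\sum_i |T(e_{1,i})| = \sum_j |T(e_{2,j})|$, hence $|T_1| = |T_2|$. Since $\inputs$ was arbitrary, $\SUnion{\vec{e_1}}$ and $\SUnion{\vec{e_2}}$ return tables of the same size on all valid inputs, which is exactly cardinal equivalence by \cref{def:cequivalent}. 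This is essentially the picture of \cref{fig:cardeq}, specialized to a matching that decomposes both output bags blockwise.

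The only point that needs a word of care is that $\pi$ is a single permutation of the sub-query positions, chosen independently of $\inputs$ — this is what the candidate matchings returned by $\veriCardExpr$ in \autoref{alg:UnionCheck} provide — so the re-indexing step is valid uniformly over all databases rather than requiring a per-database matching. Granting that, the lemma itself is a one-line computation and I expect no real obstacle; the genuinely substantive work lies in the companion claim needed for the soundness of $\veriCardUnion$, namely that the QPSR assembled by $\PN$ from the returned $\vec{QPSR}$ actually represents a \emph{bijective map} between the tuples of $T_1$ and $T_2$ (not just equal cardinalities). That part is proved by stitching together the per-sub-query bijective maps recorded in $\vec{QPSR}$ into a map on the disjoint union and checking it against \cref{def:symbolic-bijective}, and is where the bulk of the reasoning — and any case analysis on which block a tuple comes from — would go.
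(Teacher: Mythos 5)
Your proof is correct and follows essentially the same route as the paper, which simply observes that the claim ``directly follows from the semantics of the union queries'': under bag semantics the union's output cardinality is the sum of its sub-queries' output cardinalities, and your fixed index bijection $\pi$ lets you re-index that sum, which is exactly the intended argument spelled out. Your closing remark is also on point: the substantive work indeed lies not in this lemma but in the appendix soundness claim that the QPSR assembled by $\PN$ represents a genuine bijective map between tuples, which the paper handles separately.
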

\PP{Cardinal Equivalence:}
The lemma directly follows from the semantics of the union queries.
$\veriCardUnion$ procedure invokes $\veriCardExpr$
(\autoref{sec:equivalence::construct:::spj}) to find a bijective map between
$\vec{e_1}$ and $\vec{e_2}$ (\autoref{ln:findPairUnion}), such that each pair of queries are cardinally
equivalent.

\PP{QPSR:}
$\veriCardExpr$ finds all candidate bijective maps ($\{\vec{QPSR}\}$) 
between two input sub-queries $\vec{e_1}$ and $\vec{e_2}$, such that each pair of
sub-queries are cardinally equivalent.
In each candidate bijective map ($\vec{QPSR}$), a vector of QPSRs is constructed
such that each QPSR defines a bijective map between tuples returned by a pair of
sub-queries. 
$\veriCardUnion$ gets an arbitrary $\vec{QPSR}$ (\ie, one candidate bijective
map between the sub-queries).
It seeks to construct a bijective map between tuples returned by two union queries
that preserves all of the bijective maps between tuples returned by sub-queries in
that $\vec{QPSR}$.
It first constructs two fresh symbolic tuples 
$\vec{\tabcol_1}$ and $\vec{\tabcol_2}$.
It then invokes the $\PN$ procedure to set $\assign$ such that both
$\vec{\tabcol_1}$ and $\vec{\tabcol_2}$ are always equivalent to the symbolic
tuples in one sub-QPSR returned by $\veriCardExpr$, and $\cond$ such that $\cond$ in one sub-QPSR holds when symbolic tuples
equal to the tuples in that sub-QPSR.
$\PN$ creates a vector of boolean variables to set these constraints. 
$\veriCardUnion$ returns these two symbolic tuples, $\cond$, and 
$\assign$ as the QPSR of the given union queries.
\PP{Properties:}
$\veriCardUnion$ is sound.
Based on~\autoref{lemma:unionAR}, if $\veriCardUnion$ returns the QPSR, 
then the two union queries are cardinally equivalent.
The symbolic tuples $\vec{\tabcol_1}$ and $\vec{\tabcol_2}$ define a bijective
map between tuples returned by two union queries that preserves all of the bijective
maps between tuples in their cardinally equivalent sub-queries.
The formal proof is given in \autoref{app:union-bijs}.

$\veriCardUnion$ is incomplete.
The sources of incompleteness are threefold:
(1) incompleteness of $\veriCard$,
(2) limitations of the SMT solver, and
(3) two union queries may be cardinally equivalent even if there is no
bijective map between their sub-queries such that each pair of sub-queries is
cardinally equivalent.

\section{Soundness and Completeness}
\label{sec:proof}

We now discuss the soundness and completeness of \sys for verifying the equivalence of  
two queries.

\PP{Soundness:}
\sys is sound.

\begin{lemma}
\label{lemma:sound1}
Given a pair of queries \qa and \qb, if $\veriCard$ returns a QPSR of
\qa and \qb, then \qa and \qb are cardinally equivalent and the pair of symbolic tuples $(\vec{\tabcol_1}, \vec{\tabcol_2})$ symbolically represents
a bijective map between tuples returned by \qa and \qb.
\end{lemma}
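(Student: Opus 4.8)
The plan is to prove this lemma by well-founded induction on the size of the (normalized) query tree, using the four per-type soundness results as the cases of the induction. The dispatch in \autoref{alg:check1} makes the case split immediate: \veriCard inspects \texttt{TypeOf}(\qa,\qb) and returns \nul whenever the top-level constructors differ, so the only situations in which it returns a QPSR are the four in which \qa and \qb --- after the normalization of \autoref{sec:equivalence::construct} has rewritten them into a common shape --- are both table, SPJ, aggregate, or union queries. In each of those cases \veriCard delegates to the corresponding sub-procedure (\veriCardTable, \veriCardSPJ, \veriCardAgg, or \veriCardUnion), and it suffices to show that when that sub-procedure returns a QPSR, the two queries are cardinally equivalent (\cref{def:cequivalent}) and the symbolic tuples it returns form a symbolic-bijective pair in the sense of \cref{def:symbolic-bijective}.

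First I would dispatch the base case. If \qa $= \TABLE{n_1}$ and \qb $= \TABLE{n_2}$, then \autoref{alg:tableCheck} returns a QPSR only when $n_1 = n_2$; \autoref{lemma:tableAR} then gives cardinal equivalence, and since $\vec{\tabcol_2}$ is defined to be syntactically equal to $\vec{\tabcol_1}$ while the condition and assignment formulae are trivially satisfied, the returned pair is the identity map, which is in particular symbolic-bijective --- the bookkeeping is the content of \autoref{app:table-bijs}. For the inductive step I would treat SPJ, aggregate, and union uniformly. The key observation is that \veriCardSPJ, \veriCardAgg, and \veriCardUnion each invoke \veriCard --- directly in the aggregate case, and through \veriCardExpr in the SPJ and union cases --- only on \emph{proper} sub-queries of \qa and \qb, which are strictly smaller, so the induction hypothesis applies to every such recursive call whose result is not \nul. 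Hence, whenever one of these sub-procedures reaches the point of assembling a QPSR, it already holds QPSRs of pairs of sub-queries that are genuinely cardinally equivalent with genuinely bijective symbolic tuples. I would then invoke the matching lemma (\autoref{lemma:spjAR}, \autoref{lemma:aggAR}, or \autoref{lemma:unionAR}) and verify that (i) the constructions --- \Compose and \conexpr for SPJ, \freshAgg and \constructAggCall for aggregates, \PN for unions --- lift these sub-maps to a bijective map on the output tables, and (ii) the SMT query discharged on the way (\eg, $\cond_1 \leftrightarrow \cond_2$ for SPJ, the two group-preservation implications for aggregates) is answered \texttt{UNSAT}, which, assuming soundness of the SMT solver, certifies the side condition of the lemma. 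The per-case verification that the lifted tuples really do encode a bijection is exactly what \autoref{app:spj-bijs}, \autoref{app:agg-bijs}, and \autoref{app:union-bijs} carry out.

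The part I expect to be delicate, rather than merely routine, is this inductive step, for two reasons. First, the per-type soundness claims are themselves mutually recursive with the statement being proved --- \veriCardSPJ is sound \emph{assuming} the recursive \veriCard calls are --- so the induction must be phrased so that the appendix arguments are read as the inductive cases of this one well-founded induction, not as free-standing lemmas. Second, one must check that normalization to a common type is \emph{semantics-preserving} under bag semantics: the conclusion about the original \qa and \qb follows only because every rewrite rule of \autoref{sec:approach::unf} maps a query to a bag-equivalent query, so cardinal equivalence and the bijection transfer back from the normalized forms. The remaining obligations --- that \Compose concatenates sub-tuples into a tuple describing the Cartesian-product table, that projection leaves the row-level bijection untouched because it acts pointwise, that the fresh aggregate variables together with \constructAggCall yield equal aggregate columns because paired groups have equal multiplicities, and that the boolean selectors introduced by \PN realize a bijection refining the chosen sub-query pairing for unions --- are straightforward from the definitions. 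Note that nothing here uses completeness, so the partiality of \veriCard (its returning \nul) never endangers soundness, and \autoref{lemma:fulleqsound} is not needed: the present lemma establishes only the cardinal-equivalence half, on top of which full equivalence is layered separately.
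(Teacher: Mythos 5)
Your proposal matches the paper's own argument: the paper proves this lemma by induction on the structure of \qa, with each case discharged by the per-type soundness lemmas of the appendix (\cref{lemma:table}, \cref{lemma:spj}, \cref{lemma:agg}, \cref{lemma:union}) applied to the inductive hypothesis on sub-queries, exactly as you lay out, and your observation that the per-type claims must be read as inductive cases of one well-founded induction is precisely how \cref{app:arby-bijs} resolves the apparent mutual recursion. The only difference is cosmetic: your concern about normalization being bag-semantics-preserving is handled in the paper at the level of the top-level soundness theorem (\cref{theorem:sound}), not in this lemma, which speaks only about the queries as handed to $\veriCard$.
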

\begin{proof}
  Proof is by induction on \qa (the choice of \qa versus \qb is arbitrary).
  Each case on the form \qa is proved by the soundness of four
  sub-procedure and the inductive hypothesis in cases when \qa is
  composite of sub-queries.
\end{proof}

\begin{lemma}
\label{lemma:sound2}
Given two queries \qa and \qb, if \sys constructs the QPSR of two normalized queries, and checks the formula holds for the QPSR:
$\cond \land \assign \implies \vec{\tabcol_1} = \vec{\tabcol_2} $
then \qa and \qb are fully equivalent.
\end{lemma}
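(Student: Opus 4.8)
The plan is to derive \cref{lemma:sound2} by composing two results already established in the excerpt — the soundness of QPSR construction (\cref{lemma:sound1}) and the soundness of the full-equivalence check (\cref{lemma:fulleqsound}) — together with one auxiliary fact about normalization. Since the hypothesis talks about the QPSR of the \emph{normalized} queries, the argument has three links: (i) each of \qa and \qb is fully equivalent to its normalized form $\qa'$ and $\qb'$; (ii) $\qa'$ and $\qb'$ are fully equivalent; and (iii) full equivalence is transitive, so \qa and \qb are fully equivalent.

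For link (ii) I would argue as follows. By \cref{lemma:sound1}, because \sys (via $\veriCard$) returns a QPSR $\langle \vec{\tabcol_1}, \vec{\tabcol_2}, \cond, \assign \rangle$ for $\qa'$ and $\qb'$, these queries are cardinally equivalent and the pair $(\vec{\tabcol_1}, \vec{\tabcol_2})$ symbolically represents a bijective map between the tuples they return on every valid input, in the sense of \cref{def:symbolic-bijective}. The hypothesis states that \sys has verified the formula $\cond \land \assign \implies \vec{\tabcol_1} = \vec{\tabcol_2}$; validity of this implication is equivalent to unsatisfiability of $\cond \land \assign \land \neg(\vec{\tabcol_1} = \vec{\tabcol_2})$, which is exactly the premise of \cref{lemma:fulleqsound}. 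Invoking that lemma yields full equivalence of $\qa'$ and $\qb'$, so link (ii) is essentially a repackaging of the two prior lemmas and requires no new work.

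The substance of the proof is link (i): I would state and prove an auxiliary lemma that every normalization step used by \sys preserves full equivalence under bag semantics. This covers the UNF-conversion rules (merging a nested SPJ into its parent SPJ, and the analogous rules for Union and aggregate nodes) and the simplification rules of \autoref{sec:approach::unf}: Empty Table, Predicate Push-down, Aggregate Merge, and the Integrity-Constraint rules. The proof is rule by rule, showing in each case that the output bag is \emph{identical} on every valid input: for the SPJ/Union merges, from associativity of the cartesian product, selection pushing through products, and associativity of element-wise composition of projection expressions; for Predicate Push-down, from the fact that the surviving multiset of tuples is unchanged; for Aggregate Merge, from a computation showing that when the outer group-by set is a subset of the inner one and the aggregate is \texttt{MAX}/\texttt{MIN}/\texttt{SUM}/\texttt{COUNT}, two-stage aggregation coincides with single-stage aggregation on every group; and for the Integrity-Constraint rules, from the observation that ``valid input tables'' are by definition precisely those satisfying the declared keys, so, e.g., a self-join on a primary key yields the same multiset as the corresponding projection. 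Since \autoref{sec:approach} guarantees the rewriting terminates after finitely many steps and each step preserves the output bag, $\qa$ and $\qa'$ (resp.\ $\qb$ and $\qb'$) agree on every valid input, which is exactly full equivalence (\cref{def:equivalent}).

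Combining (i), (ii), and transitivity of ``identical output table on all valid inputs'' completes the proof. The main obstacle is the case analysis in link (i): the Aggregate Merge rule and the Integrity-Constraint rules are the ones where bag semantics and the exact definition of ``valid input'' are load-bearing, so those cases need to be stated carefully to avoid masking a soundness gap; the remaining steps are bookkeeping over \cref{lemma:sound1} and \cref{lemma:fulleqsound}.
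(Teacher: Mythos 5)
Your proposal is correct and follows essentially the same route as the paper: the paper's proof of this lemma is precisely your link (ii) — \cref{lemma:sound1} gives that the constructed QPSR symbolically represents a bijective map between the (cardinally equivalent) normalized queries, and \cref{lemma:fulleqsound}, whose premise is exactly the unsatisfiability of $\cond \land \assign \land \neg(\vec{\tabcol_1} = \vec{\tabcol_2})$ (equivalently, the validity you check), then yields full equivalence. Your link (i), that each normalization rule preserves full equivalence under bag semantics, is a point the paper only asserts by reference to \autoref{sec:approach} (in the appendix soundness theorem) rather than proving rule by rule, so your more detailed treatment is added diligence on the same path rather than a different approach.
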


\begin{proof}
Based on \autoref{lemma:sound1}, if \sys constructs the QPSR, then $\vec{\tabcol_1}$ and 
$\vec{\tabcol_2}$ symbolically represents
a bijective map between tuples returned by cardinally equivalent \qa and \qb.
Thus, \qa and \qb are fully equivalent based on \autoref{lemma:fulleqsound}.
\end{proof}
\PP{Completeness:}
In general, \sys is not complete.
Sources of incompleteness are discussed in
\cref{sec:equivalence::construct,sec:equivalence::construct:::table,sec:equivalence::construct:::spj,sec:equivalence::construct:::agg,sec:equivalence::construct:::union},
and discuss their practical impact in
\autoref{sec:evaluation-limitations}.
However, \sys is complete for a pair of SPJ queries \qa and \qb that do not
have predicates or projection expressions whose satisfiability cannot be
determined by the SMT solver and the input sub-queries are only table queries.

\begin{proof}
Based on \autoref{lemma:spj-complete}, if \qa and \qb are fully equivalent, then
$\veriCard$ returns an QPSR. 
Based on \autoref{lemma:sound1}, $\vec{\tabcol_1}$ and $\vec{\tabcol_2}$
symbolically represent the bijective map between tuples returned by \qa and \qb
for all valid inputs.   
Since \qa and \qb do not have predicates and projection expressions whose 
satisfiability cannot be determined by the SMT solver,
If the SMT solver determines
$\cond \land \assign \land \neg(\vec{\tabcol_1} = \vec{\tabcol_2})$ is
satisfiable, then it generates a model $m$.
The model $m$ defines input tables from which \qa and \qb each return
non-identical tuples equal to the interpretation of their respective
symbolic tuples under $m$.
Thus, \qa and \qb are not fully equivalent.
By contradiction, \sys is complete.
\end{proof}

\section{Evaluation}
\label{sec:evaluation}

In this section, we describe our implementation and evaluation of \sys.
We begin with a description of our implementation
in~\autoref{sec:evaluation-implementation}.
We next report the results of a comparative analysis of \sys against
\eqs~\cite{zhou19} and \udp~\cite{chu2018axiomatic}, 
state-of-the-art automated QE verifiers based on SR and AR, respectively.
We then quantify the efficacy of \sys in identifying overlapping queries across
production SQL queries in~\autoref{sec:evaluation-summary}.
We conclude with the limitations of the current implementation of \sys 
in~\autoref{sec:evaluation-limitations}.

\begin{table*}[t!] 
\centering
\footnotesize

\renewcommand{\arraystretch}{1.2}
\begin{tabular}
{@{} l  c c  c  c | c c | c c | c c @{} }
\toprule
\shortstack{\textbf{QE} \\\textbf{Verifier}} & 
\shortstack{\textbf{Supported} \\\textbf{Semantics}} & 
\shortstack{\textbf{Supported} \\\textbf{Pairs} } &
\shortstack{\textbf{Proved} \\\textbf{Pairs} } &
\shortstack{\textbf{Average}\\\textbf{Time (s)}} &
\shortstack{\textbf{USPJ}\\\textbf{Pairs}} &
\shortstack{\textbf{Average}\\\textbf{Time (s)}} &
\shortstack{\textbf{Aggregate}\\\textbf{Pairs}} &
\shortstack{\textbf{Average}\\\textbf{Time (s)}} &
\shortstack{\textbf{Outer-Join}\\\textbf{Pairs}} &
\shortstack{\textbf{Average}\\\textbf{Time (s)}}\\
\hline
\eqs & Set & 91 & 67 & 0.15 & 28 &0.10 &32 &0.19 & 9 & 0.19 \\
\udp & Bag & 39 & 34 & N/A & 21 & N/A &11 & N/A & -- & -- \\ 
\hline
\sys (w/o norm.) & Bag & 120 & 56 & 0.02 & 31 & 0.01 & 24 & 0.04 & 3 &
0.1
\\
\sys & Bag & 120 & 95 & 0.05 & 42 & 0.05 & 44 & 0.06 & 20 & 0.08 \\
\bottomrule

\end{tabular}
\caption{
 \textbf{Comparative analysis between \sys, \eqs, and \udp} - 
 The results include the number of query pairs in the \calcite benchmark that
 these tools support, the number of pairs whose equivalence they can prove, 
 and the average time they take to determine QE.
}
\label{tab:calcite}
\end{table*}

\subsection{Implementation}
\label{sec:evaluation-implementation}

\begin{figure}
\centering
\includegraphics[width=0.9\linewidth]{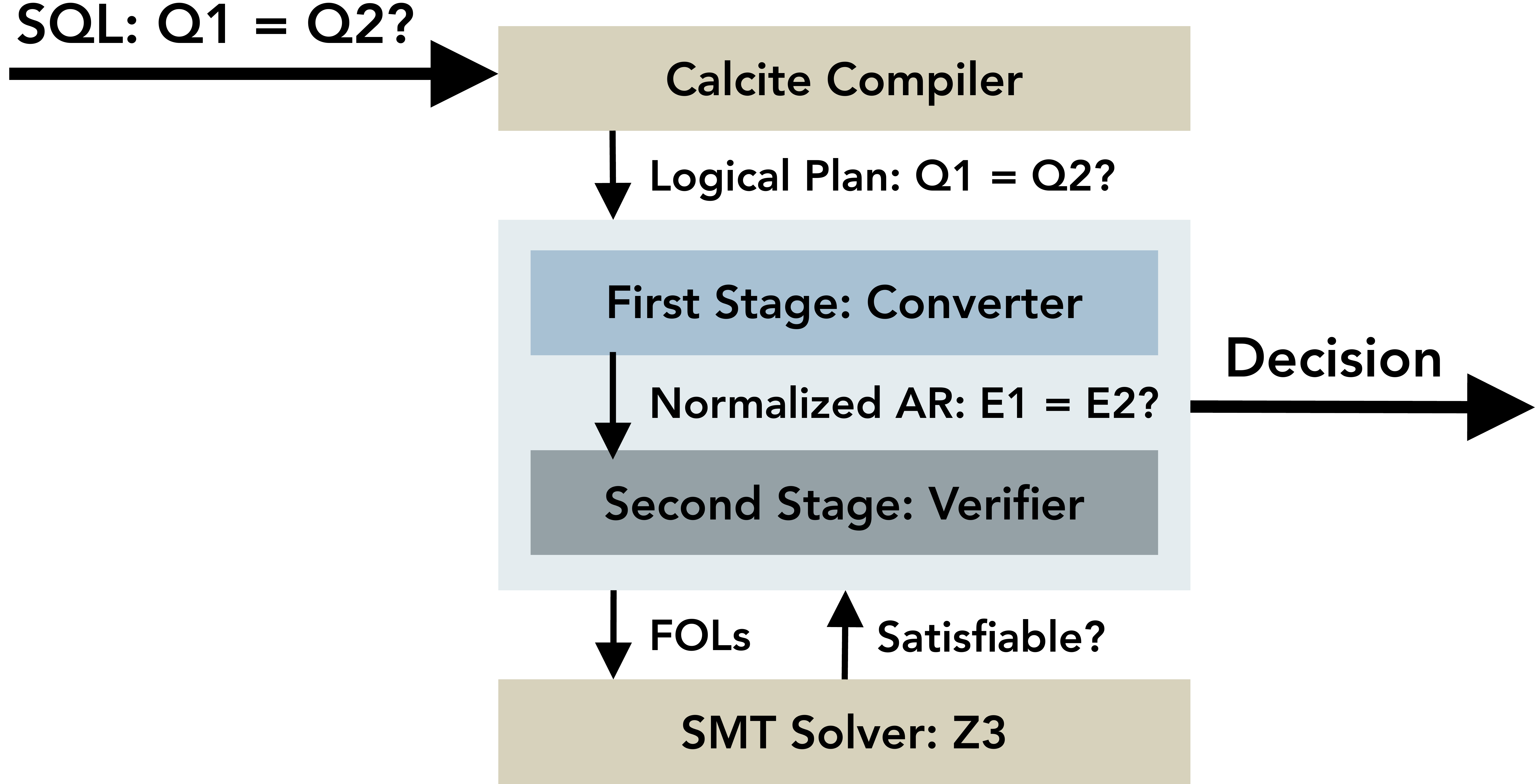}
\caption{ 
\textbf{Query Equivalence Verification Pipeline} -
The pipeline for determining the equivalence of SQL queries. 
}
\label{fig:structure}
\end{figure}

The architecture of \sys is illustrated in~\cref{fig:structure}.
\sys takes a pair of SQL queries as inputs and returns a boolean
decision that indicates whether they are fully equivalent.
The QE verification pipeline consists of three components:
\dcircle{1} The compiler converts the given queries to logical query 
execution plans.
We use the open-sourced \calcite framework~\cite{calcite}.
\dcircle{2} \sys operates on these logical plans in two stages.
First, it converts them to the categories of queries that it supports and
normalizes them.
Next, it uses the third component (\ie, an SMT solver) to verify the cardinal
equivalence of two queries and then constructs their QPSR.
It also uses the solver for verifying the properties of QPSR to determine full
equivalence.
We implement this component in Java (2,065 lines of code).
\dcircle{3} The third component is an SMT solver Z3 that \sys leverages for 
determining the satisfiability of FOL formulae~\cite{z3}.
We have published the source code of \sys on GitHub~\cite{spes}.

\subsection{Comparative Analysis}
\label{sec:evaluation-comparison}

\PP{Comparison Tools:}
We compare \sys against two QE verifiers: 
\eqs~\cite{zhou19} and \udp~\cite{chu2018axiomatic}.
\eqs takes a symbolic approach and only supports set semantics.
\udp takes an algebraic approach and does support bag semantics.

\PP{Benchmark:}
We use queries in the test suite of Apache \calcite~\cite{calcite} 
as our benchmark.
This test suite contains 232 semantically equivalent query pairs.
The reasons for using this benchmark are twofold. 
First, the \calcite optimizer is widely used in data processing
engines~\cite{drill,flink,hive,kylin,phoenix}.
So, it covers a wide range of SQL features\footnote{The test cases were obtained from the open-sourced \cos repository~\cite{cosette}.}.
Second, since \eqs and \udp are both evaluated on this query pair 
benchmark~\cite{chu2018axiomatic,zhou19}, 
we can quantitatively and qualitatively compare the efficacy of these tools.
We send every query pair with the schemata of their input tables 
to \sys and ask it to check their QE.
We conduct this experiment on a commodity server 
(Intel Core i7-860 processor and 16~GB RAM).

\PP{Automated SQL QE Verifiers:}
The results of this experiment are shown in~\cref{tab:calcite}.
We compare \sys against \eqs in the same environment.
We present the results reported in the \udp 
paper~\cite{chu2018axiomatic}\footnote{We were unable to conduct a comparative
performance analysis under the same environment since \udp is currently not
open-sourced.}.

We evaluate \sys under two configurations to delineate the impact of the
normalization rules.
\ding{182} \sys (w/o normalization) naively converts the original queries to a
tree representation without normalizing them, and then applies the verification
algorithms.
\ding{183} \sys additionally leverages a minimal set of normalization rules.

\sys proves the equivalence of a larger set of query pairs (95/232)
compared to \udp (34/232) and \eqs (67/232).
\sys currently supports 120 out of 232 pairs.
The un-supported queries either: 
(1) contain \sql features that are not yet supported (\eg, \cast), or 
(2) cannot be compiled by \calcite due to syntax errors.
Among the 120 pairs supported by \sys, 
it proves that 95 pairs (80\%) are equivalent under bag semantics.
In contrast, \udp proves the equivalence of 34 pairs under bag
semantics.
\eqs proves the equivalence of 67 pairs, but only under set semantics.
We group the proved query pairs into three categories:
\squishitemize
\item \textbf{USPJ:} Queries that are union of \qscan-\qproj-\qjoin.
\item \textbf{Aggregate:} Queries containing at least one aggregate.
\item \textbf{Outer-Join:} Queries containing at least one outer \qjoin.
\squishend

\cref{tab:calcite} reports the number of pairs proved by \udp and \eqs in each
category.
The number of proved pairs containing outer \qjoin is not known in case of \udp.
\sys outperforms the other tools on queries containing aggregate and outer
\qjoin operators.
\sys proves QE of a larger set of query pairs (95/232) compared to \sys (w/o
normalization) (56/232).
This illustrates the importance of the normalization stage of \sys for
real-world applications.

\PP{Efficiency:}
We next compare the average time taken by \sys and \eqs to prove the
equivalence of a pair of queries in each category.
This is an important metric for a cloud-scale tool that must be deployed in
a DBaaS platform.
We only compute this metric for the pairs that these tools can prove.
\sys and \eqs take 0.05~s and 0.15~s on average to prove QE, respectively.
So, \sys is 3$\times$ faster than \eqs on this benchmark.
\sys consistently outperforms \eqs across all categories of queries.




%
\begin{table*}[t!]
\centering
\footnotesize

\renewcommand{\arraystretch}{1.2}
\begin{tabular}
{@{} l  | c  c  c | c  c c c c@{} }
\toprule
\shortstack{\textbf{Query}\\\textbf{Set}} & 
\shortstack{\textbf{Number of}\\\textbf{Queries}} &
\shortstack{\textbf{Queries with}\\\textbf{Overlapping Computation}} &
\shortstack{\textbf{Highest}\\\textbf{Query Frequency}} &
\shortstack{\textbf{Compared}\\\textbf{Query Pairs}} & 
\shortstack{\textbf{Equivalent}\\\textbf{Query Pairs}} & 
\shortstack{\textbf{Equivalent Pairs with}\\\textbf{Aggregate and Join}} &
\\
\hline
Set 1 & 3285 & 943 & 52 &  122900 & 3344  & 653 \\
Set 2 & 3633 & 984 & 97 &  55311 & 7225 &  4822 \\
Set 3 & 2568 & 664 & 30 & 15442 & 1521 &  356 \\
\midrule 
Total & 9486 & 2591 (27\%) &  -- & 193633 & 12090  & 5831 (48\%) \\
\bottomrule
\end{tabular}

\caption{
 \textbf{Efficacy of \sys on Production Queries} -
 "Highest Query Frequency" indicates the highest frequency of a query in equivalent 
 query pairs. 
 ``Compared Query Pairs" refers to number of query pairs that operate on the
 same set of input tables.
}
\label{fig:ant}
\end{table*}

\subsection{Efficacy on Production Queries}
\label{sec:evaluation-summary}

In this experiment, we quantify the efficacy of \sys in detecting 
overlap in production \sql queries.
We leverage three sets of real production queries from Ant
Financial~\cite{antfin}, a financial technology company.
These queries are used to detect fraud in business transactions.
In each set, we run \sys on each pair of queries that operate on 
the same set of input tables.
If \sys decides that a given pair of queries are not equivalent, 
then we check any constituent sub-queries that operate on the same input
tables.
We skip checking queries containing only table scans and those that 
only differ in the parameters passed on to their predicates.
This is because \sys trivially proves their equivalence and the computational
resources needed for evaluating such queries are negligible.

\cref{fig:ant} presents the results of this experiment.
\sys effectively identifies overlap between complex analytical queries.
Among 9486 queries, \sys finds overlapping computation between 2591 (27\%)
queries, while \eqs only finds overlapping computation between 1126 (12\%)
queries.
These tools report overlapping computation if the query or a constituent
sub-query is equivalent to another query or sub-query within the set.

We also report the highest frequency of queries present in these pairs 
that are repeatedly executed in the workload.
In practice, most of the computational resources are expended on executing
queries containing aggregate functions or different types of join.
Among 12090 equivalent pairs, 5831 (48\%) contain join and aggregate
operations.
This illustrates that \sys works well on queries containing these operators.
\begin{figure}[t!]
\centering
\includegraphics[width=0.95\linewidth]{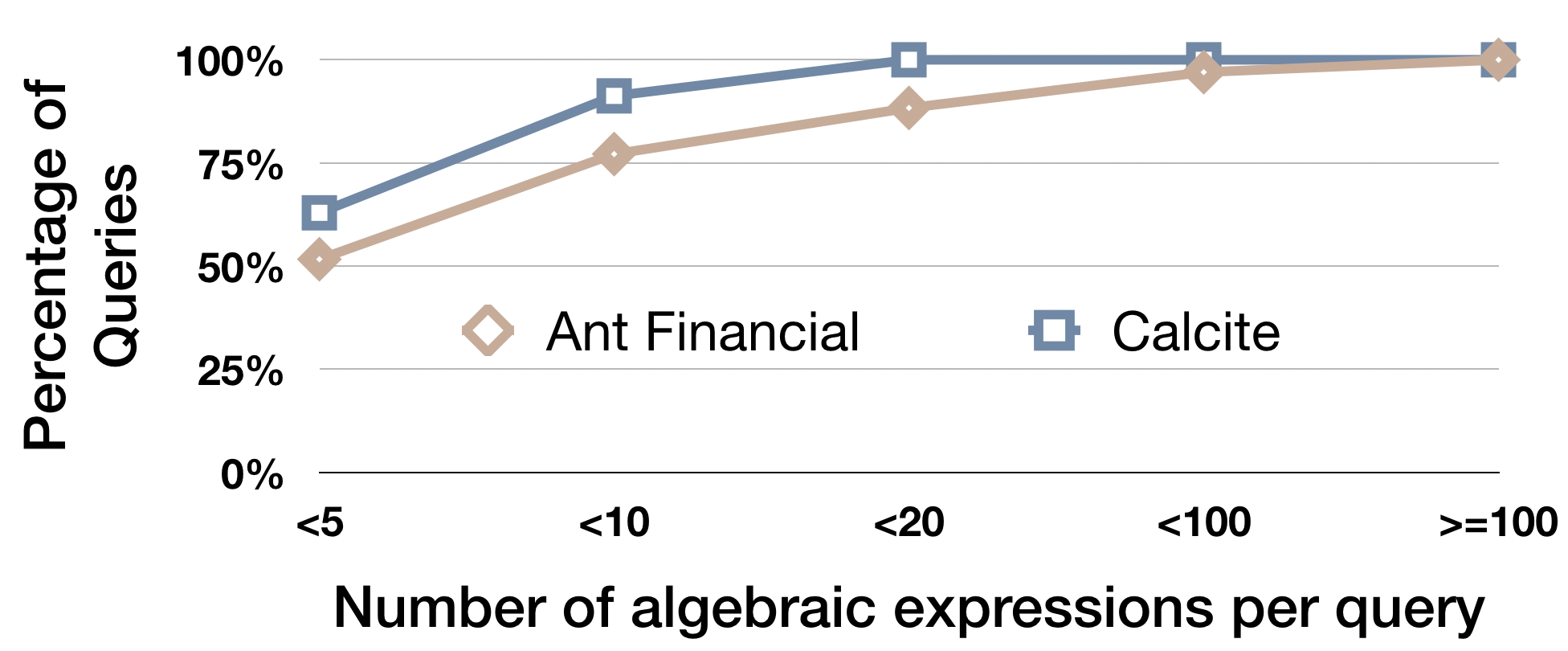}
\caption{
 \textbf{Complexity of Production Queries} -
 We quantify the complexity of production queries in the Ant Financial workload
 by measuring the number of sub-queries in each query. 
}
\label{fig:ant3}
\end{figure}

\PP{Query Complexity:}
\cref{fig:ant3} illustrates the complexity of queries in this workload.
We compute the distribution of the number of sub-queries 
in a given query (complex queries will have a larger set of sub-queries).
We found that the average number of algebraic expressions in the Ant Financial
workload (45.38) is 8$\times$ larger than that in the \calcite benchmark (5.37).

\subsection{Limitations}
\label{sec:evaluation-limitations}
In general, the problem of deciding QE is undecidable~\cite{Alfred1979}.
Among the 120 query pairs supported by \sys, it cannot prove the QE of 25
pairs.
We classify them into three categories:
(1) lack of normalization rules ($21$), and
(2) support for additional integrity constraints ($4$)
%

%
\PP{Normalization Rules:}
\sys verifies the cardinal equivalence of two queries only if it is able to
normalize them into the same type of queries using a set of pre-defined
semantically-equivalent rewrite rules (\autoref{sec:equivalence::construct}).
We will need to introduce additional normalization rules for queries with:
(1) union and aggregate (15/21), and
(2) join and aggregate (6/21),
Adding these rewrite rules in the normalization stage will enable \sys to 
prove the QE of these 21 pairs.
However, it will also increase the average QE verification time.
Furthermore, these rules are not required for supporting production queries 
discussed in~\autoref{sec:evaluation-summary}.

\PP{Integrity Constraints:}
\sys currently only has partial support for integrity constraints in the form
of normalization rules (\autoref{sec:approach::unf}):
(1) self join on primary key, and (2) grouping based on primary key. 
We plan to add additional rules to handle join on primary and foreign keys in
the future.
%
%
%
For example, we may normalize an \ojoin operation based on a foreign key to 
an \ijoin operation.
%

\section{Related Work}
\label{sec:related-work}
\PP{Query Equivalence:}
The state-of-the-art QE verification tools are based on either 
symbolic reasoning~\cite{zhou19} or
algebraic reasoning~\cite{chu2017cosette,chu2017sig,chu2017pldi} .
Tools based on algebraic representation include \cos and \udp that rely on 
$\mathcal{K}$-relation and $\mathcal{U}$-semi-ring theories, respectively.
Tools based on symbolic reasoning include \eqs.
We highlighted the differences between \sys and these tools
in~\autoref{sec:background}.

Prior efforts have examined the theoretical aspects of equivalence
and containment relationships between queries.
Since it is an undecidable problem~\cite{abiteboul95,ba50}, 
these efforts focused on determining categories of queries for which it is
a decidable problem:
(1) conjunctive queries~\cite{Chekuri1997}, (2) conjunctive queries with
additional constraints~\cite{diego08,horrocks00,De99}, and (3) conjunctive
queries under bag semantics~\cite{Ioannidis95}.
The problem of deciding the containment relationship between conjunctive queries 
can be reduced to a constraint satisfiability problem \cite{phokinG1998}.
Other proposals include decision procedures for proving the equivalence of a subset
of queries under set~\cite{ashok77,tannen99,Sagiv:1980} and bag
semantics~\cite{cohen99,Deutsch2008,deutsch02,popa02}.
%
%
%
%
%
%
%
%
There is a large body of work on optimizing queries using materialized
views~\cite{goldstein01,rachel00,alekh18,luis2014,sanjay2000}.
These systems internally use a QE verification algorithm that relies on
syntactical equivalence and that work on a subset of \sql queries.
These systems may leverage \sys to better detect QE and deliver more optimized execution plans.

\PP{Symbolic Reasoning in DBMSs:}
Researchers have leveraged symbolic reasoning in DBMSs 
by reducing the given problem to an FOL satisfiability problem and 
then using an SMT solver to solve it.
These efforts include: 
(1) automatically generating test cases for database
applications~\cite{veanes09,veanes10,abdul08}, 
(2) verifying the correctness of database
applications~\cite{wang17,Shachar17,grossman17}, 
(3) disproving the equivalence of SQL queries~\cite{chu2017cosette}, and
(4) finding the best application-aware memory layout~\cite{yan19}.
\sys differs from these efforts in that it uses symbolic reasoning to
prove QE under bag semantics.

\section{Conclusion}
\label{sec:conclusion}
In this paper, we presented the design and implementation of \sys, an automated
tool for proving query equivalence under bag semantics using symbolic reasoning.
We reduced the problem of proving equivalence under bag semantics to that 
of proving the existence of a bijective, identity map between the output tables
of the queries for all valid input tables. 
\sys classifies queries into four categories, and leverages a set of
category-specific verification algorithms to effectively determine their
equivalence.
Our evaluation shows that \sys proves the equivalence of a larger set 
of query pairs under bag semantics compared to state-of-the-art tools based on 
symbolic and algebraic approaches.
Furthermore, it is 3$\times$ faster than \eqs that only proves equivalence under
set semantics.

\newpage

\bibliographystyle{abbrv}
\bibliography{p}

\newpage

\appendix

\section{Semantics of AR}
\label{app:ar-semantics}
We now formally define the semantics of queries, using the following
formal notation.
$\Downarrow$ is the evaluation symbol.
The left side of this symbol is an algebraic expression that is evaluated on
valid input tables $\inputs$.
The right side of this symbol is the evaluation result, which is the output
table.
All output tables are bags (\ie, can contain duplicate tuples).
A horizontal line separates the pre- and the post-conditions.
The pre-conditions on the top of the line include a set of evaluation
relations.
The post-condition on the bottom side of the line is an evaluation relation.
If all the relations in the pre-conditions hold,  
then the relation in the post-condition holds.

\begin{figure}[H]
  \begin{gather*}
  \inference[$\textsc{E-Table}$]{%
  }{%
  \evaluation{\inputs}{\TABLE{n}}{[ t | \forall t \in n ]}
  }%
  \end{gather*}
  \begin{gather*}
  \inference[$\textsc{E-SPJ}$]{%
  \vec{\expr} = {e_0,e_1,\dots,e_n}\\
  \evaluation{\inputs}{e_0}{T_0}
  \dots
  \evaluation{\inputs}{e_n}{T_n}
  }{%
    \evaluation{\inputs}{\SPJ{\vec{\expr}}{\textsc{p}}{\vec{o}}}{[(\vec{o}(t)| \forall t \in (T_0 \times \dots \times T_n),p(t)]}
  }
  \end{gather*}
  \begin{gather*}
  \inference[$\textsc{E-Agg}$]{%
   \evaluation{\inputs}{\expr}{T_0}
  }{%
  \evaluation{\inputs}{\SAggregate{\expr}{\vec{g}}{\vec{agg}}}{[\vec{agg}(t) | \forall t \in part(T_0,\vec{g})]}
  }
  \end{gather*}
  \begin{gather*}
   \inference[$\textsc{E-Union}$]{%
  \vec{\expr} = {e_0,e_1,\dots,e_n}\\
  \evaluation{\inputs}{e_0}{T_0}
  \dots
  \evaluation{\inputs}{e_n}{T_n}
  }{%
    \evaluation{\inputs}{\textsc{Union}~\vec{\expr}}{[t| \forall t \in T_0  +\dots + T_n ]}
  }
  \end{gather*}
\caption{\textbf{Semantics -- } Semantics of queries used in \sys}
\label{app::semantics}
\end{figure}

\squishitemize
\item 
Given a set of valid input tables $\inputs$, the table query returns all the tuples
in table $n$. 

\item 
Given a set of valid input tables $\inputs$, the SPJ query first evaluates the
vector of input sub-queries on $\inputs$ to obtain a vector of input tables.
For each tuple $t$ in the cartesian product of the vector of input
tables, if $t$ satisfies the given predicate $p$, it then applies the
vector of expressions $\overrightarrow{o}$ on the selected tuple $t$
and emits the transformed tuple.

\item 
Given a set of valid input tables $\inputs$, this aggregate query first evaluates
the input sub-query on $\inputs$ to get an input table $T_0$.
Then, it uses \texttt{part} to partition the input table $T_0$ into a set of
bags of tuples as defined by a set of group set $\vec{g}$ (tuples in
each bag take the same values for the grouping attributes).
Lastly, for each bag of tuples, it applies the vector of aggregate functions
and returns one tuple.

\item 
Given a set of valid input tables $\inputs$, this union query first evaluates the 
vector of input sub-queries on $\inputs$ to get a vector of input tables.
It then returns all the tuples present in the input tables, which does not eliminate duplicate tuples.

\squishend

\section{Predicate \& Proj. Expression}
\label{sec:appendix::syntax}

\sys supports the predicate and project expressions shown
in~\autoref{app::pred}.
It uses the same encoding scheme as the one employed in \eqs (described in
Section 3.4 of~\cite{zhou19}).

\begin{figure}[t]
\begin{align}
\centering
  &\exprE ::=\column{i}
         |\constant{v}
         |\nul |\bin{\exprE}{\op}{\exprE}
         |\Function{N}{\overrightarrow{\exprE}} | \CASE \label{ln:exprE} \\ 
  & \CASE :: = \Pair \exprE \\
  & \Pair :: = \WHEN{\predC}{\exprE} \Pair | \epsilon \\    
  &\op ::= + | - | \times | \div | mod \\
  &\predC ::=\binE{\exprE}{\cp}{\exprE}
         |\binL{\predC}{\logic}{\predC}
         |\Not{\predC}
         |\isNull{\exprE} \label{ln:predC} \\       
  &\cp ::=~ > | < | = | \leq | \geq\\
  &\logic ::=~ \opand |~ \opor
\end{align}
\caption{\textbf{Predicate \& Projection Expressions -- } Types of 
predicates and projection expressions supported by \sys.}
\end{figure}
\label{app::pred}

A projection expression $\exprE$ can either be a column that refer to a specific column, a constant value, $\nul$, a binary expression, 
an uninterpreted function, or an CASE expression (\autoref{ln:exprE}).
A predicate $\predC$ can either be a binary comparison between two projection expression, a binary predicate that is composed by two predicates, a not predicate and a 
predicate decide if a projection expression is $\nul$ (\autoref{ln:predC}).

\section{Soundness of Verification}
\label{app:sound}
In this section, we give the formal proof of the soundness of \sys.
The overall proof is structured as follows:
we introduce symbolic representations of bijections over tuples in
(\autoref{app:sym-bijs}), %
prove correctness of the procedure for generating symbolic
representations (\autoref{app:sound-card}), and %
then prove correctness of the procedure for determining equivalence
(\autoref{app:sound-alg}).

\subsection{Symbolic bijections between queries}
\label{app:sym-bijs}
All definitions in this section are given with respect to arbitrary
queries $\qa$ and $\qb$, whose columns are denoted $\vec{\tabcol_1}$ and
$\vec{\tabcol_2}$, respectively.

A \emph{cardinality-preserving} binary relation between $\qa$ and
$\qb$ is a relation
\[ R \subseteq \vec{\tabcol_1} \times \vec{\tabcol_2} \]
such that for each input table set $I$ and all tuples $(t, u) \in R$,
it holds that
\[ \tblcount{ \qa(I) }{ t } = \tblcount{ \qb(I) }{ u }
\]
where $\tblcount{ T }{ u }$ denotes the number of occurrences of tuple
$u$ in table $T$.

Cardinality-preserving binary relations can act as witnesses of full
equivalence (see \cref{def:equivalent}) between queries.
\begin{lemma}
  \label{lemma:bij-witness}
  If the identity function is a cardinality-preserving binary relation
  between $\qa$ and $\qb$, then $\qa$ and $\qb$ are fully equivalent
  (denoted $\qa \equiv \qb$).
\end{lemma}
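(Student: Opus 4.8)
The plan is to prove \cref{lemma:bij-witness} by directly unfolding the two definitions involved: that the identity relation is cardinality-preserving, and that two bags are equal. The key observation is that, under bag semantics, two output tables are identical precisely when they assign the same number of occurrences to every tuple, and the hypothesis ``the identity function is a cardinality-preserving binary relation between $\qa$ and $\qb$'' is, once unfolded, exactly this multiplicity-matching condition. So the lemma is essentially a restatement of \cref{def:equivalent} in terms of multiplicities, and neither induction on query structure nor any SMT reasoning is required.

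Concretely, I would proceed as follows. First, fix an arbitrary valid input table set $\inputs$ and let $T_1 = \qa(\inputs)$ and $T_2 = \qb(\inputs)$ be the (bag-valued) output tables. Second, recall that the identity relation on the common output schema of $\qa$ and $\qb$ is $R = \{\,(t,t) \mid t \text{ a tuple of that schema}\,\}$, so the assumption that $R$ is cardinality-preserving specializes to: for every tuple $t$, $\tblcount{T_1}{t} = \tblcount{T_2}{t}$. Third, invoke the standard characterization of bag equality: two bags over the same schema are equal iff they assign the same number of occurrences to every tuple. Combined with the multiplicity equality from the previous step, this yields $T_1 = T_2$. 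Fourth, since $\inputs$ was arbitrary, $\qa$ and $\qb$ produce identical output tables on all valid inputs, which is exactly \cref{def:equivalent}; hence $\qa \equiv \qb$.

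The proof has no genuine obstacle; the only points that deserve a word of care are (i) that the hypothesis implicitly presupposes $\qa$ and $\qb$ share an output schema, so that the identity relation between them is well defined, and (ii) that any tuple occurring in neither bag contributes multiplicity $0$ on both sides and is therefore harmless, so the identity relation genuinely covers the supports of $T_1$ and $T_2$. This lemma is the simplest of the ``witness'' results; the substantive work lies in the later lemmas (\cref{lemma:tableAR,lemma:spjAR,lemma:aggAR,lemma:unionAR}) and the QPSR construction, which establish when such cardinality-preserving relations — and in particular identity ones — actually exist.
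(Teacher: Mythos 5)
Your proof is correct and follows essentially the same route as the paper's: fix an arbitrary input, unfold the cardinality-preserving hypothesis for the identity relation to get matching multiplicities for every tuple, conclude bag equality of the outputs, and hence full equivalence by \cref{def:equivalent}. Your added remarks about the shared output schema and zero-multiplicity tuples are harmless extra care, not a difference in approach.
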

\begin{proof}
  Let $I$ be an arbitrary table set and let $t$ be an arbitrary tuple
  over columns $\tabcol$.
  Then 
  \[ \tblcount{ \qa(I) }{ t } = \tblcount{ \qb(I) }{ t }
  \]
  by the fact that the identity function is cardinality-preserving.
  Thus $\qa(I)$ and $\qb(I)$ are equivalent under bag semantics by the
  definition of bag semantics.
  Thus $\qa \equiv \qb$, by definition of equivalence.
\end{proof}

A \emph{symbolic representation} of a binary relation
\[ R \subseteq \vec{\tabcol_1} \times \vec{\tabcol_2} \]
is an SMT formula over a vocabulary that extends $\vec{\tabcol_1}$ and
$\vec{\tabcol_2}$ such that for each $(t, u) \in R$, there is some model $m$
of $\varphi$ such that
\begin{align*}
  t = &\ m(\vec{\tabcol_1}) & u = &\ m(\vec{\tabcol_2})
\end{align*}
where $m(\vec{\tabcol_1})$ is the tuple of interpretations of each column
name in $\vec{\tabcol_1}$ (and similarly for $m(\vec{\tabcol_2})$).

Symbolic representations of cardinality-preserving bijections can be
viewed as QPSRs (defined in \cref{sec:equivalence::check}), collapsed
into single SMT formulas.
In particular each QPSR $(C_1, C_2, c, a)$ of $\qa$ and $\qb$
corresponds to the symbolic relation
\[ \vec{\tabcol_1} = C_1 \land %
  \vec{\tabcol_2} = C_2 \land %
  c \land a
\]

Symbolic cardinality-preserving bijections can be conjoined with
equivalent constraints over column fields to form new symbolic
cardinality-preserving bijections.
In order to formalize this, we will say that for each partial
bijection $b$ between $\vec{\tabcol_1}$ and $\vec{\tabcol_2}$, each formula
$\varphi_1$ over vocabulary $\tabcol_1$, and each formula $\varphi_2$
over vocabulary $\varphi_2$, $\varphi_1$ and $\varphi_2$ are
\emph{equivalent over $b$} if $m$ is a model of $\varphi_1$ if and
only if $b(m)$ is a model of $\varphi_2$.
\begin{lemma}
  \label{lemma:sym-bij-conjoin}
  For each cardinality-preserving bijection $b$ symbolically
  represented by $\varphi$ and all $\psi_1$ over $\vec{\tabcol_1}$ and
  $\psi_2$ over $\vec{\tabcol_2}$ that are equivalent over $b$,
  \[ \varphi \land \psi_1 \land \psi_2
  \]
  is a symbolic cardinality-preserving bijection.
\end{lemma}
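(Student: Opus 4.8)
The plan is to pin down exactly which binary relation the formula $\varphi \land \psi_1 \land \psi_2$ is meant to symbolically represent, and then to check, one at a time, the three conditions that make it a symbolic cardinality-preserving bijection. The natural candidate is the restriction $b' = \{(t,u) \in b \mid t \models \psi_1\}$. Because $\psi_1$ and $\psi_2$ are equivalent over $b$ and $u = b(t)$ for every $(t,u) \in b$, this set coincides with $\{(t,u) \in b \mid u \models \psi_2\}$, and I will use both descriptions interchangeably. Note also that $\varphi$ is already over a vocabulary extending $\vec{\tabcol_1}$ and $\vec{\tabcol_2}$, and $\psi_1$, $\psi_2$ only add variables from $\vec{\tabcol_1}$ and $\vec{\tabcol_2}$, so the conjunction is of the right form.

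First I would argue that $\varphi \land \psi_1 \land \psi_2$ symbolically represents $b'$. Fix an arbitrary $(t,u) \in b'$. Since $(t,u) \in b$ and $\varphi$ symbolically represents $b$, there is a model $m$ of $\varphi$ with $m(\vec{\tabcol_1}) = t$ and $m(\vec{\tabcol_2}) = u$. As $\psi_1$ mentions only the variables of $\vec{\tabcol_1}$, whether $m$ satisfies $\psi_1$ is determined by $m(\vec{\tabcol_1}) = t$ alone, and $t \models \psi_1$ by membership in $b'$; hence $m \models \psi_1$. Using the fact that $\psi_1$ and $\psi_2$ are equivalent over $b$ together with $u = b(t)$, we get $u \models \psi_2$, and since $\psi_2$ mentions only the variables of $\vec{\tabcol_2}$, it follows that $m \models \psi_2$. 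Thus $m$ is a model of $\varphi \land \psi_1 \land \psi_2$ with $m(\vec{\tabcol_1}) = t$ and $m(\vec{\tabcol_2}) = u$, which is exactly what the definition of symbolic representation requires.

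Next I would verify the two structural properties of $b'$. Cardinality preservation is inherited directly from $b$: since $b' \subseteq b$, for every $(t,u) \in b'$ and every input table set $I$ we already have $\tblcount{\qa(I)}{t} = \tblcount{\qb(I)}{u}$ from the assumption that $b$ is cardinality-preserving. Likewise, $b'$ is a (partial) bijection because injectivity and single-valuedness are preserved under taking a subset of the pairs, so restricting the bijection $b$ to those pairs whose first component satisfies $\psi_1$ again yields a one-to-one partial map. Assembling these three facts gives the claim.

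The step I expect to require the most care is the handling of ``equivalent over $b$'': I must invoke the biconditional in the correct direction (from $t \models \psi_1$ to $b(t) \models \psi_2$, using that $\psi_1$ over $\vec{\tabcol_1}$ and $\psi_2$ over $\vec{\tabcol_2}$ are linked precisely by $b$), and I must be explicit that each formula's vocabulary is disjoint from the other tuple's variables, so that satisfaction at the level of individual tuples lifts to satisfaction by the single combined model $m$. Everything else is routine set-theoretic and model-theoretic bookkeeping.
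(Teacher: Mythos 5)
Your proof is correct and follows essentially the same route as the paper's: it identifies the relation represented by $\varphi \land \psi_1 \land \psi_2$ as the restriction of $b$ to the pairs satisfying $\psi_1$ (equivalently $\psi_2$, by equivalence over $b$), and observes that a restriction of a cardinality-preserving bijection remains one. You simply spell out the model-existence and vocabulary-separation details that the paper's one-line proof leaves implicit.
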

\begin{proof}
  $b|_{\iota(\varphi_1)}$ is is the interpretation of
  $\varphi \land \psi_1 \land \psi_2$.
  It is a cardinality-preserving bijection because it is a restriction
  of a cardinality-preserving bijection.
\end{proof}

Because cardinality-preserving bijections can act as witnesses of
equivalence, their symbolic representations naturally can, as well.
\begin{lemma}
  \label{lemma:sym-witness}
  If there is some symbolic cardinality-preserving bijection $\varphi$
  between $\qa$ and $\qb$ that entails
  \[ \vec{\tabcol_1} = \vec{\tabcol_2} \]
  then $\qa \equiv \qb$.
\end{lemma}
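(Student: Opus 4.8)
The plan is to reduce the statement to \cref{lemma:bij-witness} by showing that the identity relation is a cardinality-preserving binary relation between $\qa$ and $\qb$. Let $b$ be the cardinality-preserving bijection that $\varphi$ symbolically represents. The first step is to observe that $b$ is contained in the identity relation: for any pair $(t,u)\in b$, the definition of symbolic representation supplies a model $m$ of $\varphi$ with $t = m(\vec{\tabcol_1})$ and $u = m(\vec{\tabcol_2})$; since $\varphi$ entails $\vec{\tabcol_1} = \vec{\tabcol_2}$, that model satisfies $m(\vec{\tabcol_1}) = m(\vec{\tabcol_2})$, and hence $t = u$. So every pair in $b$ has the form $(t,t)$.

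Next I would upgrade the containment ``$b \subseteq \text{id}$'' to ``the identity relation is cardinality-preserving.'' Fix an input table set $I$ and a tuple $t$. Here I would use the fact that $b$ is a \emph{bijection}, not merely a cardinality-preserving relation: for each $I$, $b$ pairs up the tuples that occur in $\qa(I)$ with the tuples that occur in $\qb(I)$, covering both sides. Combined with the previous step, this forces the two output supports to coincide and each matched pair to be of the form $(t,t)$; cardinality preservation of $b$ then gives $\tblcount{\qa(I)}{t} = \tblcount{\qb(I)}{t}$ whenever $t$ occurs in either output, and when $t$ occurs in neither, both counts are $0$. Thus $\tblcount{\qa(I)}{t} = \tblcount{\qb(I)}{t}$ for all $I$ and $t$, i.e., the identity relation between $\qa$ and $\qb$ is cardinality-preserving. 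Applying \cref{lemma:bij-witness} then yields $\qa \equiv \qb$.

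The main obstacle is the second step. The definition of ``cardinality-preserving binary relation'' as stated only constrains pairs already in the relation, so to rule out a tuple that appears in one query's output but is simply missing from $b$, I must lean on the additional structure hidden in the word ``bijection'' — that for every input $b$ actually exhausts the output tuples on both sides. Making that precise (possibly by stating the covering property explicitly in the definition of a symbolic cardinality-preserving bijection, so that it is preserved by the constructions of \cref{lemma:sym-bij-conjoin}) is where the real care is needed; the entailment-to-containment argument and the closing appeal to \cref{lemma:bij-witness} are routine.
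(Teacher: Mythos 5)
Your proposal is correct and takes essentially the same route as the paper: both reduce to \cref{lemma:bij-witness} by showing that the bijection represented by $\varphi$ is forced into the identity (entailment of $\vec{\tabcol_1} = \vec{\tabcol_2}$ gives containment in the identity, and the bijection's totality/covering of the output tuples does the rest). The covering issue you flag is genuine, but it is precisely the point the paper itself papers over by asserting that $\varphi$ ``represents a total function,'' so your treatment is, if anything, more explicit than the paper's own proof.
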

\begin{proof}
  $\varphi$ represents the identity function by the assumptions that
  it represents a total function and that it logically entails a
  symbolic representation of the identity relation.
  Thus, $\qa \equiv \qb$, by \cref{lemma:bij-witness}.
\end{proof}

\subsection{Synthesizing symbolic bijections}
\label{app:sound-card}

We now prove the soundness of the procedure $\veriCard$.
The proof is defined using a set of lemmas per form of input query
(\cref{app:table-bijs}---\cref{app:union-bijs}), each of which are
predicated on assumptions that $\veriCard$ is sound on smaller
queries.
The proof for arbitrary queries combines the lemmas that concern each
form of query in a proof by induction on $\veriCard$'s input query
(\cref{app:arby-bijs}).

\subsubsection{Symbolic bijections between table queries}
\label{app:table-bijs}

We now state and prove the soundness of $\veriCardTable$, which is
given in \cref{alg:tableCheck}.
For a given pair of table queries $\TABLE{n_1}$ and $\TABLE{n_2}$, $\veriCardTable$ first checks if
two table queries have the same name.
If two table queries have the same names, then $\veriCardTable$ uses procedure $\fresh$ to create a new vector of pair of symbolic variables
based on the input table schema, and assign this new vector to $\vec{\tabcol_1}$.
$\veriCardTable$ then sets $\vec{\tabcol_1}$ is equal to $\vec{\tabcol_2}$.
$\veriCardTable$ returns the QPSR with $\vec{\tabcol_1}$ and $\vec{\tabcol_2}$, where both $\cond$ and $\assign$ are $\tru$.
If two table queries have different names, then $\veriCardTable$ returns $\nul$.

\begin{lemma}
  \label{lemma:table}
  If $\veriCardTable$, given two table queries $q_1 = \TABLE{n_1}$ and
  $q_2 = \TABLE{n_2}$, returns some QPSR $\varphi$, then $\varphi$ is
  symbolic cardinality-preserving bijection between $q_1$ and $q_2$.
\end{lemma}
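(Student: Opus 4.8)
The plan is to handle this as the base case of the induction that establishes soundness of $\veriCard$, so no inductive hypothesis is needed. First I would inspect \cref{alg:tableCheck}: $\veriCardTable$ returns a QPSR (rather than $\nul$) only in the branch where $n_1 = n_2$, and I write $n$ for this common name. In that branch the returned QPSR has the form $(\vec{v}, \vec{v}, \tru, \tru)$, where $\vec{v} = \fresh(\tabschema(n))$ is a vector of fresh pairs of variables (the algorithm's two ``column'' components $\vec{\tabcol_1}$ and $\vec{\tabcol_2}$ are the QPSR fields $C_1$ and $C_2$ of \cref{app:sym-bijs}, and here both equal $\vec{v}$). Because $n_1 = n_2$, the two table queries have identical schemas, so $\vec{v}$ is simultaneously a legitimate tuple vocabulary for the output columns $\vec{\tabcol_1}$ of $q_1$ and the output columns $\vec{\tabcol_2}$ of $q_2$; this is the one place the hypothesis $n_1=n_2$ is actually used. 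Following the QPSR-to-formula translation of \cref{app:sym-bijs}, the SMT formula $\varphi$ associated with this QPSR is $\vec{\tabcol_1} = \vec{v} \land \vec{\tabcol_2} = \vec{v} \land \tru \land \tru$, whose required vocabulary extension over $\vec{\tabcol_1}$ and $\vec{\tabcol_2}$ is precisely the set of fresh variables $\vec{v}$.

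Next I would fix the witness relation $R$ to be the identity relation on tuples over $\tabschema(n)$ and verify the three ingredients contained in ``symbolic cardinality-preserving bijection between $q_1$ and $q_2$''. (i) $\varphi$ symbolically represents $R$: for any $(t,t) \in R$, the model $m$ that interprets each of $\vec{v}$, $\vec{\tabcol_1}$, and $\vec{\tabcol_2}$ as $t$ satisfies $\varphi$ and gives $m(\vec{\tabcol_1}) = t = m(\vec{\tabcol_2})$, as required by the definition of symbolic representation. (ii) $R$ is cardinality-preserving: by the $\textsc{E-Table}$ rule of the semantics, evaluating $q_1 = \TABLE{n_1}$ on any input table set $I$ yields the bag of all tuples stored under name $n$ in $I$, and likewise for $q_2 = \TABLE{n_2}$ since $n_1 = n_2 = n$; hence $\tblcount{q_1(I)}{t} = \tblcount{q_2(I)}{t}$ for every tuple $t$, in particular for every $(t,t) \in R$. (iii) $R$ is a bijection, being the identity map on a set. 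Combining (i)--(iii) yields the claim.

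I do not expect a genuine obstacle here; the role of this lemma is simply to anchor the induction of \cref{app:arby-bijs}, whose substantive cases are the SPJ, aggregate, and union queries. The only points that warrant care rather than being outright routine are notational: making the ``collapse a QPSR into a single SMT formula'' step of \cref{app:sym-bijs} precise enough that the vocabulary-extension clause is visibly met (the extension is $\vec{v}$), and keeping the algorithm's reuse of the names $\vec{\tabcol_1}, \vec{\tabcol_2}$ for the QPSR fields distinct from their use as the output column vocabularies of $q_1$ and $q_2$ in the definitions.
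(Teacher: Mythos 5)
Your proof is correct and follows essentially the same route as the paper's: observe that a QPSR is returned only when $n_1 = n_2$, note that the returned QPSR symbolically represents the identity relation, and conclude it is a symbolic cardinality-preserving bijection. The paper states this in two sentences; you merely spell out the collapse into an SMT formula, the witness model, and the appeal to \textsc{E-Table}, which are exactly the details the paper leaves implicit.
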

\begin{proof}
  $\veriCardTable$ determines that $n_1 = n_2$, by the fact that
  $\veriCardTable$ only returns a QPSR if $n_1 = n_2$ and by the
  assumption that $\veriCardTable$ returns a QPSR.
  The QPSR returned by $\veriCardTable$ is the symbolic representation
  of the identity relation, and is thus a symbolic
  cardinality-preserving bijection.
\end{proof}

\subsubsection{Symbolic bijections between SPJ queries}
\label{app:spj-bijs}
We now formalize and prove the correctness of $\veriCardSPJ$ (see
\cref{sec:equivalence::construct:::spj}).

\begin{lemma}
  \label{lemma:spj}
  For vectors of queries $\overrightarrow{e_1}$ and
  $\overrightarrow{e_2}$, %
  if $\veriCard(e_1', e_2')$ is an QPSR $\varphi'$ only if $\varphi'$
  is a symbolic cardinality-preserving bijection between $e_1'$ and
  $e_2'$ for all $e_1' \in \overrightarrow{e_1}$ and
  $e_2' \in \overrightarrow{e_2}$, then %
  then $\veriCardSPJ$, given queries
  $\qa = \SPJ{\overrightarrow{e_1}}{p_1}{\overrightarrow{o_1}}$ and
  $\qb = \SPJ{\overrightarrow{e_2}}{p_2}{\overrightarrow{o_2}}$,
  returns a QPSR $\varphi$ only if $\varphi$ is a symbolic
  cardinality-preserving bijection between $\qa$ and $\qb$.
\end{lemma}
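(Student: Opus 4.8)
The plan is to prove the lemma by walking through the control flow of $\veriCardSPJ$ (\autoref{alg:SPJCheck}) and checking that every operation it performs preserves the invariant ``the formula built so far symbolically represents a cardinality-preserving bijection between the corresponding intermediate tables.'' Assume $\veriCardSPJ$ returns a QPSR $\varphi$. Then the loop starting at \autoref{ln:spj-iter} must have selected some candidate $\vec{QPSR}$ returned by $\veriCardExpr(\overrightarrow{e_1},\overrightarrow{e_2})$ on \autoref{ln:spj-joins}, the SMT check on \autoref{ln:prove} must have succeeded, and control must have reached \autoref{ln:spj-ret-qpsr}. I would then establish the invariant in three stages: after $\Compose$, after the predicate steps, and after the projection steps.

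First I would state and prove a small auxiliary fact about $\veriCardExpr$: it only returns a candidate $\vec{QPSR}$ that encodes a one-to-one correspondence between $\overrightarrow{e_1}$ and $\overrightarrow{e_2}$ whose $i$-th entry is a QPSR produced by $\veriCard$ on the matched pair of sub-queries. Combined with the hypothesis of the lemma, this gives that every component of the chosen $\vec{QPSR}$ is a symbolic cardinality-preserving bijection between its matched sub-queries. Next, I would show that $\Compose$ (\autoref{ln:construct}) lifts these to a symbolic cardinality-preserving bijection between the two intermediate join tables $T^1_0 \times \cdots \times T^1_n$ and $T^2_0 \times \cdots \times T^2_n$: the underlying set map is the product of the component bijections, and it is cardinality-preserving because the multiplicity of a tuple in a bag cartesian product is the product of the multiplicities of its components; symbolically, $\Compose$ concatenates the column vectors and conjoins the $\cond$ and $\assign$ fields, which is exactly the symbolic representation of this product relation (and because the fresh-variable vocabularies of the sub-QPSRs are disjoint, the conjunction stays satisfiable along the relation, so \cref{lemma:sym-bij-conjoin} applies).

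Then I would treat the predicate filter. The procedure $\constructPred$ from~\cite{zhou19} produces $\cond_1$ and $\cond_2$ that functionally represent $p_1(\vec{\tabcol_1})$ and $p_2(\vec{\tabcol_2})$ under the side constraints $\assign_1$ and $\assign_2$; the test on \autoref{ln:prove} --- unsatisfiability of $\cond \land \assign \land \assign_1 \land \assign_2 \land \neg(\cond_1 = \cond_2)$ --- together with SMT-solver soundness forces $p_1$ and $p_2$ to agree on every pair of tuples in the intermediate-join bijection. Hence $\cond_1$ and $\cond_2$ are equivalent over that bijection, and by \cref{lemma:sym-bij-conjoin} the conjunction of all $\cond$ and $\assign$ fields accumulated so far symbolically represents the restriction of the bijection to the tuples that pass the filter, i.e. the bijection between the two post-selection, pre-projection output tables. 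Finally, I would argue that $\conexpr$ (\autoref{ln:spj-def-assn}) preserves the bijection: projection acts occurrence-by-occurrence, so each occurrence of $\overrightarrow{o_1}(t)$ in $\qa$'s output traces back to an occurrence of $t$, which the current bijection maps to an occurrence of its partner $u$, which yields an occurrence of $\overrightarrow{o_2}(u)$; the induced map on occurrences is a bijection and trivially cardinality-preserving. Since $\conexpr$ introduces $\vec{\tabcol'_1}$ and $\vec{\tabcol'_2}$ together with their defining relations $\assign_3$ and $\assign_4$, one more application of \cref{lemma:sym-bij-conjoin} shows that the returned $\varphi$ is a symbolic cardinality-preserving bijection between $\qa$ and $\qb$, which is the claim.

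The main obstacle I expect is the $\Compose$ step: making fully rigorous that bag cartesian products have multiplicities that multiply, that the product of the component bijections is therefore cardinality-preserving, and --- the fiddly part --- that the syntactic operation $\Compose$ performs (concatenating column vectors and conjoining $\cond$ and $\assign$) is precisely the symbolic representation of that product relation, while checking that the ``equivalent over $b$'' hypothesis of \cref{lemma:sym-bij-conjoin} (which here reduces to the sub-QPSRs using disjoint fresh variables) is genuinely met at each conjunction. By comparison, the predicate step reduces to a clean statement of what it means for $\constructPred$ to functionally represent a predicate, plus SMT-solver soundness, both importable from the $\eqs$ development, and the projection step is largely bookkeeping once projection is viewed occurrence-wise.
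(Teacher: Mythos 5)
Your proposal follows essentially the same route as the paper's proof: use the hypothesis to conclude that the candidate $\vec{QPSR}$ from $\veriCardExpr$ yields (via $\Compose$) a symbolic cardinality-preserving bijection between the Cartesian products, then invoke \cref{lemma:sym-bij-conjoin} with $\cond_1$ and $\cond_2$ (justified by the unsatisfiability check on \autoref{ln:prove}) to get the bijection between the selections, and finally observe that the projection step via $\conexpr$ preserves it. Your write-up is in fact more explicit than the paper's (multiplicity of tuples in bag Cartesian products, disjoint fresh vocabularies, occurrence-wise treatment of projection), but the decomposition and key lemma are the same.
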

\begin{proof}
  There is some $\mathsf{QPSR}$ in
  $\veriCardExpr(\overrightarrow{e_1}, \overrightarrow{e_2})$, by the
  definition of $\veriCardSPJ$ (\cref{alg:SPJCheck},
  \cref{ln:spj-joins}, \cref{ln:spj-iter}, \cref{ln:spj-ret-qpsr}, and
  \cref{ln:spj-ret-null}) and the semantics of $\veriCardExpr$.
  $\mathsf{QPSR}$ represents a symbolic cardinality-preserving
  bijection $\varphi$ between the Cartesian product of
  $\overrightarrow{e_1}$ and the Cartesian product of
  $\overrightarrow{e_2}$, by the assumption that
  $\veriCard(e_1', e_2')$ is a symbolic cardinality-preserving
  bijection for all $e_1 \in \overrightarrow{e_1}$ and
  $e_2 \in \overrightarrow{e_2}$.

  Thus $\varphi \land \cond_1 \land \cond_2$ is a symbolic
  cardinality-preserving bijection by \cref{lemma:sym-bij-conjoin}
  applied to $\cond_1$ and $\cond_2$ and the definition of
  $\veriCardSPJ$ (\cref{ln:prove} and \cref{ln:spj-ret-qpsr}).
  Furthermore, it is a symbolic cardinality-preserving bijection of
  between the selection of the Cartesian products of
  $\overrightarrow{e_1}$ on $p_1$ and the Cartesian product of
  $\overrightarrow{e_2}$ on $p_2$.
  Thus the QPSR returned by $\veriCardSPJ$ is a symbolic
  cardinality-preserving bijection between $\qa$ and $\qb$ by the
  definition of $\veriCardSPJ$
  (\cref{ln:spj-def-assn}---\cref{ln:spj-ret-qpsr}).
\end{proof}

\subsubsection{Symbolic bijections between aggregate queries}
\label{app:agg-bijs}
We now formalize and prove the correctness of $\veriCardAgg$ (see
\cref{sec:equivalence::construct:::agg}).
\begin{lemma}
  \label{lemma:agg}
  If $\veriCardAgg$, given aggregate queries of the form
  $q_1 =
  \SAggregate{e_1}{\overrightarrow{g_1}}{\overrightarrow{a_1}}$ and
  $q_2 =
  \SAggregate{e_2}{\overrightarrow{g_2}}{\overrightarrow{a_2}}$
  returns a QPSR $\varphi$, and if $\veriCard$ given $e_1$ and $e_2$,
  only returns a QPSR if it is a symbolic cardinality-preserving
  bijection, then $\varphi$ is a symbolic cardinality-preserving
  bijection.
\end{lemma}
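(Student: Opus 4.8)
The plan is to trace the control flow of $\veriCardAgg$ (\cref{alg:AggCheck}), read off the two facts it must have verified before returning a QPSR, turn those facts into the hypotheses of \cref{lemma:aggAR}, and then show that the symbolic tuples the procedure assembles on \autoref{ln:aggNewSymbolicTuples1}--\autoref{ln:aggNewSymbolicTuples2} symbolically represent the bijection that \cref{lemma:aggAR} produces. Concretely: since $\veriCardAgg$ returned a QPSR, the recursive call $\veriCard(e_1,e_2)$ on \autoref{ln:agginputEq} must have returned a sub-QPSR $\Phi = (\vec{\tabcol_1},\vec{\tabcol_2},\cond,\assign)$, which by the lemma's hypothesis on $\veriCard$ is a symbolic cardinality-preserving bijection, say representing a concrete bijection $b$ between the outputs of $e_1$ and $e_2$; and the SMT solver must have discharged, on \autoref{ln:chk-gps}, the two group-compatibility implications of the form $(\cond\land\assign\land\vec{g_1}=\vec{g_1'})\Rightarrow\vec{g_2}=\vec{g_2'}$ and its converse, where the primed tuples are obtained by replacing every variable of $\vec{\tabcol_1},\vec{\tabcol_2},\cond,\assign$ by a fresh one.

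First I would upgrade those syntactic entailments into a statement over all concrete databases and over $b$ itself. Fix an input $I$ and two occurrence-pairs $(t,u),(t',u')\in b$. Because $\Phi$ symbolically represents $b$ (\cref{def:symbolic-bijective}) there are models $m,m'$ of $\Phi$ witnessing these pairs, and because the primed copy is variable-disjoint from $\Phi$ the union of $m$ with $m'$ read into the primed vocabulary is a model of $\Phi\land\Phi'$. If $t,t'$ lie in the same $\vec{g_1}$-group this model also satisfies $\vec{g_1}=\vec{g_1'}$, so the verified implication forces $u,u'$ into the same $\vec{g_2}$-group, and the converse implication gives the other direction. This is precisely condition (2) of \cref{lemma:aggAR}; condition (1) is immediate since $\Phi$ is a cardinality-preserving bijection between $e_1$ and $e_2$. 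Hence \cref{lemma:aggAR} applies: $q_1$ and $q_2$ are cardinally equivalent, and moreover $b$ lifts to a bijection $B$ between the partitions $part(e_1(I),\vec{g_1})$ and $part(e_2(I),\vec{g_2})$ for every $I$, with $b$ restricting to a multiplicity-preserving bijection of each group $G$ onto $B(G)$.

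Next I would show that the QPSR $\varphi$ returned on \autoref{ln:agg-ret-qpsr} symbolically represents the bijection on output tuples induced by $B$ (each group contributes exactly one output tuple, so $B$ gives a multiplicity-respecting bijection between $q_1(I)$ and $q_2(I)$). Given a matched output pair $(\vec{agg_1}(G),\vec{agg_2}(B(G)))$, pick a representative $t\in G$ with $b$-partner $u\in B(G)$, take a model $m_0$ of $\Phi$ witnessing $(t,u)$, and extend $m_0$ to the fresh aggregate-value variables created by $\freshAgg$ and reused by $\constructAggCall$ so that the aggregate slots of $\vec{\tabcol_1}$ and $\vec{\tabcol_2}$ evaluate to the corresponding components of $\vec{agg_1}(G)$ and $\vec{agg_2}(B(G))$. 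This extension is consistent: $\constructAggCall$ shares a single variable across the two tuples only when the aggregate function and operand column agree, and then the operand is the same FOL term in $\vec{\tabcol_1}$ and $\vec{\tabcol_2}$ while $b|_G:G\to B(G)$ is multiplicity-preserving, so that column carries the same multiset of values over $G$ and over $B(G)$ and the two aggregate results coincide; the $\vec{g_1},\vec{g_2}$ suffixes are handled by $m_0$ together with condition (2). Since $\cond$ and $\assign$ are carried over unchanged from $\Phi$, the collapsed formula $\vec{\tabcol_1}=C_1\land\vec{\tabcol_2}=C_2\land\cond\land\assign$ is $\Phi$ conjoined with constraints equivalent over $b$, hence a symbolic cardinality-preserving bijection by \cref{lemma:sym-bij-conjoin} --- which is the conclusion.

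The hard part will be the upgrade in the second paragraph from the purely syntactic SMT entailments to a property quantified over all concrete input databases and over the concrete bijection $b$; it hinges on the precise meaning of symbolic representation (\cref{def:symbolic-bijective}) and, crucially, on the renamed copy being variable-disjoint from $\Phi$ so that two separate witnessing models can be merged into one model of the conjunction on which the verified implication can be invoked. A secondary subtlety to isolate is the consistency of $\constructAggCall$'s variable reuse: it needs $b$ to restrict to an honest multiplicity-preserving bijection of each group onto its image and needs the shared operand to be literally the same symbolic term in the two sub-tuples; when the aggregates differ, $\constructAggCall$ simply introduces fresh independent variables and $\varphi$ still represents a bijection (just not the identity), which is all that this lemma requires.
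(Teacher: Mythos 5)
Your proof is correct and follows essentially the same route as the paper's: the paper likewise obtains the result by composing the sub-QPSR's bijection $b$ with the group-level bijection (your $B$, its $b_E$, shown to be a bijection exactly by the bidirectional group-compatibility check at \cref{ln:chk-gps}) and the per-group aggregate maps, relying on $\freshAgg$/$\constructAggCall$ sharing variables only when function and operands agree; your write-up just makes explicit the model-merging upgrade from the SMT entailments and the multiplicity-preservation argument that the paper leaves implicit. The closing appeal to \cref{lemma:sym-bij-conjoin} is dispensable (and not quite the right fit, since the returned QPSR ranges over new aggregate columns rather than conjoining constraints over $\vec{\tabcol_1},\vec{\tabcol_2}$), but your preceding model construction already yields the conclusion directly.
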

\begin{proof}
  In this proof, let $\tabcol_1$ and $\tabcol_2$ denote the columns of
  $e_1$ and $e_2$.
  $\veriCard$, given $e_1$ and $e_2$, returns a symbolic
  cardinality-preserving bijection $b$, by the definition of
  $\veriCardAgg$ (see \cref{alg:AggCheck}, \cref{ln:agg-chk}) and the
  assumption that if $\veriCard$ returns a QPSR, then it is a symbolic
  cardinality-preserving bijection.
  The QPSR returned by $\veriCardAgg$ is a symbolic
  cardinality-preserving bijection whose interpretation is the
  composition of bijections $b = b_0^{-1} \compose b_E \compose b_1$
  (where $b_0^{-1}$ denotes the inverse of bijection $b_0$), which are
  defined as follows.
  Let $b_0$ be the bijection from each equivalence class of
  $\tabcol_1$ induced $\overrightarrow{g_0}$ to its image under the
  aggregate functions $\overrightarrow{a_0}$, and similarly for $b_1$.

  Let $b_E$ be the relation from each equivalence class $E_0$ of
  $\tabcol_1$ induced by $\overrightarrow{g_1}$ to equivalence class
  $E_1$ of $\tabcol_2$ induced by $\overrightarrow{g_2}$ if there is
  some $t \in E_0$ such that $b(t) \in E_1$.
  $b_E$ is in fact a bijection, by the definition of $\veriCardAgg$
  (see \cref{ln:chk-gps} and \cref{ln:agg-ret-qpsr}, which checks that
  the equivalence classes induced by $\overrightarrow{g_1}$ and
  $\overrightarrow{g_2}$ preserve $b$).

  $b$ is the interpretation of the QPSR returned by $\veriCardAgg$ by
  the semantics of $\freshAgg$ and $\constructAggCall$ and the
  definition of $\veriCardAgg$ (see
  \cref{ln:aggNewSymbolicTuples1}---\cref{ln:agg-ret-qpsr}).
\end{proof}

\subsubsection{Symbolic bijections between union queries}
\label{app:union-bijs}

We now formalize and prove soundness of $\veriCardUnion$ (see
\cref{sec:equivalence::construct:::union}).

\begin{lemma}
  \label{lemma:union}
  If $\veriCardUnion$, given $q_1 = \SUnion{\overrightarrow{e_1}}$ and
  $q_2 = \SUnion{\overrightarrow{e_2}}$ returns some QPSR $\varphi$
  and if $\veriCard(e_1', e_2')$ returns a QPSR then it is a symbolic
  cardinality-preserving bijection for all $e_1' \in e_1$ and
  $e_2' \in e_2$, then $\varphi$ is a symbolic cardinality-preserving
  bijection between $q_1$ and $q_2$.
\end{lemma}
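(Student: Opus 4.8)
The plan is to follow the template already used for \cref{lemma:table} and the SPJ and aggregate lemmas: unfold \autoref{alg:UnionCheck}, invoke the semantics of its subroutines $\veriCardExpr$ and $\PN$ together with the hypothesis on the sub-queries, and glue the symbolic cardinality-preserving bijections of the sub-queries into one for the union. First I would note that, since $\veriCardUnion$ returned a QPSR, the collection $\{\vec{QPSR}\}$ computed by $\veriCardExpr(\vec{e_1},\vec{e_2})$ is non-empty, so the algorithm fixes one candidate $\vec{QPSR} = (\varphi_1,\dots,\varphi_k)$. By the contract of $\veriCardExpr$ (\cref{sec:equivalence::construct:::spj}), this candidate corresponds to a bijection $\pi$ between the index sets of $\vec{e_1}$ and $\vec{e_2}$ (in particular $|\vec{e_1}| = |\vec{e_2}| = k$) such that $\varphi_j$ is precisely the QPSR returned by $\veriCard$ on the $j$-th paired sub-queries $e_1^j$ and $e_2^{\pi(j)}$. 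Applying the hypothesis of the lemma to each such call, $\varphi_j$ is a symbolic cardinality-preserving bijection, so it represents a cardinality-preserving bijection $b_j$ between $e_1^j$ and $e_2^{\pi(j)}$.

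Second, I would argue that the ``summand-indexed'' disjoint union of the $b_j$ is a cardinality-preserving bijection between $q_1$ and $q_2$; this is the content of \cref{lemma:unionAR} on the cardinal side, but here I need it at the level of the represented relation. By the $\textsc{E-Union}$ rule (\cref{app::semantics}), for every input $I$ we have $q_1(I) = e_1^1(I) + \dots + e_1^k(I)$ and $q_2(I) = e_2^{\pi(1)}(I) + \dots + e_2^{\pi(k)}(I)$ as bag concatenations; viewing each concatenation as a disjoint sum of occurrences and routing the occurrences of the $j$-th summand through $b_j$ yields a bijection between $q_1(I)$ and $q_2(I)$. This occurrence bijection is what the final QPSR should represent: the represented relation pairs $(t,u)$ exactly when $t$ and $u$ lie in the same summand $j$ and $(t,u) \in b_j$, and for such a pair $\tblcount{e_1^j(I)}{t} = \tblcount{e_2^{\pi(j)}(I)}{u}$ because $b_j$ is cardinality-preserving.

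Third, I would check that the formula $\varphi$ actually returned by $\veriCardUnion$, namely $\cond \land \assign$ over the fresh tuples $\vec{\tabcol_1}, \vec{\tabcol_2}$, represents exactly that relation. This is where $\PN$ does the work: it introduces a fresh vector of boolean ``selector'' variables and sets $\assign$ so that, for whichever selector is true (the others false), $\vec{\tabcol_1}$ and $\vec{\tabcol_2}$ agree with the symbolic tuples of the corresponding $\varphi_j$ and $\cond$ is the condition of that $\varphi_j$. So, given $(t,u)$ in summand $j$ with $(t,u)\in b_j$, soundness of $\varphi_j$ supplies a model of $\varphi_j$, which I extend by setting the $j$-th selector true and the rest false to obtain a model of $\varphi$ interpreting $\vec{\tabcol_1}$ as $t$ and $\vec{\tabcol_2}$ as $u$; conversely every model of $\varphi$ picks a single summand $j$ and lands inside $b_j$, so no spurious pairs are added. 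Together with the second step this makes $\varphi$ a symbolic cardinality-preserving bijection between $q_1$ and $q_2$, which is the conclusion of the lemma.

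The step I expect to be the main obstacle is the second one: reconciling the value-level definition of a cardinality-preserving binary relation (\cref{app:sym-bijs}) with the occurrence-level disjoint-union bijection when two sub-queries can emit the \emph{same} tuple value. The plain union of the $b_j$ on the common schema is \emph{not} cardinality-preserving, so soundness genuinely relies on the selector variables of $\PN$ keeping each represented pair inside one summand; I would need to state precisely the provenance-respecting invariant of the formula $\PN$ produces and verify it against the informal description in \cref{sec:equivalence::construct:::union}. A secondary, routine obligation is the correctness contract of $\veriCardExpr$ --- that every vector it returns arises from a genuine index bijection with per-pair cardinal equivalence --- which I would cite from the SPJ development rather than reprove.
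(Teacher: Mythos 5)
Your proposal follows essentially the same route as the paper's proof, which is much terser: the paper simply notes that $\veriCardExpr(\vec{e_1},\vec{e_2})$ yields a vector of QPSRs, that each is a symbolic cardinality-preserving bijection by the hypothesis, and then concludes ``by the definition of $\PN$'' --- exactly the three steps you unfold, with your second and third steps spelling out what that last appeal to $\PN$ is supposed to mean (the selector variables routing each model into a single summand's sub-QPSR). The obstacle you flag at the end is genuine and is \emph{not} resolved by the paper either: under the value-level definition of a cardinality-preserving relation in \cref{app:sym-bijs} (which compares $\tblcount{\qa(I)}{t}$ and $\tblcount{\qb(I)}{u}$ over the \emph{whole} output tables), the relation represented by the $\PN$-built formula is just the plain union of the per-summand bijections $b_j$, and when two distinct summands can emit the same tuple value this union need not satisfy the definition, since a pair $(t,u)\in b_j$ only guarantees equal counts within the $j$-th summands, not across the concatenated bags. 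Making the argument airtight requires either an occurrence-level (provenance-tagged) reading of the bijection, as you suggest, or a strengthening of the definitions in \cref{app:sym-bijs}; the paper's proof silently assumes this, so your proposal is, if anything, more honest about where the real work lies while taking the same approach.
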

\begin{proof}
  The result of
  $\veriCardExpr(\overrightarrow{e_1}, \overrightarrow{e_2})$ is a
  vector of QPSRs, by the definition of $\veriCardUnion$ and the
  assumption that $\veriCardUnion$ returns some QPSR.
  Each of the QPSRs is a symbolic cardinality-preserving bijection, by
  the inductive hypothesis.
  Thus, the QPSR $\varphi$ returned by $\veriCardExpr$ is a symbolic
  cardinality-preserving bijection, by the definition of $\PN$.
\end{proof}

\subsubsection{Symbolic bijections between arbitrary queries}
\label{app:arby-bijs}

The soundness of $\veriCard$ follows from the correctness properties
satisfied for each type of query, given above.

\begin{lemma}
  \label{lemma:vericard-sound}
  If $\veriCard$, given queries $\qa$ and $\qb$, returns some QPSR
  $\varphi$, then $\varphi$ is a symbolic representation of a
  cardinality-preserving bijection between $\qa$ and $\qb$.
\end{lemma}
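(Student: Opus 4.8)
The plan is to prove \cref{lemma:vericard-sound} by well-founded induction on the input query $\qa$ (the choice of $\qa$ rather than $\qb$ being arbitrary, since $\veriCard$ treats its two arguments symmetrically in its type dispatch), using the four per-form soundness lemmas already established in \cref{app:table-bijs}--\cref{app:union-bijs} as the case analysis. First I would fix an arbitrary pair $\qa$, $\qb$, assume $\veriCard(\qa, \qb)$ returns a QPSR $\varphi$ (read as the collapsed SMT formula $\vec{\tabcol_1} = C_1 \land \vec{\tabcol_2} = C_2 \land c \land a$, per the correspondence in \cref{app:sym-bijs}), and split on the common top-level form of $\qa$ and $\qb$: by the definition of $\veriCard$ in \cref{alg:check1}, a QPSR is returned only when the two queries share a form, so the type-mismatch branch returns $\nul$ and makes the implication vacuously true.

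For the base case, $\qa = \TABLE{n_1}$ and $\qb = \TABLE{n_2}$: $\veriCard$ dispatches to $\veriCardTable$, and \cref{lemma:table} immediately gives that $\varphi$ is a symbolic cardinality-preserving bijection between $\qa$ and $\qb$. For the composite cases — $\qa$ an SPJ, aggregate, or union query — $\veriCard$ dispatches to $\veriCardSPJ$, $\veriCardAgg$, or $\veriCardUnion$ respectively, each of which invokes $\veriCard$ (directly on the input sub-query in the aggregate case, or through $\veriCardExpr$ over all candidate pairings in the SPJ and union cases) only on proper sub-queries of $\qa$ and $\qb$. Those recursive calls fall under the inductive hypothesis, which is exactly the antecedent of \cref{lemma:spj}, \cref{lemma:agg}, and \cref{lemma:union} — namely that whenever $\veriCard$ returns a QPSR on such a sub-query pair, that QPSR is a symbolic cardinality-preserving bijection. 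Applying the matching lemma in each case then yields the claim for $\qa$, completing the induction. Soundness of the full-equivalence check then follows as a corollary by combining this lemma with \cref{lemma:sym-witness} (equivalently \cref{lemma:fulleqsound}), which is precisely the chain used in \cref{lemma:sound1} and \cref{lemma:sound2}.

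The one point that needs genuine care — and where I expect the main obstacle to be — is justifying that the induction is well-founded, because $\veriCard$ does not recurse on raw AST subterms in one fixed way: the normalization rules of \autoref{sec:approach::unf} may rewrite the queries first, $\veriCardExpr$ explores every bijective pairing of the input vectors, and $\veriCardAgg$ passes its input through a UNF conversion that can itself merge and restructure nodes. I would therefore not induct on syntactic size directly but on the termination measure of $\veriCard$ itself — the recursion terminates because every recursive call is on a sub-query of an already-normalized form and normalization terminates by the finiteness argument of \autoref{sec:approach::unf} — or, equivalently, perform the induction on the (finite) call tree of a terminating run of $\veriCard$. Once the well-founded ordering is pinned down, each inductive step is just the routine application of the corresponding per-form lemma to the sub-results supplied by $\veriCardExpr$ or by the direct recursive call, and no further calculation is required.
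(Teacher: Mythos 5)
Your proposal is correct and follows essentially the same route as the paper's own proof: induction on $\qa$ with a case split on its form, where each case is discharged by the corresponding per-form lemma (\cref{lemma:table}, \cref{lemma:spj}, \cref{lemma:agg}, \cref{lemma:union}) and the inductive hypothesis supplies exactly the antecedent those lemmas require for the recursive calls on sub-queries. Your extra paragraph on well-foundedness is a reasonable refinement the paper leaves implicit; note that normalization is applied by $\mathsf{normalize}$ before $\veriCard$ is ever invoked, so $\veriCard$ itself recurses only on sub-queries of already-normalized inputs and ordinary structural induction suffices.
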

\begin{proof}
  Proof is by induction on $\qa$ (the choice of $\qa$ versus $\qb$ is
  basically arbitrary).
  Each case on the form of $\qa$ is proved by applying one the above
  lemmas, to the inductive hypothesis in cases when $\qa$ is a
  composite of sub-queries.
  In particular:
  \begin{itemize}
  \item 
    If $\qa$ is a \emph{table name}, then the proof follows
    immediately from \cref{lemma:table}.
  \item 
    If $\qa$ is an \emph{SPJ} query, the proof follows immediately from
    \cref{lemma:spj}.
  \item 
    If $\qa$ is a \emph{aggregate} query of the form
    $\SAggregate{\qa'}{\overrightarrow{g_1}}{\overrightarrow{a_1}}$,
    then $\qb$ is a union query of the form
    $\SAggregate{\qb'}{\overrightarrow{g_2}}{\overrightarrow{a_2}}$,
    by the assumption that $\veriCard$ returns some QPSR.
    The proof follows from applying \cref{lemma:agg} to the inductive
    hypothesis on queries $\qa'$ and $\qb'$.
  \item 
    If $\qa$ is a \emph{union} query of the form
    $\SUnion{\overrightarrow{\qa'}}$, then $\qb$ is a union query of
    the form $\SUnion{\overrightarrow{\qb'}}$, by the assumption that
    $\veriCard$ returns some QPSR.
    The proof follows from applying \cref{lemma:union} to the
    inductive hypothesis on all sub-queries in $\qa'$ and $\qb'$.
  \end{itemize}
\end{proof}

\subsection{Soundness of \sys}
\label{app:sound-alg}

With the soundness of $\veriCard$ established, we are prepared to
state and prove the soundness of $\sys$.

\begin{algorithm}[t]
  \SetKwInOut{Input}{Input}
  \SetKwInOut{Output}{Output}
  \SetKwProg{myproc}{Procedure}{}{}
  \Input{A pair of queries \qa and \qb}
  \Output{An decision if two queries are fully equivalent}
  \myproc{$\sys(\qa,\qb)$}{
       $\qa' \gets \mathsf{normalize}(\qa)$ \;
       $\qb' \gets \mathsf{normalize}(\qb)$ \;
       $\varphi \gets \veriCard(\qa', \qb')$ \;
       \uIf{$\varphi \not= \nul$}{
           \Return{ $\mathsf{isValid}(\varphi \implies \vec{\tabcol_1} = \vec{\tabcol_2})$ }
       }
       \lElse { \Return{ \fal } }
  } \caption{$\sys$: an equivalence verifier.}
\end{algorithm}

Given a pair of queries \qa and \qb, \sys uses the procedure $\mathsf{normalize}$ to converts each queries to algebraic expressions, and uses a set of semantic preserving rewrite
rules to normalize them.
These semantic preserving rules are defined in \cref{sec:approach}.
Then \sys uses the procedure $\veriCard$ to constructs the QPSR of two normalized queries \qa' and \qb'.
If $\veriCard$ returns the QPSR (i.e., the QPSR is not \nul), then \sys returns if the formula is valid.
If $\veriCard$ doesn't return the QPSR (i.e., the QPSR is \nul), then \sys returns $\fal$.

The soundness of \sys is formalized and proved in the following
theorem:
\begin{theorem}
  \label{theorem:sound}
  If \sys, given queries $\qa$ and $\qb$, returns $\mathsf{true}$,
  then $\qa \equiv \qb$.
\end{theorem}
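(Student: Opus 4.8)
The plan is to unfold the definition of $\sys$ and chain together the soundness properties already established for its components. First I would observe that if $\sys(\qa,\qb)$ returns $\mathsf{true}$, then by inspection of the procedure it must be the case that $\varphi \gets \veriCard(\qa',\qb')$ is not $\nul$ and that $\mathsf{isValid}(\varphi \implies \vec{\tabcol_1} = \vec{\tabcol_2})$ holds, where $\qa' = \mathsf{normalize}(\qa)$ and $\qb' = \mathsf{normalize}(\qb)$. So the proof reduces to two obligations: (i) the normalized queries are fully equivalent, and (ii) normalization preserves full equivalence.

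For obligation (i), I would apply \cref{lemma:vericard-sound} to conclude that $\varphi$ is a symbolic representation of a cardinality-preserving bijection between $\qa'$ and $\qb'$. Since $\sys$ only returns $\mathsf{true}$ when $\varphi$ logically entails $\vec{\tabcol_1} = \vec{\tabcol_2}$ — which is precisely the hypothesis of \cref{lemma:sym-witness} — that lemma (itself a consequence of \cref{lemma:bij-witness}) yields $\qa' \equiv \qb'$, i.e., the two normalized queries return identical bags on every valid input.

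For obligation (ii), I would close the gap between the normalized queries and the originals. Because $\mathsf{normalize}$ applies only the rewrite rules of \cref{sec:approach} (conversion to UNF, predicate push-down, empty-table elimination, aggregate merge, and the integrity-constraint rules), and each such rule is semantics-preserving with respect to the evaluation relation of \cref{app:ar-semantics}, a finite composition of them maps a query to a bag-equivalent query; hence $\qa \equiv \qa'$ and $\qb \equiv \qb'$. Transitivity of bag-semantic equivalence then gives $\qa \equiv \qb$, completing the proof.

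The main obstacle is obligation (ii): the symbolic machinery is fully covered by \cref{lemma:vericard-sound} and \cref{lemma:sym-witness}, but the paper only sketches the rewrite rules, so a rigorous argument needs an auxiliary lemma asserting that each individual rule — and therefore their composition — maps a query to a bag-equivalent one. The delicate cases are the SPJ-merge rule (it must respect the Cartesian-product/selection/projection semantics of $\textsc{E-SPJ}$, including the element-wise composition of projection vectors), predicate push-down in three-valued logic (so that $\nul$ is propagated soundly), and the integrity-constraint rules (which are equivalence-preserving only under the assumed constraints). This is a routine rule-by-rule verification against the semantics in \cref{app:ar-semantics}, and I would state it as a standalone lemma rather than grinding through each case inline.
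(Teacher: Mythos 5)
Your proposal is correct and follows essentially the same route as the paper's proof: unfold $\sys$, invoke \cref{lemma:vericard-sound} to obtain a symbolic cardinality-preserving bijection, conclude $\qa \equiv \qb$ via \cref{lemma:sym-witness}, and treat normalization as semantics-preserving (the paper likewise disposes of this step by simply citing \cref{sec:approach} rather than verifying the rewrite rules case by case). Your observation that obligation (ii) really deserves a standalone rule-by-rule lemma is a fair comment on the paper's terseness, not a difference in approach.
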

\begin{proof}
  $\qa'$ is equivalent to $\qa$ and $\qb'$ is equivalent to $\qb$
  because the normalization rules that it applies preserve semantics (\cref{sec:approach}).
  QPSR is a symbolic cardinality-preserving bijection by the
  assumption that $\sys$ returns $\mathsf{true}$ and thus QPSR is not
  $\mathsf{null}$, and by \cref{lemma:vericard-sound}.
  The formula
  \[ \varphi \implies %
     \vec{\tabcol_1} = \vec{\tabcol_2} \]
  is valid by the assumption that $\sys$ returns $\true$.
  Thus $\varphi$ entails
  \[ \vec{\tabcol_1} = \vec{\tabcol_2} \]
  by the semantics of SMT.
  Thus, $\qa \equiv \qb$ by \cref{lemma:sym-witness}.
\end{proof}

\end{document}